\documentclass[12pt]{elsarticle}

\expandafter\let\csname equation*\endcsname\relax
\expandafter\let\csname endequation*\endcsname\relax

\makeatletter
\def\ps@pprintTitle{%
   \let\@oddhead\@empty
   \let\@evenhead\@empty
   \def\@oddfoot{\reset@font\hfil\thepage\hfil}
   \let\@evenfoot\@oddfoot
}
\makeatother

\usepackage{amsmath}
\usepackage{ulem}
\usepackage{xspace}

\usepackage{xparse}
\usepackage{diagbox}
\usepackage{amsthm}
\usepackage{amssymb}
\usepackage{commath}              
\usepackage{graphicx}
\usepackage[headheight=15pt]{geometry}
\usepackage{setspace}
\usepackage{color}
\usepackage{xparse}
\usepackage{algorithm,algpseudocode}
\usepackage[caption=false]{subfig}
\usepackage{endnotes}
\usepackage[page,titletoc]{appendix}	
\usepackage[inline]{enumitem}
\usepackage{dsfont}
\usepackage{cases}
\usepackage{mathtools}
\usepackage{url}
\usepackage{tikz}
\usetikzlibrary{matrix,decorations.pathreplacing,shapes,arrows,positioning}
\usetikzlibrary{calc,trees,positioning,arrows,chains,shapes.geometric,%
    decorations.pathreplacing,decorations.pathmorphing,shapes,%
    matrix,shapes.symbols}

\def\signmatCn{\Gamma}
\def\evCn{u_{\Gamma}}

\newcommand\iInm{i \in [m]}

\newcommand\Riitilde{\tilde{R}_{ii}}

\newcommand\Rii{R_{ii}}
\newcommand\Rjj{R_{jj}}

\newcommand{\dotprod}[2]{\left\langle #1,#2 \right\rangle}

\definecolor{darkpastelgreen}{rgb}{0.01, 0.75, 0.24}

\DeclarePairedDelimiter\floor{\lfloor}{\rfloor}

\DeclareMathOperator*{\argmax}{arg\,max} 
\DeclareMathOperator*{\argmin}{arg\,min} 

\DeclareMathOperator{\Span}{span}
\DeclareMathOperator{\sign}{sign}

\newcommand*{\argmaxl}{\argmax\limits}

\let\conjugatet\overline

\newlist{inlinelist}{enumerate*}{1}
\setlist*[inlinelist,1]{%
	label=(\roman*),
}

\newtheorem{theorem}{Theorem}[section]
\newtheorem{lemma}[theorem]{Lemma}
\newtheorem{corollary}[theorem]{Corollary}

\newcommand\planeRi{\mathcal{T}_{R_{i}}}
\newcommand\planeRj{\mathcal{T}_{R_{j}}}

\newcommand\hatPi{\hat{P}_{R_i}}
\newcommand\hatPj{\hat{P}_{R_j}}
\newcommand\hatPk{\hat{P}_{R_k}}

\newcommand\hatPsi{\hat{P}_{g_{n}^s R_i}}

\newcommand\RjgsRj{R_j^T g_{n}^{s} R_j}
\newcommand\RigsRi{R_i^T g_{n}^{s} R_i}

\newcommand\RigsiRi{R_i^T g_{n}^{s_{i}} R_i}
\newcommand\RigsijRj{R_i^T g_{n}^{s_{ij}} R_j}

\newcommand\RigsRj{R_i^T g_{n}^{s} R_j}

\newcommand\Rij{R_{ij}}

\newcommand\RijsThetaHat{R_{ij}\left( \theta_k, s \right)}

\newcommand\planeRsi{\mathcal{T}_{g_{n}^s R_{i}}}

\newcommand\planeRsj{\mathcal{T}_{g_{n}^s R_{j}}}

\newcommand\vitrans{v_{i}^{T}}
\newcommand\vjtrans{v_{j}^{T}}

\newcommand\Riitrans{ R_{ii}^T}
\newcommand\Rjjtrans{ R_{jj}^T}
\newcommand\RigRi{R_i^T g_{n} R_i}

\newcommand\RjgRj{R_j^T g_{n} R_j}

\newcommand\RignRi{R_i^T g_{n}^{n-1} R_i}

\newcommand\RjgsjRj{R_j^T g_{n}^{s_{j}} R_j}

\newcommand\vivj{v_{i} v_{j}^{T}}
\newcommand\vjvk{v_{j} v_{k}^{T}}
\newcommand\vivk{v_{i} v_{k}^{T}}
\newcommand\vivi{v_{i} v_{i}^{T}}

\newcommand\vij{v_{ij}}
\newcommand\vjk{v_{jk}}
\newcommand\vik{v_{ik}}
\newcommand\vii{v_{ii}}

\newcommand\vjitrans{v_{ji}^{T}}
\newcommand\vijtrans{v_{ij}^{T}}

\newcommand\viij{v_{ii}^j}
\newcommand\vjji{v_{jj}^i}
\newcommand\viijast{v_{ii}^{j\ast}}

\newcommand\vkl{v_{kl}}

\newcommand\blueJ{\textcolor{blue}{\boldsymbol{J}}}

\newcommand{\angstrom}{\textup{\AA}}

\newcommand\ijk{i<j<k \in [m]}
\newcommand\ijkInm{i<j<k \in [m]}
\newcommand\ijInm{i<j \in [m]}

\tikzset{
>=stealth',
  punktchain/.style={
    rectangle,
    rounded corners,
    draw=black, very thick,
    text width=15em,
    minimum height=3em,
    text centered,
    on chain},
  line/.style={draw, thick, <-},
  element/.style={
    tape,
    top color=white,
    bottom color=blue!50!black!60!,
    minimum width=8em,
    draw=blue!40!black!90, very thick,
    text width=10em,
    minimum height=3.5em,
    text centered,
    on chain},
  every join/.style={->, thick,shorten >=1pt},
  decoration={brace},
  tuborg/.style={decorate},
  tubnode/.style={midway, right=2pt},
}

\begin{document}

\begin{frontmatter}
\title{A common lines approach for ab-initio modeling of cyclically-symmetric molecules}

\author{Gabi Pragier and Yoel Shkolnisky}

\address{Department of Applied Mathematics, School of Mathematical Sciences, Tel-Aviv University, Israel}
\ead{gabipragier@gmail.com, yoelsh@tauex.tau.ac.il}

\begin{abstract}
One of the challenges in single particle reconstruction in cryo-electron microscopy is to find a three-dimensional model of a molecule
using its two-dimensional noisy projection-images. In this paper, we propose a robust ``angular reconstitution'' algorithm for molecules with $n$-fold cyclic symmetry, that estimates the orientation parameters of the projections-images. Our suggested method utilizes self common lines which
induce identical lines within the Fourier transform of each projection-image. We show that the location of self common lines admits quite a few favorable geometrical constraints, thus allowing to detect them even in a noisy setting. In addition, for molecules with higher order rotational symmetry, our proposed method exploits the fact that there exist numerous common lines between any two Fourier transformed projection-images of such molecules, thus allowing to determine their relative orientation even under high levels of noise.  The efficacy of our proposed method is demonstrated using numerical experiments conducted on simulated and experimental data.

\end{abstract}

\end{frontmatter}

\section{Introduction}\label{sec:introduction}

Cryo-electron microscopy is a method for determining the three-dimensional structure of a molecule from its two-dimensional projection-images~\cite{Frank2006}. The method consists of generating projection-images of copies of the investigated molecule, where each copy assumes a random unknown orientation before being imaged. Formally, if we denote by $\psi \colon \mathbb{R}^3 \to \mathbb{R}$ the electrostatic potential of the molecule, and consider a rotation matrix
\begin{equation}\label{eq:R_i}
SO\left(3\right) \ni R_i =
\left(
\begin{array}{cccc}
\vrule  &  \vrule   & \vrule\\
R_{i}^{(1)} &   R_{i}^{(2)} &  R_{i}^{(3)} \\
\vrule  &  \vrule   & \vrule
\end{array}
\right),
\end{equation}
then the projection-image $P_{R_{i}}$ is given by the line integrals of $\psi$ along the beaming-direction $R_i^{(3)}$. That is,
\begin{equation}\label{eq:psi}
P_{R_i} \left( x,y \right) = \int_{-\infty}^{\infty} \psi \left(R_i r\right)\,\mathrm{d}z = \int_{-\infty}^{\infty} \psi (x R_{i}^{(1)} + y R_{i}^{(2)} + z R_{i}^{(3)})\,\mathrm{d}z, \quad r=\left (x,y,z \right )^{T}.
\end{equation}
In this work, we focus on molecules that have an $n$-fold, $n \geq 2$, rotational symmetry about some unknown axis. Such molecules are referred to as molecules with $C_{n}$ symmetry. {We assume that} the cyclic symmetry order $n$ of the underlying molecule is known from prior knowledge, or may be inferred from rotational invariants computed by spherical harmonics (see e.g.,~\cite{Kam}). Intuitively, molecules with $C_{n}$ symmetry ``look exactly the same'' when rotated by $2\pi s/n, \ s \in [n-1]$~(we subsequently denote by~$[m]$ the set $\left\{ 1,\ldots,m \right\}$), radians about their axis of symmetry. Mathematically, it means that the electrostatic potential function $\psi \colon \mathbb{R}^3 \to \mathbb{R}$ of any such molecule satisfies
\begin{equation}\label{eq:psi_symm}
\psi \left(r\right)
=
\psi \left(g_{n} r\right)
= \ldots =
\psi \left(g_{n}^{n-1} r\right), \quad \forall r \in \mathbb{R}^3,
\end{equation}
where $g_{n} \in SO\left(3\right)$ represents a rotation of $2\pi/n$ radians about the unknown axis of symmetry, and $n \in \mathbb{N}$ is the largest for which~\eqref{eq:psi_symm} holds. Since rotating the three-dimensional coordinate system has no effect on the three-dimensional structure of the molecule, it may be assumed without loss of generality that the axis of symmetry coincides with the $z$-axis. Thus, the matrix $g_{n}$ which satisfies~\eqref{eq:psi_symm} may be written as
\begin{equation}\label{def:g}
g_{n} =
\left(
\begin{array}{rrr}
\cos(2\pi/n) & -\sin(2\pi/n) & 0\\
\sin(2\pi/n) & \cos(2\pi/n) & 0\\
0 &   0 & 1\\
\end{array}
\right),
\end{equation}
and any three-dimensional coordinate system for the molecule must be set such that it keeps the axis of symmetry aligned with the $z$-axis (with possibly flipping its orientation). That is, the degrees of freedom in setting the three-dimensional coordinate system of the molecule consist of 
in-plane rotations about the $z$-axis, and an in-plane rotation of $\pi$ radians about either the $x$-axis or the $y$-axis.

Finding the three-dimensional structure of the molecule amounts to recovering the unknown electrostatic potential function $\psi$ given only the projection-images $P_{R_{i}}$. Typically, this is done by first finding the rotation matrices $R_{i}$, followed by a standard tomographic inversion algorithm e.g.,~\cite{Herman09,ReconMethods01}.

A fundamental limitation of cryo-electron microscopy is the handedness ambiguity~\cite{sync}, whereby the best one may expect is to recover either the set $\left\{R_{i}\right\}_{i=1}^{m}$ or the set $\left\{ R_{i} J \right\}_{i=1}^{m}$ where $J = \operatorname{diag}(-1,-1,1)$, not being able to distinguish between the two.
Indeed, given any molecule whose electrostatic potential function is~$\psi$, consider the molecule whose electrostatic potential function $\tilde{\psi}$ is given by $\tilde{\psi}(r)= \psi (-r), \ r \in \mathbb{R}^3$. For any $R_{i}$, the projection-image $\tilde{P}_{ R_i J}$ of the molecule $\tilde{\psi}$ is given according to~\eqref{eq:psi} by
\begin{equation}\label{eq:handedness_2}
\tilde{P}_{ R_i J} \left( x,y \right)
=
\int_{-\infty}^{\infty} \tilde{\psi} \left( R_i J r \right)\,\mathrm{d}z
=
\int_{-\infty}^{\infty} \psi \left( - R_i J r \right)\,\mathrm{d}z,
\end{equation}
where $r = \left( x, y, z \right)^{T} \in \mathbb{R}^3$. Noting that~$-J r = \left( x,y,-z \right)$, and by changing the variable $z$ to $z' = -z$ we have
\begin{equation}\label{eq:handedness_3}
\int_{-\infty}^{\infty} \psi \left( - R_i J r \right)\,\mathrm{d}z
=
\int_{-\infty}^{\infty} \psi ( R_i \left( x,y,z' \right)^{T} )\,\mathrm{d}z'
=
P_{R_{i}} \left( x,y \right).
\end{equation}
As such, both $\left\{ R_{i} \right\}_{i=1}^{m}$ and $\left\{ R_{i} J \right\}_{i=1}^{m}$ are consistent with the same set of projection-images $\left\{ P_{R_{i}} \right\}_{i=1}^{m}$, yet the reconstructed model using the latter set is biologically infeasible, and the true set may therefore only be determined by visual examination of the reconstructed model, possibly exploiting other known structural information.
Furthermore, an important property of molecules with $C_{n}$ symmetry is that any $n$ projection-images $P_{R_{i}}, P_{g_{n} R_{i}},\ldots,P_{g_{n}^{n-1} R_{i}}, \ \iInm$, are identical. Indeed, from~\eqref{eq:psi_symm}
\begin{equation}\label{eq:psi_g_equal}
\psi \left(R_{i} r\right)
=
\psi \left(g_{n} R_{i} r\right)
=
\ldots
=
\psi \left(g_{n}^{n-1} R_{i} r\right) \quad \forall r \in \mathbb{R}^3.
\end{equation}
Thus, by letting $r=\left (x,y,z \right )^{T}$, and integrating over $z$, it follows from~\eqref{eq:psi} that for any $s \in [n-1]$,
\begin{equation}\label{eq:identical_ims}
P_{g_{n}^s R_{i}}\left(x,y\right)
=
\int_{-\infty}^{\infty} \psi \left(g_{n}^s R_{i} r\right)\,\mathrm{d}z
=
\int_{-\infty}^{\infty} \psi \left(R_{i} r\right)\,\mathrm{d}z
=
P_{R_{i}}\left(x,y\right).
\end{equation}
Equations~\eqref{eq:handedness_2},~\eqref{eq:handedness_3}, and~\eqref{eq:identical_ims} show that for each $R_{i}$ in the forward imaging model~\eqref{eq:psi} there exists an equivalence class $\mathcal{R}_{i} =\left \{O g_{n}^{s_{i}} R_{i} J^{\delta}\right \}$, with $s_{i}\in[n]$, $O\in SO(3)$, and $\delta\in\{ 0,1 \}$, such that $P_{R_{i}}= P_{R_{i}^{*}}$ for any $R_{i}^{*}\in\mathcal{R}_{i}$. The matrices $g_{n}^{s_{i}}$ are due to the ambiguity stated in~\eqref{eq:psi_g_equal} and so $s_{i}$ may be chosen independently for each image $i=1,\ldots,m$, while $\delta$ and $O \in SO(3)$ are common to all images, with $O$ being some in-plane rotation matrix about the $z$-axis (the axis of symmetry). In other words, there are no ``true'' rotation matrics $R_{i}$ that need to be recovered, as may be implied from~\eqref{eq:psi}, but rather we need to recover for each $i=1,\ldots,m$ any member of the equivalence class $\mathcal{R}_{i}$.
In summary, in light of the above, given projection-images $\left\{ P_{R_{i}} \right\}_{i=1}^m$ of a molecule with $C_{n}$ symmetry, the goal is to recover $m$ rotation matrices $\left \{O g_{n}^{s_{i}} R_{i} J^{\delta}\right \}_{i=1}^{m}$, where $s_{i}\in[n]$, $O\in SO(3)$, and $\delta\in\{ 0,1 \}$, such that~\eqref{eq:psi} is satisfied for all $\iInm$ for some $\psi \colon \mathbb{R}^3 \to \mathbb{R}$.

The rest of this paper is organized as follows. In Sections~\ref{sec:common_lines} and~\ref{sec:self_common_lines} we review the projection slice theorem~\cite{Natr2001a} and further describe the properties of common lines and self common lines which are induced by this theorem. In Section~\ref{sec:previous_work} we review some previous work. In Section~\ref{sec:outline_cn} we present an outline of our proposed method. Next, in Section~\ref{sec:third_row_cn}, we describe how to estimate the third row of each rotation matrix $R_{i}$. A procedure that assures that all these estimates correspond to a single hand (see Section~\ref{sec:introduction} which points out the inherent handedness ambiguity in cryo-electron microscopy) is presented in Section~\ref{sec:handedness_sync_cn}. In Section~\ref{sec:in_plane_cn} we then describe a method to determine the remaining first two rows of each $R_{i}$. Then, in Section~\ref{sec:Numerical_experiments_cn} we report some numerical experiments that we conducted on simulated and experimental datasets that show the efficacy of our proposed method. Finally, in Section~\ref{sec:summary} we present some conclusions and possible extensions of this work.

As the cases of $C_{3}$ and $C_{4}$ symmetry exhibit special geometry which is advantageous from the computational point-of-view, we describe in Appendix~\ref{sec:third_row_c2_c3_c4} an algorithm for estimating the third row of each rotation matrix $R_{i}$ for molecules with such symmetries. This algorithm may replace the algorithm of Section~\ref{sec:third_row_cn} for such symmetries, as it is much faster and is empirically observed to be more robust to noise. Unfortunately, generalizations of the geometry derived in Appendix~\ref{sec:third_row_c2_c3_c4} for $C_n$ with $n>4$ are of little practical use.

\section{Common lines}\label{sec:common_lines}
The projection slice theorem~\cite{Natr2001a} is a key theorem underpinning many of the methods for recovering the rotation matrices of a given set of projection-images. It states that the two-dimensional Fourier transform of any projection-image $P_{R_{i}}$ is equal to the restriction of the three-dimensional Fourier transform of the electrostatic potential function $\psi$ to the central plane $\planeRi = \Span \{ R_{i}^{(1)}, R_{i}^{(2)} \} \subset \mathbb{R}^3$, whose normal coincides with $R_{i}^{(3)}$ (the third column of $R_{i}$ of~\eqref{eq:R_i}). Mathematically, if we denote by $\hat{\psi}$ the three-dimensional Fourier transform of $\psi$, and denote by $\hatPi$ the two-dimensional Fourier transform of the projection-image $P_{R_i}$, then the projection slice theorem states that
\begin{equation}\label{eq:PST}
\hatPi\left(\omega_{x},\omega_{y}\right) = \hat{\psi}\left(R_{i}\omega\right), \quad \forall \omega = \left(\omega_{x},\omega_{y},0\right)^T \in \mathbb{R}^3.
\end{equation}
As such, since any two central planes $\planeRi$ and $\planeRj$ which are not parallel intersect along a common axis, it follows that any two Fourier-transformed images $\hatPi$ and $\hatPj$ have a pair of lines (one line in each image) on which their values agree. Such pairs of lines are called ``common lines''. Hereafter we refer to $\hatPi$ and $\hatPj$ simply as ``images''. The following lemma shows that any two images of a molecule with $C_n$ symmetry have $n$ pairs of common lines (see Figure~\ref{fig:four_common_lines} for an illustration of the case $n=4$).

\begin{figure}
	\centering
	\begin{tikzpicture}
	\filldraw[color=red!60, fill=red!5, line width=1mm](6,2) circle [radius=1];
	\node at (4.5,2) {$\hat{P}_{R_j}$};
	\draw[color=violet,line width=1mm] (6,2) -- ++(270:1);
	\draw[color=black,line width=1mm,dash pattern=on 2pt off 3pt on 4pt off 4pt] (6,2) -- ++(70:1);
	\draw[color=blue,line width=1mm,loosely dashed] (6,2) -- ++(160:1);
	\draw[color=brown,line width=1mm,densely dotted] (6,2) -- ++(180:1);
	
	\filldraw[color=green!60, fill=green!5, line width=1mm](3,4)  circle [radius=1];
	\node at (1.5,4) {$\hat{P}_{R_i}$};
	\draw[color=violet,line width=1mm] (3,4) -- ++(140:1);
	
	\filldraw[color=green!60, fill=green!5, line width=1mm](9,4)  circle [radius=1];
	\node at (7.5,4) {$\hat{P}_{g_{4}R_i}$};
	\draw[color=black,line width=1mm,dash pattern=on 2pt off 3pt on 4pt off 4pt] (9,4) -- ++(0:1);
	
	\filldraw[color=green!60, fill=green!5, line width=1mm](3,0)  circle [radius=1];
	\node at (1.5,0) {$\hat{P}_{g_{4}^2R_i}$};
	\draw[color=blue,line width=1mm,loosely dashed] (3,0) -- ++(30:1);
	
	\filldraw[color=green!60, fill=green!5, line width=1mm](9,0)  circle [radius=1];
	\node at (7.5,0) {$\hat{P}_{g_{4}^3R_i}$};
	\draw[color=brown,line width=1mm,densely dotted] (9,0) -- ++(300:1);
	
	\draw[color=black,line width=1mm,dash pattern=on 2pt off 3pt on 4pt off 4pt] (3,4) -- ++(0:1);
	\draw[color=blue,line width=1mm,loosely dashed] (3,4) -- ++(30:1);
	\draw[color=brown,line width=1mm,densely dotted] (3,4) -- ++(300:1);
	\end{tikzpicture}
	\caption{An illustration of Lemma~\ref{lemma:n_common_lines} for the case of $C_{4}$. The red circle corresponds to some image $\hatPj$. The four green circles correspond to some other four identical images $\hat{P}_{R_i}$, $\hat{P}_{g_{4} R_i}$, $\hat{P}_{g_{4}^2 R_i}$, $\hat{P}_{g_{4}^3 R_i}$. The lines in $\hatPi$ and $\hatPj$ represent their four common lines, each of which is induced by a common line between $\hatPj$ and one of the images $\hat{P}_{R_i}$, $\hat{P}_{g_{4} R_i}$, $\hat{P}_{g_{4}^2 R_i}$, $\hat{P}_{g_{4}^3 R_i}$.
	}
	\label{fig:four_common_lines}
\end{figure}
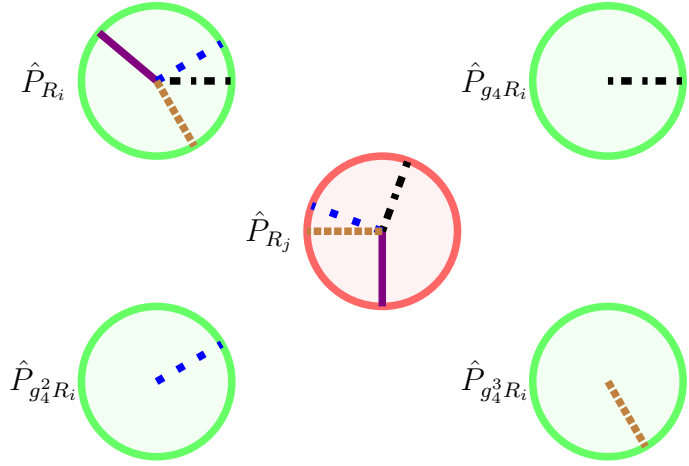

\begin{lemma}\label{lemma:n_common_lines}
	For any two images $\hatPi$ and $\hatPj$, such that $\abs{\dotprod{R_{i}^{(3)}}{R_{j}^{(3)}}} \ne 1$, there exist angles $\alpha_{ij}^{(s)}, \alpha_{ji}^{(s)} \in [0,2 \pi)$, $s = 0,\ldots,n-1$, such that for any $\xi \in \mathbb{R}$,
\begin{equation}\label{eq:clm_four_pairs_cls}
\hatPi\left(\xi\cos\alpha_{ij}^{(s)},\xi\sin\alpha_{ij}^{(s)}\right)
=
\hatPj\left(\xi\cos\alpha_{ji}^{(s)},\xi\sin\alpha_{ji}^{(s)}\right).
\end{equation}
\end{lemma}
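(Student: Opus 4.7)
The plan is to produce the $n$ promised pairs of common lines by combining the $C_n$-symmetry identity \eqref{eq:identical_ims} with the projection slice theorem \eqref{eq:PST}. The key idea is that each of the $n$ images $\hat{P}_{R_i},\hat{P}_{g_n R_i},\ldots,\hat{P}_{g_n^{n-1}R_i}$, while being a single common 2D function (by \eqref{eq:identical_ims}), arises as a restriction of $\hat{\psi}$ to $n$ distinct central planes $\planeRi,\planegRi,\ldots,\planeRgni$ (by \eqref{eq:PST}). Each of these central planes generically shares a common line with $\planeRj$, and each such common line pulls back to a different line in the common 2D function, yielding $n$ pairs of common-line angles in $\hatPi$ and $\hatPj$.

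More concretely, I would proceed as follows. First, invoke \eqref{eq:identical_ims} and take the 2D Fourier transform to conclude that $\hat{P}_{g_n^s R_i}\left(\omega_x,\omega_y\right)=\hatPi\left(\omega_x,\omega_y\right)$ for every $s\in\{0,\ldots,n-1\}$ and every $(\omega_x,\omega_y)$. Then, for each such $s$, apply \eqref{eq:PST} to both $\hat{P}_{g_n^s R_i}$ and $\hatPj$ to identify them with the restrictions of $\hat{\psi}$ to $\planeRsi$ and $\planeRj$, respectively. Since $\planeRsi$ and $\planeRj$ are two-dimensional subspaces of $\mathbb{R}^3$, they either coincide or meet in a one-dimensional line through the origin; in both cases, $\hat{\psi}$ restricted to either plane agrees on this shared one-dimensional subspace.

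Next, I would translate this shared line into image coordinates. A unit direction vector $u^{(s)}\in\mathbb{R}^3$ in $\planeRsi\cap\planeRj$ can be expressed in the orthonormal basis $\{g_n^s R_i^{(1)},g_n^s R_i^{(2)}\}$ of $\planeRsi$ as $u^{(s)}=\cos\alpha_{ij}^{(s)}\,g_n^s R_i^{(1)}+\sin\alpha_{ij}^{(s)}\,g_n^s R_i^{(2)}$ for a suitable angle $\alpha_{ij}^{(s)}\in[0,2\pi)$, and analogously in the basis $\{R_j^{(1)},R_j^{(2)}\}$ of $\planeRj$ as $u^{(s)}=\cos\alpha_{ji}^{(s)}\,R_j^{(1)}+\sin\alpha_{ji}^{(s)}\,R_j^{(2)}$. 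Feeding $\xi u^{(s)}$ into \eqref{eq:PST} for both images yields $\hat{P}_{g_n^s R_i}\!\left(\xi\cos\alpha_{ij}^{(s)},\xi\sin\alpha_{ij}^{(s)}\right)=\hatPj\!\left(\xi\cos\alpha_{ji}^{(s)},\xi\sin\alpha_{ji}^{(s)}\right)$, and replacing the left-hand side by $\hatPi$ via the first step delivers \eqref{eq:clm_four_pairs_cls}.

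The only subtle point is the non-parallelism of the plane pairs. The hypothesis $\abs{\dotprod{R_i^{(3)}}{R_j^{(3)}}}\ne 1$ directly guarantees that $\planeRi$ and $\planeRj$ are not parallel, giving a genuine $s=0$ common line. For $s\ge 1$ one might have $g_n^s R_i^{(3)}=\pm R_j^{(3)}$ at isolated viewing-direction configurations, in which case $\planeRsi$ and $\planeRj$ coincide as sets; but then $\hat{\psi}|_{\planeRsi}$ and $\hat{\psi}|_{\planeRj}$ agree on the entire plane, so any choice of $\alpha_{ij}^{(s)}$ with a matching $\alpha_{ji}^{(s)}$ satisfies \eqref{eq:clm_four_pairs_cls} and the conclusion still holds (the uniqueness of the pair is simply lost). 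This degenerate case is, I expect, the only obstruction worth mentioning in the writeup, and it is benign.
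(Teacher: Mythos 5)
Your proof is correct and follows essentially the same route as the paper's: both obtain the $n$ pairs of common lines by intersecting one central plane with the $n$ symmetry-rotated copies of the other, expressing the intersection direction in each image's in-plane coordinates, and equating values via the projection slice theorem. The only immaterial difference is that you rotate $R_i$ and invoke $\hat{P}_{g_n^s R_i}=\hat{P}_{R_i}$ from~\eqref{eq:identical_ims}, whereas the paper rotates $R_j$ and uses the $g_n$-invariance of $\hat{\psi}$ directly; your remark on the degenerate case $g_n^s R_i^{(3)}=\pm R_j^{(3)}$ is a harmless extra that the paper's argument leaves implicit.
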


\begin{proof}
Denote by $q_{ij}^{(s)} \in \mathbb{R}^3, \  s=0,\ldots,n-1$, the unit vector in the direction of the common axis between $\planeRi$ and $\planeRsj$, namely
\begin{equation}\label{eq:q_ij_s}
q_{ij}^{(s)} = \frac{R_{i}^{(3)} \times g_{n}^{s} R_{j}^{(3)}}{\left\|R_{i}^{(3)} \times g_{n}^{s} R_{j}^{(3)}\right\|}, \quad s=0,\ldots,n-1.
\end{equation}
Also denote for every $s=0,\ldots,n-1$
\begin{equation}\label{eq:alpha_ij_q_ij_s}
\begin{aligned}
&\left(\cos \alpha_{ij}^{(s)},\sin\alpha_{ij}^{(s)},0\right)^T
=
R_{i}^T q_{ij}^{(s)},\\
&\left(\cos\alpha_{ji}^{(s)},\sin\alpha_{ji}^{(s)},0\right)^T
=
\left( g_{n}^{s} R_{j} \right)^T q_{ij}^{(s)}.
\end{aligned}
\end{equation}
That is, the angles $\alpha_{ij}^{(s)}$ and $\alpha_{ji}^{(s)}$ express $q_{ij}^{(s)}$ in the frame of reference of $\planeRi$ and in the frame of reference $\planeRsj$, respectively. Now, by ~\eqref{eq:psi_g_equal},~\eqref{eq:PST} and~\eqref{eq:alpha_ij_q_ij_s}, for $\xi\in\mathbb{R}$ and $s=0,\ldots,n-1$,
\begin{align}
\hatPi\left(\xi \cos \alpha_{ij}^{(s)},\xi \sin\alpha_{ij}^{(s)} \right) &= \hat{\psi}\left( \xi q_{ij}^{(s)} \right) = \hat{\psi}\left( \xi g_{n}^{-s} q_{ij}^{(s)} \right) \label{eq:PST_use_symm}\\
&= \hat{\psi}\left( \xi  R_{j} R_{j}^{T} g_{n}^{-s} q_{ij}^{(s)} \right)
= \hat{\psi}\left( \xi  R_{j} (g_{n}^{s} R_{j})^{T} q_{ij}^{(s)} \right) \notag \\
&= \hatPj\left(\xi\cos\alpha_{ji}^{(s)},\xi\sin\alpha_{ji}^{(s)}\right). \notag
\end{align}
\end{proof}

The relative orientation between any two central planes $\planeRi$ and $\planeRsj$ is expressed algebraically by $\RigsRj$ (henceforth regarded as the relative orientation between $\planeRi$ and $\planeRsj$). In addition, denoting by $\gamma_{ij}^{(s)}, s \in [n]$, the acute angle between the central planes $\planeRi$ and $\planeRsj$, we get that $\left( \alpha_{ij}^{(s)}, \gamma_{ij}^{(s)}, -\alpha_{ji}^{(s)} \right)$ is the ``$x$-convention" Euler angles parameterization~\cite{Euler_param_Tuma_book} of~$\RigsRj$. Specifically,
\begin{equation}\label{eq:euler_angles}
\RigsRj
=
R_z(\alpha_{ij}^{(s)}) R_x(\gamma_{ij}^{(s)}) R_z(-\alpha_{ji}^{(s)}),
\end{equation}
where
\begin{equation}\label{def:R_x_R_z}
R_x(\theta)
=
\begin{pmatrix}
1 & 0 & 0 \\
0 & \cos \theta & -\sin \theta \\
0 & \sin \theta & \hphantom{+}\cos \theta
\end{pmatrix} \quad \text{and} \quad
R_z(\theta)
=
\begin{pmatrix}
\cos \theta & -\sin \theta & 0 \\
\sin \theta & \hphantom{+}\cos \theta & 0 \\
0 & 0 & 1
\end{pmatrix}
\end{equation}
denote the matrices that rotate vectors by $\theta \in \mathbb{R}$ radians about the $x$-axis and $z$-axis, respectively. Finally, we mention in passing that any $\alpha_{ij}^{(s)}$ and $\alpha_{ji}^{(s)}$ may be recovered from the entries of $\RigsRj$ using (see~\cite{Hadani})
\begin{equation}\label{eq:hadani}
\alpha_{ij}^{(s)} = \arctan \left(-\frac{\left( \RigsRj \right)_{1,3}}{\left( \RigsRj \right)_{2,3}}\right), \quad
\alpha_{ji}^{(s)} = \arctan \left(-\frac{\left( \RigsRj \right)_{3,1}}{\left( \RigsRj \right)_{3,2}}\right).
\end{equation}
\section{Self common lines}\label{sec:self_common_lines}
A ``self common line'' is a common line between any two images $\hatPi$ and $\hatPsi$ with $s \in [n-1]$. Thus, similarly to~\eqref{eq:clm_four_pairs_cls}, for any $\iInm$ there exist angles $\alpha_{ii}^{(s)}, \alpha_{gi}^{(s)} \in [0,2 \pi), \  s = 1,\ldots,n-1$ (the subscripts will be clarified shortly), such that for any $\xi \in \mathbb{R}$
\begin{equation*}
\hatPi\left(\xi\cos\alpha_{ii}^{(s)},\xi\sin\alpha_{ii}^{(s)}\right)
=
\hatPsi\left(\xi\cos\alpha_{gi}^{(s)},\xi\sin\alpha_{gi}^{(s)}\right).
\end{equation*}
Since by~\eqref{eq:identical_ims}, any two such images $\hatPi$ and $\hatPsi$ are identical, it follows that
\begin{equation}\label{eq:clm_four_pairs_scls}
\hatPi\left(\xi\cos\alpha_{ii}^{(s)},\xi\sin\alpha_{ii}^{(s)}\right)
=
\hatPi\left(\xi\cos\alpha_{gi}^{(s)},\xi\sin\alpha_{gi}^{(s)}\right), \quad \iInm, \ s \in [n-1].
\end{equation}
That is, any image $\hatPi$ has $n-1$ pairs of identical lines (regarded henceforth as self common lines as well).
Similarly to~\eqref{eq:alpha_ij_q_ij_s}, the angles $\alpha_{ii}^{(s)}$ and $\alpha_{gi}^{(s)}$ may be expressed in terms of the unit vector $q_{ii}^{(s)} \in \mathbb{R}^3$ of~\eqref{eq:q_ij_s} in the direction of the common axis between the central planes $\planeRi$ and $\planeRsi$ as
\begin{equation}\label{eq:alpha_ii_q_ii}
\begin{aligned}
&\left(\cos \alpha_{ii}^{(s)},\sin\alpha_{ii}^{(s)},0\right)^T
=
R_{i}^T q_{ii}^{(s)},\\
&\left(\cos\alpha_{gi}^{(s)},\sin\alpha_{gi}^{(s)},0\right)^T
=
\left( g_{n}^{s} R_{i} \right)^T q_{ii}^{(s)}.
\end{aligned}
\end{equation}
The following lemma shows that the direction of the $s$-th and the $(n-s)$-th pairs of self common lines in any image coincide.
\begin{lemma}\label{lemma:self_cls_coincide}
	For any $\iInm$, and for any $s \in [n-1]$,
	\begin{equation}\label{eq:self_clm_part1}
	\alpha_{gi}^{(s)} = \alpha_{ii}^{(n-s)} + \pi \mod{2 \pi}.
	\end{equation}
\end{lemma}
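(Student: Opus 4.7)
The plan is to express both angles in terms of the same vector, namely $q_{ii}^{(s)}$, via the definitions in \eqref{eq:alpha_ii_q_ii}, and then show that the coordinate vectors that represent them in the plane $\planeRi$ are antipodal, which immediately gives the claimed $\pi$ shift.

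First I would rewrite $\alpha_{gi}^{(s)}$ using $(g_n^s)^T = g_n^{-s}$, so that from \eqref{eq:alpha_ii_q_ii} one has
\begin{equation*}
\left(\cos\alpha_{gi}^{(s)},\sin\alpha_{gi}^{(s)},0\right)^T = R_i^T g_n^{-s}\, q_{ii}^{(s)}.
\end{equation*}
Next I would compute $q_{ii}^{(n-s)}$ and relate it to $q_{ii}^{(s)}$. Since $g_n^n = I$ we have $g_n^{n-s} = g_n^{-s}$, so from \eqref{eq:q_ij_s} (with $i=j$),
\begin{equation*}
q_{ii}^{(n-s)} = \frac{R_i^{(3)} \times g_n^{-s} R_i^{(3)}}{\bigl\|R_i^{(3)} \times g_n^{-s} R_i^{(3)}\bigr\|}.
\end{equation*}
The key identity is that since $g_n^s \in SO(3)$, it preserves cross products, so
\begin{equation*}
g_n^{s}\bigl(R_i^{(3)} \times g_n^{-s} R_i^{(3)}\bigr) = \bigl(g_n^{s} R_i^{(3)}\bigr) \times R_i^{(3)} = -\bigl(R_i^{(3)} \times g_n^{s} R_i^{(3)}\bigr),
\end{equation*}
and, as $g_n^{-s}$ is an isometry, the denominators match. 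Applying $g_n^{-s}$ to both sides therefore yields $q_{ii}^{(n-s)} = -g_n^{-s}\, q_{ii}^{(s)}$.

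Substituting this into the defining relation for $\alpha_{ii}^{(n-s)}$ gives
\begin{equation*}
\left(\cos\alpha_{ii}^{(n-s)},\sin\alpha_{ii}^{(n-s)},0\right)^T = R_i^T q_{ii}^{(n-s)} = -R_i^T g_n^{-s}\, q_{ii}^{(s)} = -\left(\cos\alpha_{gi}^{(s)},\sin\alpha_{gi}^{(s)},0\right)^T,
\end{equation*}
so the two planar unit vectors differ by a sign, i.e.\ by a rotation of $\pi$, which is exactly \eqref{eq:self_clm_part1}.

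The only non-routine step is the cross-product identity $g_n^{s}\bigl(R_i^{(3)} \times g_n^{-s} R_i^{(3)}\bigr) = -\bigl(R_i^{(3)} \times g_n^{s} R_i^{(3)}\bigr)$: one has to invoke both the fact that rotations commute with the cross product and the antisymmetry $u \times v = -v \times u$. Once that is in hand, the rest is a one-line substitution into \eqref{eq:alpha_ii_q_ii}.
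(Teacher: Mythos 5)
Your proof is correct and follows essentially the same route as the paper's: both rest on the rotation-equivariance of the cross product, its antisymmetry, and the rotation invariance of the norm to relate $q_{ii}^{(s)}$ and $q_{ii}^{(n-s)}$, and then read off the antipodal planar coordinate vectors from~\eqref{eq:alpha_ii_q_ii}. The only cosmetic difference is that you package the identity as $q_{ii}^{(n-s)} = -g_{n}^{-s} q_{ii}^{(s)}$ before substituting, whereas the paper manipulates $\left(g_{n}^{s} R_{i}\right)^{T} q_{ii}^{(s)}$ directly into $-R_{i}^{T} q_{ii}^{(n-s)}$.
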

The proof of Lemma~\ref{lemma:self_cls_coincide} is given in Appendix~\ref{app:proof of lemma_self_cls_coincide}. Figure~\ref{fig:scl_c3} and Figure~\ref{fig:scl_c4} illustrate Lemma~\ref{lemma:self_cls_coincide} for $C_{3}$ and $C_{4}$ symmetries, respectively.
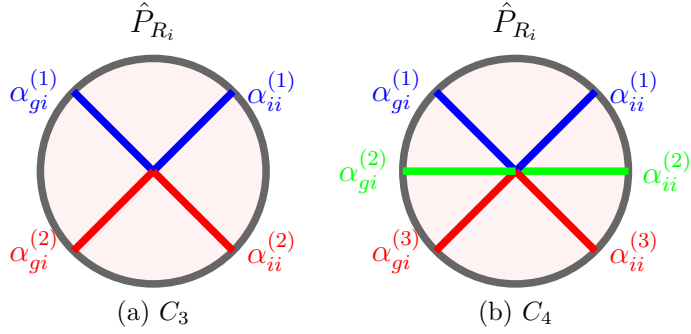
\begin{figure}
	\centering
	\subfloat[$C_3$]{
		\label{fig:scl_c3}
		\centering
		\begin{tikzpicture}
		\filldraw[color=black!60, fill=red!5, line width=1mm](0,0) circle [radius=1.5];
				\node at (0,2) {$\hat{P}_{R_i}$};
		
		\draw[color=blue,line width=1mm]  (0,0) -- ++(45:1.5) node (4) [black,right] {\textcolor{blue}{$\alpha_{ii}^{(1)}$}};
		\draw[color=blue,line width=1mm] (0,0) -- ++(135:1.5) node (2) [black,left] {\textcolor{blue}{$\alpha_{gi}^{(1)}$}};
		
		\draw[color=red,line width=1mm]  (0,0) -- ++(225:1.5) node (4) [black,left] {\textcolor{red}{$\alpha_{gi}^{(2)}$}};
		\draw[color=red,line width=1mm] (0,0) -- ++(315:1.5) node (2) [black,right] {\textcolor{red}{$\alpha_{ii}^{(2)}$}};
		\end{tikzpicture}
	}
	\subfloat[$C_4$]{
		\label{fig:scl_c4}
		\centering
		\begin{tikzpicture}
		\filldraw[color=black!60, fill=red!5, line width=1mm](0,0) circle [radius=1.5];
				\node at (0,2) {$\hat{P}_{R_i}$};
		\draw[color=green,line width=1mm]  (0,0) -- ++(0:1.5) node (1) [black,right] {\textcolor{green}{$\alpha_{ii}^{(2)}$}};
		
		\draw[color=blue,line width=1mm]  (0,0) -- ++(45:1.5) node (4) [black,right] {\textcolor{blue}{$\alpha_{ii}^{(1)}$}};
		\draw[color=blue,line width=1mm] (0,0) -- ++(135:1.5) node (2) [black,left] {\textcolor{blue}{$\alpha_{gi}^{(1)}$}};
		
		\draw[color=red,line width=1mm]  (0,0) -- ++(225:1.5) node (4) [black,left] {\textcolor{red}{$\alpha_{gi}^{(3)}$}};
		\draw[color=red,line width=1mm] (0,0) -- ++(315:1.5) node (2) [black,right] {\textcolor{red}{$\alpha_{ii}^{(3)}$}};
		
		\draw[color=green,line width=1mm]  (0,0) -- ++(180:1.5) node (1) [black,left] {\textcolor{green}{$\alpha_{gi}^{(2)}$}};
		\end{tikzpicture}
	}
	\caption{Circles representing an image $\hatPi$ of a molecule with $C_{3}$ symmetry (left) and of a molecule with $C_{4}$ symmetry (right), along with the pairs of self common lines in each image. By Lemma~\ref{lemma:self_cls_coincide} the lines are collinear.
	\label{fig:scl}}
\end{figure}
\begin{corollary}\label{cor:self_cls_half_of_pairs}
	By~\eqref{eq:clm_four_pairs_scls} and~\eqref{eq:self_clm_part1} it follows that for any $\iInm$, $s \in [n-1]$, and $\xi \in \mathbb{R}$,
	\begin{equation}\label{eq:scl_n-1_pairs}
	\begin{aligned}
	\hatPi\left(\xi\cos\alpha_{ii}^{(s)},\xi\sin\alpha_{ii}^{(s)}\right)
	&=
	\hatPi\left(\xi\cos\alpha_{gi}^{(s)},\xi\sin\alpha_{gi}^{(s)}\right)\\
	&=
	\hatPi \left( \xi \cos (\alpha_{ii}^{(n-s)}+\pi ), \xi \sin ( \alpha_{ii}^{(n-s)}+\pi ) \right).
	\end{aligned}
	\end{equation}
	In addition, as $\hatPi$ is conjugate-symmetric (since $P_{R_{i}}$ is real-valued~\eqref{eq:psi}),
	\begin{equation}\label{eq:bla2}
	\hatPi \left( \xi \cos (\alpha_{ii}^{(n-s)}+\pi ), \xi \sin ( \alpha_{ii}^{(n-s)}+\pi ) \right)
	=
	\conjugatet{
		\hatPi \left( \xi \cos \alpha_{ii}^{(n-s)}, \xi \sin \alpha_{ii}^{(n-s)} \right)},
	\end{equation}
	where $\conjugatet{\left( \cdot \right)}$ denotes complex conjugation. As a result,
	\begin{equation}\label{eq:self_clm_part2}
	\hatPi \left( \xi \cos \alpha_{ii}^{(s)}, \xi \sin \alpha_{ii}^{(s)} \right)
	=
	\conjugatet{
		\hatPi \left( \xi \cos \alpha_{ii}^{(n-s)}, \xi \sin \alpha_{ii}^{(n-s)} \right)}, \quad \forall \xi \in \mathbb{R}.
	\end{equation}
\end{corollary}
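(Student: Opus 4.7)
The statement is essentially a chain of substitutions rather than a substantive theorem, so my plan is to trace through the three displayed equations in order, justifying each step from results already in hand.

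First I would establish the equation~\eqref{eq:scl_n-1_pairs}. The first equality is just a restatement of~\eqref{eq:clm_four_pairs_scls}. For the second equality, I would substitute the identity $\alpha_{gi}^{(s)} = \alpha_{ii}^{(n-s)} + \pi \pmod{2\pi}$ provided by Lemma~\ref{lemma:self_cls_coincide} directly into the arguments of $\hatPi$ on the right-hand side. Because $\cos$ and $\sin$ are $2\pi$-periodic, the reduction modulo $2\pi$ is harmless inside the trigonometric functions, so this is a pure substitution step.

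Next I would justify~\eqref{eq:bla2}. Since $P_{R_i}$ is a real-valued function on $\mathbb{R}^2$ (by~\eqref{eq:psi}, as $\psi$ is real-valued), its two-dimensional Fourier transform $\hatPi$ satisfies the conjugate-symmetry (Hermitian) relation $\hatPi(-\omega_x,-\omega_y) = \conjugatet{\hatPi(\omega_x,\omega_y)}$. Setting $\omega_x = \xi\cos\alpha_{ii}^{(n-s)}$ and $\omega_y = \xi\sin\alpha_{ii}^{(n-s)}$, and using $\cos(\theta+\pi) = -\cos\theta$, $\sin(\theta+\pi) = -\sin\theta$, yields precisely~\eqref{eq:bla2}. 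This is the one step that needs a tiny bit of justification beyond pure substitution; the key observation is simply that adding $\pi$ to an angle negates the point on the unit circle.

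Finally, to obtain~\eqref{eq:self_clm_part2}, I would chain the two equalities: the left-hand side of~\eqref{eq:scl_n-1_pairs} equals the right-hand side of~\eqref{eq:scl_n-1_pairs}, which by~\eqref{eq:bla2} equals $\conjugatet{\hatPi(\xi\cos\alpha_{ii}^{(n-s)},\xi\sin\alpha_{ii}^{(n-s)})}$. This gives exactly~\eqref{eq:self_clm_part2} for all $\xi \in \mathbb{R}$ and $s \in [n-1]$.

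There is no real obstacle here; the statement is a bookkeeping corollary whose content is entirely contained in Lemma~\ref{lemma:self_cls_coincide} plus the Hermitian symmetry of the Fourier transform of a real function. The only thing to be careful about is making the $\pi$-shift-plus-conjugate-symmetry reasoning explicit, which I would do with a single line invoking $(\cos(\theta+\pi),\sin(\theta+\pi)) = -(\cos\theta,\sin\theta)$.
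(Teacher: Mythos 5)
Your proposal is correct and follows exactly the same route as the paper, which proves the corollary inline by substituting Lemma~\ref{lemma:self_cls_coincide} into~\eqref{eq:clm_four_pairs_scls} and then invoking the conjugate symmetry of $\hatPi$ (Fourier transform of the real-valued image $P_{R_i}$) together with the fact that a shift of the angle by $\pi$ negates the point on the ray. Your explicit note that $(\cos(\theta+\pi),\sin(\theta+\pi))=-(\cos\theta,\sin\theta)$ and that the modulo-$2\pi$ reduction is harmless inside $\cos$ and $\sin$ is exactly the bookkeeping the paper leaves implicit.
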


We therefore see from Lemma~\ref{lemma:self_cls_coincide} and Corollary~\ref{cor:self_cls_half_of_pairs} that the $n-1$ pairs of self common lines in any image consist in fact of $\floor*{\frac{n-1}{2}}$ different pairs. In addition, in case $n$ is even, the two self common lines that constitute the $\frac{n}{2}$-th pair of lines are collinear. In particular, for molecules with $C_{2}$ symmetry, the sole pair of self common lines in any image consists of two collinear lines. In contrast,
\begin{itemize}
	\item for molecules with $C_{3}$ symmetry (i.e., $n=3$), the two pairs of self common lines (i.e., four lines altogether) in any image coincide (i.e., there are merely two non-collinear lines), see Figure~\ref{fig:scl_c3},
	\item for molecules with $C_{4}$ symmetry (i.e., $n=4$), one of the pairs of self common lines in every image consists of collinear lines whereas the other two remaining pairs of self common lines coincide, see Figure~\ref{fig:scl_c4}, and
	\item for molecules with cyclic symmetry of higher order (i.e., $C_{n}, \ n>4$), each image has in general $\floor*{\frac{n-1}{2}} > 1$ pairs of non-collinear self common lines.
\end{itemize}

In what follows, we refer to any relative orientation $\RigsRi, \ \iInm, \ s \in [n-1]$, as the self relative orientation between $\planeRi$ and $\planeRsi$. By~\eqref{eq:euler_angles}, any such self relative orientation is parameterized by the ordered triplet $\left( \alpha_{ii}^{(s)}, \gamma_{ii}^{(s)}, -\alpha_{gi}^{(s)} \right)$ which by Lemma~\ref{lemma:self_cls_coincide} is equal to $\left( \alpha_{ii}^{(s)}, \gamma_{ii}^{(s)}, -\alpha_{ii}^{(n-s)} - \pi \right)$. As such,
\begin{equation}\label{eq:euler_angles_self}
\RigsRi
=
R_z(\alpha_{ii}^{(s)}) R_x(\gamma_{ii}^{(s)}) R_z(-\alpha_{ii}^{(n-s)}-\pi),
\end{equation}
where, by~\eqref{eq:hadani},
\begin{equation}\label{eq:hadani_self}
\alpha_{ii}^{(s)} = \arctan \left(-\frac{\left( \RigsRi \right)_{1,3}}{\left( \RigsRi \right)_{2,3}}\right), \quad s=1,\ldots,n-1.
\end{equation}
\section{Previous work}\label{sec:previous_work}
The method of angular reconstitution~\cite{Goncharov1988,VanHeel87} recovers the orientations corresponding to a given set of projection-images in a sequential manner. Specifically, it first determines the common lines between arbitrary three images. This establishes a coordinate system, and the orientations of all remaining projection-images are then recovered one after the other by using the common lines between any given image and those three images.
However, due to the sequential nature of angular reconstitution, the method critically depends on correctly detecting the common lines between the first three images. As a result, when the input projection-images are noisy, this positioning might be wrong, which would lead to errors when inferring the relative orientations of the remaining central planes.

In~\cite{3n}, a non-sequential common-lines-based method was suggested, which estimates simultaneously the orientations of all projection-images. As such, contrary to the method of angular reconstitution, it does not suffer from accumulation of errors and is therefore robust even when the input images are noisy. However, this method may not be applied to projection-images of a molecule with $C_n$ symmetry, since there is no way to guarantee a consistent choice of common lines, and so for each pair of images $\hatPi$ and $\hatPj$ we can only estimate $\RigsijRj$ for an unknown $s_{ij}$, whereas to apply the method in~\cite{3n} all $s_{ij}$ must be the same. Unfortunately, such a consistency in all $s_{ij}$ cannot be guaranteed using the method of~\cite{3n}.

In~\cite{Cheng_thesis}, a method which successfully handles molecules with $C_{2}$ symmetry was proposed. The method is based on determining for every two images $\hatPi$ and $\hatPj$ both of their two common lines. Unfortunately, generalizations of this method to $C_{n}$ for $n>2$ are of little practical use, as it requires to detect all $n$ common lines~(Lemma~\ref{lemma:n_common_lines}) of every two images, which is impractical when the input images are noisy. Nevertheless, the method presented in the current paper is in some sense an extension of~\cite{Cheng_thesis}, but with two key differences. First, detecting all common lines between each pair of images is replaced by a maximum-likelihood-type procedure that directly estimates their relative orientation, from which their common lines can be easily computed. Second, our method takes advantage of self common lines, which do not exist in the case of $C_{2}$ symmetry.

\section{Outline of our method}\label{sec:outline_cn}
In this section, we provide an outline of our method for recovering the orientations of a given set of projection-images of a molecule with $C_{n}$ symmetry with $n>2$. Figure~\ref{fig:flowchart_cn} depicts a flowchart of the method.

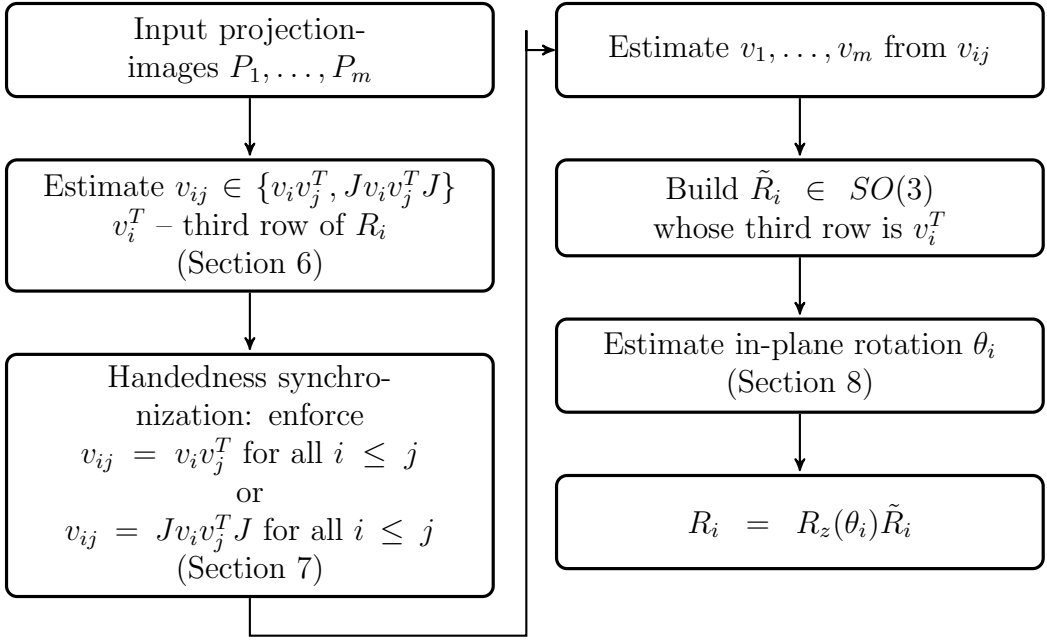
\begin{figure}
	\centering
\begin{tikzpicture}
  [node distance=.8cm,
  start chain=going below,]
     \node[punktchain, join] (input) {Input projection-images $P_{1},\ldots,P_{m}$};
     \node[punktchain, join] (estimatevivj) {Estimate $\vij \in \{ \vivj, J\vivj J \}$ \\
     $v_{i}^T$ -- third row of $R_{i}$ \\
     (Section~\ref{sec:third_row_cn})};
     \node[punktchain, join] (handedness) {Handedness synchronization: enforce \\ $\vij = \vivj$ for all $i \leq j$ \\ or \\ $\vij = J \vivj J$ for all $i \leq j$ \\ (Section~\ref{sec:handedness_sync_cn})};
     \node[punktchain, right = of input] (rank1) {Estimate $v_{1},\ldots,v_{m}$ from $v_{ij}$};
     \node[punktchain, join] (Rtilde) {Build $\tilde{R}_{i} \in SO(3)$ whose third row is $v_{i}^T$};
     \node[punktchain, join] (theta) {Estimate in-plane rotation $\theta_{i}$\\ (Section~\ref{sec:in_plane_cn})};
     \node[punktchain, join] (R) {$R_{i} = R_{z}(\theta_{i}) \tilde{R}_{i}$};

     \draw [->,thick] (handedness.south) |- ++(0,-0.5) |- ++(8.9em,0) |- ++(0,19.5em) |- (rank1.west);
  \end{tikzpicture}
	\caption{A flowchart illustrating the steps for estimating the orientations of projection-images of molecules with $C_{n}$ symmetry.}
	\label{fig:flowchart_cn}
\end{figure}

In what follows, we denote by $\vitrans$ the third row of $R_{i}, \ i=1,\ldots,m$. Given projection-images $P_{1},\ldots,P_{m}$ with corresponding unknown rotations $R_{1},\ldots,R_{m}$ (to be estimated), the first step of our method, denoted as ``relative viewing directions estimation" (Section~\ref{sec:third_row_cn}), consists of estimating for each pair of projection-images $P_{i}$ and $P_{j}$, $i \leq j$, one of the two $3\times 3$ matrices $\vivj$ or $J \vivj J$, where $J = \operatorname{diag}(-1,-1,1)$, using common lines and self common lines of $P_{i}$ and $P_{j}$. The two matrices $\vivj$ and $J \vivj J$ are indistinguishable using common lines due to the handedness ambiguity  discussed in Section~\ref{sec:introduction}. We thus denote the estimated matrix by $\vij$. The second step, which is denoted as ``handedness synchronization"~(Section~\ref{sec:handedness_sync_cn}), enforces that either all estimates $\vij$ have a spurious $J$, or none have at all.  Once these estimates all correspond to a single hand, the next step, denoted as ``viewing directions estimation", consists of forming a $3m \times 3m$ symmetric matrix $V$ whose $(i,j)$-th block of size $3 \times 3$ is given by $\vij$. That is,
\begin{equation}\label{def:V}
V_{ij} =
\begin{cases}
\vij, & \text{if } i \leq j, \\
\vjitrans, &  \text{if } i > j.
\end{cases}
\end{equation}
Depending on the output of the handedness synchronization step described above, the factorization
\begin{equation}\label{eq:V_factorization}
V = v v^{T}, \quad v \in \mathbb{R}^{3m},
\end{equation}
yields at once either the estimates of the third rows $\vitrans$ of all rotation matrices $R_{i}$, or the estimates of all $J$-multiplied third rows $\vitrans J$. 
The next step is to form $m$ rotation matrices $\tilde{R}_{1},\ldots,\tilde{R}_{m}$ where the third row of each $\tilde{R}_{i}$ is set to be equal to the estimate for either $\vitrans$ or $\vitrans J$. The first two rows of each $\tilde{R}_{i}$ are set arbitrarily (so that $\tilde{R}_{i} \in SO(3)$). In Lemma~\ref{lemma:in_plane_single_param} below, we prove that for any $\iInm$ there exist $\theta_{i} \in [0,2 \pi/n)$ and $s_{i} \in [n]$ such that
\begin{equation}\label{eq:theta_i_in_2pi_over_n}
g_{n}^{s_{i}} R_{i} = R_z ( \theta_{i} ) \tilde{R}_{i},
\end{equation}
where $R_z ( \theta_{i} )$ is a rotation matrix that rotates vectors by an angle of $\theta_{i}$ about the $z$-axis (see~\eqref{def:R_x_R_z}). As a result, since each $R_{i}$ may be replaced by either $g_{n}^{s_{i}} R_{i}$ or $g_{n}^{s_{i}} R_{i} J$ for some $s_{i} \in [n-1]$, it suffices to recover the in-plane rotation angles $\theta_{i} \in [0,2 \pi /n)$. The step which consists of recovering all these angles $\theta_{i}$ is
denoted by ``in-plane rotation angles estimation" (Section~\ref{sec:in_plane_cn}). Finally, the last step is ``orientations estimation" in which all rotation matrices $g_{n}^{s_{i}} R_{i}$ (or $g_{n}^{s_{i}} R_{i} J$) are formed using~\eqref{eq:theta_i_in_2pi_over_n}.

We next present a method for estimating the set of all relative viewing directions $\{ \vivj \mid i \le j, \ i,j=1,\ldots,m \}$ which may applied to molecules with $C_{n}$ symmetry with $n>2$ (a full treatment of $C_{2}$ symmetry may be found in~\cite{Cheng_thesis}). In addition, we describe in Appendix~\ref{sec:third_row_c2_c3_c4} an alternative method that is applicable only to molecules with either $C_{3}$ or $C_{4}$ symmetry. While the method of Section~\ref{sec:third_row_cn} may be applied to molecules with $C_{3}$ or $C_{4}$ symmetry, the alternative method of Appendix~\ref{sec:third_row_c2_c3_c4} utilizes the underlying geometry that is induced by such molecules. As a result, it produces better results in practice and is much faster.

\section{Relative viewing directions estimation for $C_{n}$ symmetry with $n>2$}\label{sec:third_row_cn}

Loosely speaking, estimating $\{ \vivj \mid i,j=1,\ldots,m, \ i\leq j \}$ is based on inspecting for each pair of images all possible pairs of rotation matrices (discretized in some proper manner, as explained later on in Section~\ref{sec:Numerical_experiments_cn}), and finding the pair that induces common lines and self common lines which are most correlated. Such an approach is advantageous for the following two reasons:
\begin{enumerate}
	\item \label{itm:first_reason} By Lemma~\ref{lemma:n_common_lines}, any two images have $n$ pairs of common lines, and by Corollary~\ref{cor:self_cls_half_of_pairs}, any image has $\floor*{\frac{n-1}{2}}$ pairs of (non collinear) self common lines. Therefore, for any two images, the degree by which a given pair of candidate rotation matrices induces the correct relative orientation of the two images, may be ascertained with greater confidence as $n$ increases. Specifically, if most of the common lines and self common lines that are induced by a given pair of candidate rotation matrices are highly correlated, then it is likely that these candidates correspond to the true rotation matrices.
	\item \label{itm:second_reason} Recall that any $R_{i}$ may be replaced by $g_{n}^{s_{i}} R_{i}$ where $s_{i} \in [n-1]$ is arbitrary. In light of that, for each $\iInm$, if we write the third column $R_{i}^{(3)}$ of $R_{i}$ in its spherical coordinates representation, i.e., $R_{i}^{(3)} = \left( \sin \theta_{i} \cos \phi_{i}, \sin \theta_{i} \sin \phi_{i},\cos \theta_{i} \right)^{T}$ for some $\phi_{i} \in [0,2\pi)$ and $\theta_{i} \in [0,\pi)$, then by a direct calculation, we get that for any $s_{i} \in [n]$,
	\begin{equation}\label{eq:pre_def_SO_n_3}
	g_{n}^{s_{i}} R_{i}^{(3)}
	=
	\left( \sin \theta_{i} \cos \left( \phi_{i} + \frac{2\pi s_{i}}{n} \right),\sin \theta_{i} \sin \left( \phi_{i} + \frac{2\pi s_{i}}{n} \right),\cos \theta_{i} \right)^{T}.
	\end{equation}
	Thus, since for any $\phi_{i}$ there exists $s_{i} \in [n]$ such that $\phi_{i} + \frac{2\pi s_{i}}{n} \mod 2 \pi \in [0, 2\pi/n)$, it follows that instead of considering the set of ``all possible" candidate rotations for each image, it suffices to only consider the set $SO_{n}(3)$ given by
	\begin{equation}\label{def:SO_n_3}
	SO_{n}(3) =
	\left\{
	R \in SO(3) \,\middle|\, \Phi ( R^{(3)} ) \in [0,2\pi/n)
	\right\},
	\end{equation}
	where $\Phi \colon \mathbb{R}^3 \to [0,2\pi)$ is the mapping $(\sin\theta\cos\phi,\sin\theta\sin\phi,\cos\theta)^T \mapsto \phi$ of any vector in $\mathbb{R}^3$ to its azimuthal angle in its spherical coordinates representation. For large $n$ this significantly restricts the search space of the candidate rotation matrices.
\end{enumerate}

In light of reason~\eqref{itm:second_reason} above, throughout the remaining of this section, we assume without loss of generality that $R_{1},\ldots,R_{m} \in SO_{n}(3)$ (see~\eqref{def:SO_n_3}).
The following lemma, whose proof is given in Appendix~\ref{app:Proof of Lemma g_n_ks}, and the corollary that follows will be used in the sequel.
\begin{lemma}\label{lemma:g_n_ks}
	For any $n \in \mathbb{N}$, $n > 1$, and $l \in \mathbb{Z}$ such that $(l \bmod n) \neq 0$,
	\begin{equation}\label{eq:sum_g}
	\frac{1}{n} \sum_{s=0}^{n-1} g_{n}^{ls}
	=
	\operatorname{diag}\left( 0,0,1 \right).
	\end{equation}
\end{lemma}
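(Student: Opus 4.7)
The plan is to reduce the matrix identity to a scalar geometric-series computation by writing $g_{n}^{ls}$ explicitly. Since $g_{n}$ is by~\eqref{def:g} a rotation about the $z$-axis by $2\pi/n$, a direct computation gives
\begin{equation*}
g_{n}^{ls}
=
\begin{pmatrix}
\cos(2\pi l s/n) & -\sin(2\pi l s/n) & 0 \\
\sin(2\pi l s/n) & \hphantom{+}\cos(2\pi l s/n) & 0 \\
0 & 0 & 1
\end{pmatrix}, \quad s=0,\ldots,n-1.
\end{equation*}
Summing entrywise over $s$, the $(3,3)$ entry is $n$, the entries in the third row and third column vanish trivially, and the upper $2\times 2$ block reduces to sums of the form $\sum_{s=0}^{n-1}\cos(2\pi l s/n)$ and $\sum_{s=0}^{n-1}\sin(2\pi l s/n)$.

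To handle these two trigonometric sums simultaneously I would package them into the complex geometric sum $\sum_{s=0}^{n-1} e^{2\pi i l s/n}$. Since $(l \bmod n)\neq 0$, the ratio $e^{2\pi i l/n}$ is a nontrivial $n$-th root of unity, in particular distinct from $1$, so the denominator in the closed form
\begin{equation*}
\sum_{s=0}^{n-1} e^{2\pi i l s/n}
=
\frac{1-e^{2\pi i l}}{1-e^{2\pi i l/n}}
\end{equation*}
is nonzero. The numerator is $0$ because $l\in\mathbb{Z}$, so the sum equals $0$. Taking real and imaginary parts then gives $\sum_{s=0}^{n-1}\cos(2\pi l s/n)=\sum_{s=0}^{n-1}\sin(2\pi l s/n)=0$, so the upper $2\times 2$ block of $\sum_{s=0}^{n-1} g_{n}^{ls}$ is the zero matrix.

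Combining this with the $(3,3)$ entry and dividing by $n$ yields $\frac{1}{n}\sum_{s=0}^{n-1} g_{n}^{ls}=\operatorname{diag}(0,0,1)$, which is the claim. There is no real obstacle here; the only subtlety is making sure the hypothesis $(l\bmod n)\neq 0$ is invoked precisely to guarantee $e^{2\pi i l/n}\neq 1$ so that the geometric-series formula is applicable.
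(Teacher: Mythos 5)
Your proposal is correct and follows essentially the same route as the paper: write $g_{n}^{ls}$ as the $z$-rotation by $2\pi ls/n$, reduce the off-diagonal block to the sums $\sum_{s}\cos(2\pi ls/n)$ and $\sum_{s}\sin(2\pi ls/n)$, and kill both via the complex geometric series $\sum_{s=0}^{n-1}e^{2\pi i ls/n}=\frac{1-e^{2\pi i l}}{1-e^{2\pi i l/n}}=0$, with the hypothesis $(l\bmod n)\neq 0$ used exactly where you use it, to keep the denominator nonzero.
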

\begin{corollary}
	By applying~\eqref{eq:sum_g} to any $n \in \mathbb{N}$, \mbox{$n>1$}, with \mbox{$l=1$}, we get that for any $i,j \in [m]$,
	\begin{equation}\label{eq:linear_comb_n}
	\frac{1}{n}  \sum_{s=0}^{n-1}  \RigsRj
	=
	R_i^T \left( \frac{1}{n} \sum_{s=0}^{n-1} g_{n}^{s} \right) R_j
	=
	R_i^T \operatorname{diag}\left( 0,0,1 \right) R_j
	=
	\vivj.
	\end{equation}	
\end{corollary}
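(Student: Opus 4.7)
The plan is to peel off the three equalities one at a time, since this corollary is essentially a mechanical consequence of Lemma~\ref{lemma:g_n_ks} combined with the bookkeeping of how the third rows of $R_i$ and $R_j$ encode the outer product $\vivj$.

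First I would justify the leftmost equality by pulling $R_i^T$ and $R_j$ outside the sum: because $R_i^T$ and $R_j$ are fixed (independent of $s$), linearity of matrix multiplication yields
\[
\frac{1}{n}\sum_{s=0}^{n-1} \RigsRj
=
R_i^T \left( \frac{1}{n}\sum_{s=0}^{n-1} g_n^{s} \right) R_j.
\]
Next I would invoke Lemma~\ref{lemma:g_n_ks} with $l=1$. Since $n>1$, we have $1 \bmod n = 1 \neq 0$, so the hypothesis of the lemma is satisfied, and hence
\[
\frac{1}{n}\sum_{s=0}^{n-1} g_n^{s} = \operatorname{diag}(0,0,1).
\]
Substituting this into the previous display yields the middle equality of the corollary.

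For the final equality, I would rewrite $\operatorname{diag}(0,0,1) = e_3 e_3^T$ with $e_3 = (0,0,1)^T$, and then observe that by definition $\vitrans$ is the third row of $R_i$, i.e., $\vitrans = e_3^T R_i$, so that $v_i = R_i^T e_3$, and analogously $\vjtrans = e_3^T R_j$. Therefore
\[
R_i^T \operatorname{diag}(0,0,1) R_j
=
(R_i^T e_3)(e_3^T R_j)
=
v_i \vjtrans
=
\vivj,
\]
completing the chain of equalities.

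There is no real obstacle here: the content of the corollary is essentially a direct specialization of Lemma~\ref{lemma:g_n_ks}, and the only subtle point is to unambiguously identify $e_3^T R_i$ with the transpose $\vitrans$ of the third row as defined in Section~\ref{sec:outline_cn}, which forces the outer product form $\vivj$ upon sandwiching $e_3 e_3^T$ between $R_i^T$ and $R_j$.
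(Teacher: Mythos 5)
Your proof is correct and follows exactly the same route the paper intends: pull $R_i^T$ and $R_j$ out of the sum by linearity, apply Lemma~\ref{lemma:g_n_ks} with $l=1$ (valid since $n>1$), and identify $R_i^T \operatorname{diag}(0,0,1) R_j$ with $\vivj$ via the third rows. The only addition is your explicit factorization $\operatorname{diag}(0,0,1)=e_3 e_3^T$, which is a fine way to spell out the last step the paper leaves implicit.
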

At first glance, it would appear that the rotation matrices $R_{1},\ldots,R_{m} \in SO_{n}(3)$ may be estimated by searching for each pair of images for the pair of rotation matrices that induce pairs of lines whose values are most correlated. That is, since the values along common lines and self common lines have perfect correlation over all other pairs of lines, it follows from~\eqref{eq:clm_four_pairs_cls} and~\eqref{eq:self_clm_part2} that the pair of rotation matrices $\left( R_{i}, R_{j} \right)$ attains the maximal value of $S_{\hatPi, \hatPj} \colon SO_{n}(3) \times SO_{n}(3) \to \mathbb{R}$, given by
\begin{equation}\label{eq:S_tau}
\begin{aligned}
S_{\hatPi, \hatPj}
\left( \tilde{R}_{i}, \tilde{R}_{j} \right)
=
\prod_{s=0}^{n-1}
\operatorname{Re}
\int_{\xi}
\hatPi
\left(
\xi \cos \tilde{\alpha}_{ij}^{(s)},
\xi \sin \tilde{\alpha}_{ij}^{(s)}
\right)
\conjugatet{
\hatPj
\left(
\xi \cos \tilde{\alpha}_{ji}^{(s)},
\xi \sin \tilde{\alpha}_{ji}^{(s)}
\right)}\,\mathrm{d}\xi& \\
\prod_{k \in \{i,j\}}\prod_{s=1}^{\floor*{\frac{n-1}{2}}}
\operatorname{Re}
\int_{\xi}
\hatPk
\left(
\xi \cos \tilde{\alpha}_{kk}^{(s)},
\xi \sin \tilde{\alpha}_{kk}^{(s)}
\right)
	\hatPk
	\left(
	\xi \cos \tilde{\alpha}_{kk}^{(n-s)},
	\xi \sin \tilde{\alpha}_{kk}^{(n-s)}
	\right)
\,\mathrm{d}\xi&,
\end{aligned}
\end{equation}
with $\operatorname{Re}(z)$ denoting the real part of $z \in \mathbb{C}$, where each ray in each of the images in~\eqref{eq:S_tau} is normalized to have norm equal to one, and where (in accordance with~\eqref{eq:hadani} and~\eqref{eq:hadani_self}),
\begin{equation}\label{eq:S_hat_hat}
\begin{aligned}
&\tilde{\alpha}_{ij}^{(s)} = \arctan \left(-\frac{\left( \tilde{R}_{i}^{T} g_{n}^{s} \tilde{R}_{j} \right)_{1,3}}{\left( \tilde{R}_{i}^{T} g_{n}^{s} \tilde{R}_{j} \right)_{2,3}}\right), \quad
\tilde{\alpha}_{ji}^{(s)} = \arctan \left(-\frac{\left( \tilde{R}_{i}^{T} g_{n}^{s} \tilde{R}_{j} \right)_{3,1}}{\left( \tilde{R}_{i}^{T} g_{n}^{s} \tilde{R}_{j} \right)_{3,2}}\right), \quad s=0,\ldots,n-1,\\
&\tilde{\alpha}_{kk}^{(s)} = \arctan \left(-\frac{\left( \tilde{R}_{k} g_{n}^{s} \tilde{R}_{k} \right)_{1,3}}{\left( \tilde{R}_{k} g_{n}^{s} \tilde{R}_{k} \right)_{2,3}}\right), \quad k \in \{i,j\}, \quad s=1,\ldots,\floor*{\frac{n-1}{2}}.
\end{aligned}
\end{equation}
However, for any $\ijInm$, the maximum of~\eqref{eq:S_tau} over $SO_{n}(3) \times SO_{n}(3)$ is not necessarily unique, and---depending on $R_{i}$ and $R_{j}$---might be attained by other pairs of rotation matrices besides $( R_{i}, R_{j})$. For example, by~\eqref{eq:S_hat_hat}, any pair of rotation matrices~$(R^{\ast}_{i}, R^{\ast}_{j})$ such that
\begin{equation}\label{eq:same_self_relative}
\left\{
R_{i}^{\ast T} g_{n}^{s} R^{\ast}_{i}
\right\}_{s=1}^{n-1}
=	
\Big\{
\RigsRi
\Big\}_{s=1}^{n-1}, \quad
\left\{
R_{j}^{\ast T} g_{n}^{s} R^{\ast}_{j}
\right\}_{s=1}^{n-1}
=	
\Big\{
\RjgsRj
\Big\}_{s=1}^{n-1},
\end{equation}
and
\begin{equation}\label{eq:same_relative}
\left\{
R_{i}^{\ast T} g_{n}^{s} R^{\ast}_{j}
\right\}_{s=0}^{n-1}
=	
\Big\{
\RigsRj
\Big\}_{s=0}^{n-1},
\end{equation}
would also maximize~\eqref{eq:S_tau}.
%
Nevertheless, while not every maximizer~$\left( R^{\ast}_{i}, R^{\ast}_{j} \right)$ of~\eqref{eq:S_tau} is necessarily equal to~$\left( R_{i}, R_{j} \right)$, we observed empirically (using extensive simulations) that all such maximizers~$\left( R^{\ast}_{i}, R^{\ast}_{j} \right)$ satisfy~\eqref{eq:same_self_relative} and~\eqref{eq:same_relative}. As a result, it follows from~\eqref{eq:linear_comb_n} that
\begin{equation}\label{eq:vij_invar_cn}
\frac{1}{n}\sum_{s=0}^{n-1}
R_{i}^{\ast T} g_{n}^{s} R^{\ast}_{j}
=
\frac{1}{n}\sum_{s=0}^{n-1}
\RigsRj
=
\vivj,
\end{equation}
where $\vitrans$ and $\vjtrans$ are the third rows of $R_{i}$ and $R_{j}$, respectively. That is, any outer product $\vivj$ is invariant to permutations of the $n$ common lines in $\hatPi$ and $\hatPj$. Thus, for any $\ijInm$, we choose an arbitrary pair $( R^{\ast}_{i}, R^{\ast}_{j} )$ which maximizes~\eqref{eq:S_tau}, and obtain an estimate $\vij$ for $\vivj$ which, due to the inherent handedness ambiguity satisfies $\vij \in \{ \vivj, J \vivj J \}$.
In a similar vein, it follows from~\eqref{eq:same_self_relative} and~\eqref{eq:linear_comb_n} that, for any $\iInm$, any maximizer $( R^{\ast}_{i}, R^{\ast}_{j} )$ of~\eqref{eq:S_tau} yields
\begin{equation}\label{eq:vii_invar_cn}
\frac{1}{n}\sum_{s=0}^{n-1}
R_{i}^{\ast T} g_{n}^{s} R^{\ast}_{i}
=
\frac{1}{n}\sum_{s=0}^{n-1}
\RigsRi
=
\vivi.
\end{equation}
Thus, for any $\iInm$, any of the $m-1$ pairs $( R^{\ast}_{i}, R^{\ast}_{j} )$ where $j \neq i$, induces an estimate $\viij$ for $\vivi$ using~\eqref{eq:vii_invar_cn}. In practice, however, due to self common lines misidentification, any such estimate $\viij$ of $\vivi$ may contain some error. Thus, choosing any one of the estimates $\viij$ to be the single estimate $\vii$ for $\vivi$ is sub-optimal. Furthermore, averaging over all $m-1$ estimates $\viij$ doesn't make any sense, since due to the handedness ambiguity, $\viij \in \{ \vivi, J \vivi J \}$ independently of other estimates. Instead, since any $\vivi$ is a matrix of rank-$1$, we set $\vii$ to be equal to the estimate $\viij$ that is closest to a rank-$1$ matrix. Specifically, for every $\iInm$ we compute the estimates $\viij, \ j\neq i$, and for every such estimate we find (using SVD) its three singular-values $s_{i,1}^{(j)}, s_{i,2}^{(j)}, s_{i,3}^{(j)} \in \mathbb{R}$, and set $\vii = \viijast$ where
\begin{equation}\label{eq:vii_j_ast}
j^\ast \gets \argmin_{\substack{j=1\ldots,m \\ j \neq i}} \left\| \left( s_{i,1}^{(j)}, s_{i,2}^{(j)}, s_{i,3}^{(j)} \right)^{T} - \Big( 1,0,0 \Big)^T \right\|_{2}.
\end{equation}

The procedure for finding all estimates $\vij$ and $\vii$ which, due to the inherent handedness ambiguity, satisfy $\vij \in \{ \vivj, J \vivj J \}$ and $\vii \in \{ \vivi, J \vivi J \}$ is summarized in Algorithm~\ref{alg:vijEstCn}.

\begin{algorithm*}
	\caption{Estimate $\vij, \ i \leq j \in [m]$, for molecules with $C_{n}$ symmetry, with $n>2$}\label{alg:vijEstCn}
	\begin{algorithmic}[1]
		\State {\bfseries Input:}
		\begin{inlinelist}
			\item Images $\hatPi, \ \iInm$.
			\item Cyclic symmetry order $n>2$.
		\end{inlinelist}
		\For{$\ijInm$}
		\State{$ \left(R^{\ast}_{i}, R^{\ast}_{j} \right)
			\gets
			\mathop{\rm argmax} \limits_{\tilde{R}_{i}, \tilde{R}_{j} \in SO_{n}(3) }
			S_{\hatPi, \hatPj} \left( \tilde{R}_{i},\tilde{R}_{j} \right)$}
		\Comment{~\eqref{eq:S_tau}}
		\State{$\vij \gets \frac{1}{n}\sum_{s=0}^{n-1}
			R_{i}^{\ast T} g_{n}^{s} R^{\ast}_{j}$}
		\Comment{~\eqref{eq:vij_invar_cn}}
		\State{$\viij \gets
			\frac{1}{n}\sum_{s=0}^{n-1}
			R_{i}^{\ast T} g_{n}^{s} R^{\ast}_{i}$}
		\Comment{~\eqref{eq:vii_invar_cn}}
		\State{$\vjji \gets
			\frac{1}{n}\sum_{s=0}^{n-1}
			R_{j}^{\ast T} g_{n}^{s} R^{\ast}_{j}$}
		\Comment{~\eqref{eq:vii_invar_cn}}
		\EndFor
		\For{$\iInm$}
		\State{$s_{i,1}^{(j)}, s_{i,2}^{(j)}, s_{i,3}^{(j)} \gets \text{singular-values}\left(\viij\right), \quad j=1,\ldots,m, \quad j \neq i$}
		\Comment{using SVD}
		\State{$j^\ast \gets \argmin_{\substack{j=1,\ldots,m \\ j \neq i}} \left\| \left( s_{i,1}^{(j)}, s_{i,2}^{(j)}, s_{i,3}^{(j)} \right)^{T} - \Big( 1,0,0 \Big)^T \right\|_{2}$}
		\Comment{~\eqref{eq:vii_j_ast}}
		\State{$\vii \gets \viijast$}
		\EndFor
		\State {\bfseries Output:} $\vij, \ i\leq j \in [m]$.
		\Comment{$\vij \in \{\vivj, J \vivj J\}$}
	\end{algorithmic}
\end{algorithm*}

\section{Handedness synchronization}\label{sec:handedness_sync_cn}
At this stage, we have determined the estimates $\vij, \ i \leq j \in [m]$, of all relative viewing directions, where for each estimate $\vij$ either $\vij=\vivj$ or $\vij=J \vivj J$ independently of other estimates. In this section, we describe the ``handedness synchronization" step, where the task is to manipulate these estimates $\vij$ so that either $\vij=\vivj$ for all $i\le j \in [m]$, or $\vij=J \vivj J$ for all $i\le j \in [m]$. Once this procedure is completed, we can form the matrix $V$ from~\eqref{def:V} and infer the third rows $\vitrans$ of all rotation matrices $R_{i}$ (or all $J$-multiplied third rows $\vitrans J$) using~\eqref{eq:V_factorization}. Handedness synchronization is done in two steps. The first step synchronizes the estimates $\vij, \ \ijInm$. The second step synchronizes each of the remaining estimates $\vii, \ \iInm$, with the synchronized estimates from the first step. The reason for separating the synchronization of $\vij$ from that of $\vii$ will be clarified below.

\subsection{Step $1$: Synchronizing the estimates $\vij, \ \ijInm$}\label{sec:handedness_cn_step_1}
We employ a procedure similar to the one described in~\cite{3n}. Specifically, the task of synchronizing the set of estimates $\left\{\vij \mid \ijInm \right\}$ may be reduced to the task of partitioning this set into the following two disjoint sets
\begin{equation}\label{eq:q_ij split}
S_1 =  \left \{\vij \mid \vij = \vivj  \right \},\quad S_2= \left \{\vij \mid \vij = J \vivj J  \right \}.
\end{equation}
Indeed, once all estimates $\vij$ are partitioned into $S_1$ and $S_2$, we can choose either one of the sets (does not matter which one), and replace each estimate $\vij$ in it by $J \vij J$. As a result, since $J^2 = I$, we get that either $\vij = \vivj$ for all $\ijInm$, or $\vij = J \vivj J$ for all $\ijInm$, as needed. We now describe how we obtain such a partition~\eqref{eq:q_ij split}.

Let us define the ``handedness graph" $\signmatCn = (V,E)$ to be the undirected graph whose set of nodes $V$ consists of all estimates $\vij, \ \ijInm$, that is
\begin{equation}\label{eq:definition_V_signmatCn}
V = \left\{ \vij \mid i < j, \ i,j = 1,\ldots,m \right\},
\end{equation}
and whose set of edges $E$ consists of the undirected edges between all triplets of estimates $\vij$, $\vjk$, and $\vik$ (hence each triplet forms a ``triangle"), that is,
\begin{equation}\label{eq:definition_E_signmatCn}
E = \bigcup_{\ijk} \left \{ \left (\vij,\vjk \right ),\ \left (\vjk,\vik \right ), \left (\vik,\vij \right) \right \}.
\end{equation}
The weight of each edge is set to either $+1$ or $-1$ as explained below. For each $i < j < k \in [m]$, we consider the three estimates $\vij$, $\vjk$ and $\vik$, along with the ``triangle" they form in the graph. The goal is to set the weight $+1$ to all edges of the triangle whose incident nodes correspond to estimates that belong to the same set in~\eqref{eq:q_ij split}, and to set the weight of all other edges to $-1$. A crucial observation is that for any $\ijkInm$,
\begin{equation}\label{eq:J_sync_crucial_obs}
\vivj \vjvk = \vivk.
\end{equation}
As such, we check which of the following expressions
\begin{equation}\label{enum:four_poss_cn}
\begin{array}{ll}
1. \; \vij \vjk - \vik & \quad 3. \; \vij \blueJ \vjk \blueJ - \vik \\
2. \; \blueJ \vij \blueJ \vjk - \vik & \quad 4. \; \vij \vjk - \blueJ \vik \blueJ \;
\end{array}
\end{equation}
equals the $3 \times 3$ zero matrix, and assign the weight of each of the three edges in the corresponding triangle as illustrated in Figure~\ref{fig:triangles_cn}. The expressions listed in~\eqref{enum:four_poss_cn} allow to decide for each triplet $\vij$, $\vjk$, $\vik$ which nodes belong to the same set in~\eqref{eq:q_ij split} and which belong to different sets. For three nodes the only two possibilities are that either all nodes belong to the same set (case~1 in~\eqref{enum:four_poss_cn}), or that one node belongs to one of the sets and the remaining two nodes belong to the other set (cases 2, 3, 4 in~\eqref{enum:four_poss_cn}). Thus, by~\eqref{enum:four_poss_cn} we determine the partition~\eqref{eq:q_ij split} ``locally'' for each triplet of nodes. For example, if all three estimates belong to the same set in~\eqref{eq:q_ij split} (i.e., either all belong to $S_1$, or all belong to $S_2$), then the first expression in~\eqref{enum:four_poss_cn} is equal to the zero matrix, meaning that the weights of all three edges in the corresponding triangle in the graph are set to~$+1$~(as per Figure~\ref{fig:conf1_cn}). Indeed, in case $\vij$, $\vjk$, and $\vik$ belong to $S_1$, then since $v_{j}^{T} v_{j} = 1$, we get
$$
\vij \vjk - \vik = \vivj \vjvk - \vivk = 0,
$$
and in case all these three estimates belong to $S_2$, then since $J^2 = I$, it also follows that
$$
\vij \vjk - \vik = J \vivj J J \vjvk J - J \vivk J = 0.
$$

\begin{figure}
	\centering
	\subfloat[$\vij \vjk - \vik = 0$]{
		\label{fig:conf1_cn}
		\centering
		\begin{tikzpicture}[scale=0.8]	
		\draw [blue,line width=1.5pt, fill=gray!2] (10,0) -- (11.5,2.5981) -- (13,0) -- cycle;
		
		\coordinate[label={left:$\vij$}]  (A) at (10,0);
		\coordinate[label={right:$\vjk$}] (B) at (13,0);
		\coordinate[label={above:$\vik$}] (C) at (11.5,2.5981);
		
		\coordinate[label={[red]below:$1$}](c) at ($ (A)!.5!(B) $);
		\coordinate[label={[red]left:$1$}] (b) at ($ (A)!.5!(C) $);
		\coordinate[label={[red]right:$1$}](a) at ($ (B)!.5!(C) $);
		
		\end{tikzpicture}
	}
	\subfloat[$\blueJ \vij \blueJ \vjk - \vik = 0$]{
		\label{fig:conf2_cn}
		\centering
		\begin{tikzpicture}[scale=0.8]	
		\draw [blue,line width=1.5pt, fill=gray!2] (10,0) -- (11.5,2.5981) -- (13,0) -- cycle;
		
		\coordinate[label={left:$\vij$}]  (A) at (10,0);
		\coordinate[label={right:$\vjk$}] (B) at (13,0);
		\coordinate[label={above:$\vik$}] (C) at (11.5,2.5981);
		
		\coordinate[label={[red]below:$-1$}](c) at ($ (A)!.5!(B) $);
		\coordinate[label={[red]left:$-1$}] (b) at ($ (A)!.5!(C) $);
		\coordinate[label={[red]right:$1$}](a) at ($ (B)!.5!(C) $);
		
		\end{tikzpicture}
	}
	\\
	\subfloat[$\vij \blueJ \vjk \blueJ - \vik = 0$]{
		\label{fig:conf3_cn}
		\centering
		\begin{tikzpicture}[scale=0.8]	
		\draw [blue,line width=1.5pt, fill=gray!2] (10,0) -- (11.5,2.5981) -- (13,0) -- cycle;
		
		\coordinate[label={left:$\vij$}]  (A) at (10,0);
		\coordinate[label={right:$\vjk$}] (B) at (13,0);
		\coordinate[label={above:$\vik$}] (C) at (11.5,2.5981);
		
		\coordinate[label={[red]below:$-1$}](c) at ($ (A)!.5!(B) $);
		\coordinate[label={[red]left:$1$}] (b) at ($ (A)!.5!(C) $);
		\coordinate[label={[red]right:$-1$}](a) at ($ (B)!.5!(C) $);
		
		\end{tikzpicture}
	}
	\subfloat[$\vij \vjk - \blueJ \vik \blueJ = 0$]{
		\label{fig:conf4_cn}
		\centering
		\begin{tikzpicture}[scale=0.8]	
		\draw [blue,line width=1.5pt, fill=gray!2] (10,0) -- (11.5,2.5981) -- (13,0) -- cycle;
		
		\coordinate[label={left:$\vij$}]  (A) at (10,0);
		\coordinate[label={right:$\vjk$}] (B) at (13,0);
		\coordinate[label={above:$\vik$}] (C) at (11.5,2.5981);
		
		\coordinate[label={[red]below:$1$}](c) at ($ (A)!.5!(B) $);
		\coordinate[label={[red]left:$-1$}] (b) at ($ (A)!.5!(C) $);
		\coordinate[label={[red]right:$-1$}](a) at ($ (B)!.5!(C) $);
		
		\end{tikzpicture}
	}
	\caption{Subgraphs (triangles) of the handedness graph $\signmatCn$ corresponding to the four configurations in~\eqref{enum:four_poss_cn}. The edges have weight equal to $+1$ if the incident nodes belong to the same set in~\eqref{eq:q_ij split}, and equal to $-1$ otherwise.}
	\label{fig:triangles_cn}
\end{figure}
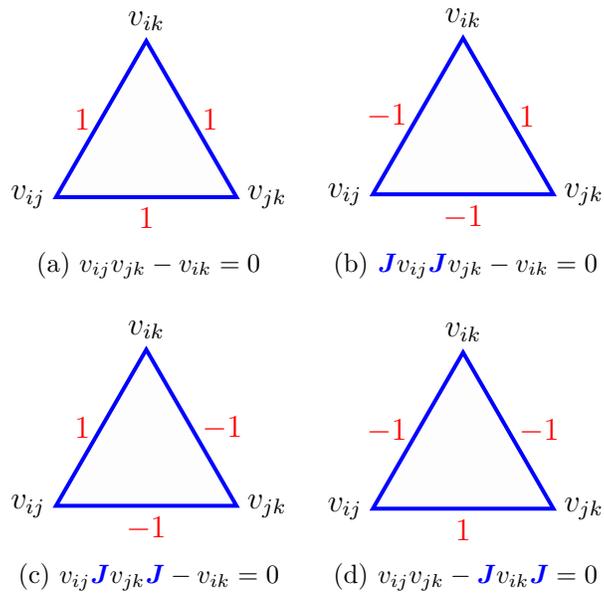

Once we have determined the local assignment of each triplet of nodes (via the weights on the edges between the vertices corresponding to the nodes), we obtain a global partition into the sets in~\eqref{eq:q_ij split} similarly to~\cite{3n}. Specifically, we define the weighted adjacency matrix of $\signmatCn$, also denoted by $\signmatCn$, as the $\binom{m}{2} \times \binom{m}{2}$ matrix whose entries are given by
\begin{equation}\label{def:signmat_cn}
\signmatCn_{(i,j),(k,l)} =
\begin{cases}
\hphantom{-}1,     &  \begin{array}{@{}l@{}}\text{if}\ \left\vert \left \{i,j \right \} \cap  \left \{k,l \right \}\right\vert = 1, \text{ and } \vij,\vkl \text{ are in the same set of~\eqref{eq:q_ij split},}\end{array}\\
-1,     &  \begin{array}{@{}l@{}}\text{if}\ \left\vert \left \{i,j \right \} \cap  \left \{k,l \right \}\right\vert = 1, \text{ and } \vij,\vkl \text{ are in different sets of~\eqref{eq:q_ij split},}\end{array}\\
\hphantom{-}0,     &  \text{if $\left\vert \left \{i,j \right \} \cap  \left \{k,l \right \}\right\vert = 0.$}
\end{cases}
\end{equation}
We then calculate the eigenvector $\evCn$ that corresponds to the leading eigenvalue of the matrix $\signmatCn$. As was shown in~\cite{3n}, this eigenvalue has multiplicity one and its corresponding eigenvector $\evCn \in  \left \{-1,1 \right \}^{\binom{m}{2}}$ encodes the set membership of the estimates. Specifically, if $\evCn(i,j)=1$ then $\vij$ belongs to one of the sets of~\eqref{eq:q_ij split}, and if $\evCn(i,j)=-1$ then $\vij$ belongs to the other set of~\eqref{eq:q_ij split}. As such, by $J$-conjugating all estimates in either one of the sets we are guaranteed that either $\vij = \vivj$ for all $\ijInm$, or $\vij = J \vivj J$ for all $\ijInm$.

Notice that, in practice, the estimates $\vij$ are computed from noisy projection-images, and thus for many triplets of estimates none of the four expressions listed in~\eqref{enum:four_poss_cn} might be equal exactly to the zero matrix. Thus, instead, we search for the expression that is as close as possible to the zero matrix. Specifically, we minimize
\begin{equation}\label{eq:Cijk_cn}
P_{ijk}(\mu_{ij},\mu_{jk})
=
||J^{\mu_{ij}} \vij J^{\mu_{ij}} \cdot J^{\mu_{jk}} \vjk J^{\mu_{jk}} - J^{\mu_{ik}} \vik J^{\mu_{ik}}||_F
\end{equation}
over $\mu_{ij},\mu_{jk},\mu_{ik}\in \{0,1\}$, subject to the constraint that $\mu_{ij} + \mu_{jk} + \mu_{ik} \le 1$, where each possible triplet $\left ( \mu_{ij}, \mu_{jk}, \mu_{ik}\right ) \in \{0,1\}^{3}$ corresponds to one of the four expressions in~\eqref{enum:four_poss_cn}, and $\norm{\cdot}_{F}$ denotes the Frobenius norm.

\subsection{Step $2$: Synchronizing the estimates $\vii, \ \iInm$}\label{sec:handedness_cn_step_2}
The second step of handedness synchronization consists of synchronizing each of the estimates $\vii$ so that if the previous step (described in Section~\ref{sec:handedness_cn_step_1}) resulted in every $\vij$ satisfying $\vij = \vivj$, then the goal is to enforce that $\vii = \vivi$ for every $\iInm$. Otherwise, if the output of the previous step is such that every $\vij$ satisfies $\vij = J \vivj J$, then the goal is to enforce that $\vii = J \vivi J$ for every $\iInm$. Recall, however, that it is unknown which of the above two possible outputs was obtained. Nevertheless, since $\vitrans v_{i} = 1$ for any $\iInm$, and since $J^{2} = I$, it follows that for any $j \in [m]$,
\begin{gather}
(\vivi) (\vivj) = \vivj, \label{eq:vivi_sync_case_1}\\
(J\vivi J) (J \vivj J) = J \vivj J. \label{eq:vivi_sync_case_2}
\end{gather}
As such, we can in principle synchronize every $\vii$ by choosing an arbitrary $\vij$ such that $j \neq i$, and reset $\vii$ as follows:
\begin{subnumcases}{\vii \gets }
\hphantom{J}\vii, & if $\vii \vij = \vij$, \label{eq:vii_sync_case_1}\\
J \vii J, & if $J \vii J \vij = \vij$. \label{eq:vii_sync_case_2}
\end{subnumcases}
Indeed, if $\vij = \vivj$ (which we cannot tell), then by~\eqref{eq:vivi_sync_case_1}, case~\eqref{eq:vii_sync_case_1} occurs if $\vii = \vivi$ and therefore $\vii$ should not be altered. Similarly, if $\vij = \vivj$ then by~\eqref{eq:vivi_sync_case_1}, case~\eqref{eq:vii_sync_case_2} occurs if $\vii = J \vivi J$ (since $J^2 = I$) so that by assigning $\vii \gets J \vii J$ we indeed end up having $\vii = \vivi$, as needed. The case of $\vij = J \vivj J$ is analogous in light of~\eqref{eq:vivi_sync_case_2}.

In practice, however, since each of the above estimates is computed from noisy images, it might be that neither~\eqref{eq:vii_sync_case_1} nor~\eqref{eq:vii_sync_case_2} hold. In addition, it is desirable to synchronize each $\vii$ based on all estimates $\vij$ such that $j > i$ rather than only using a single such estimate. Thus, instead, each estimate $\vii$ is set according to the majority-vote over all $\vij$. Specifically, let us denote by $\sign(x)$ the sign function~(which equals $1$ if $x\ge 0$ and equals $-1$ otherwise). Then, for every $\iInm$ we reset $\vii$ to be $J \vii J$ in case that
\begin{equation}\label{eq:vote_vii}
\sum_{\substack{j \in [m] \\ j > i}}\sign \left ( || J \vii J \vij - \vij ||_F - || \vii \vij - \vij ||_F \right ) < 0.
\end{equation}
Once this second step is completed, we are guaranteed that all estimates are synchronized, i.e., either $\vij=\vivj$ for all $i\le j \in [m]$, or $\vij=J \vivj J$ for all $i\le j \in [m]$. As such, we then construct the matrix $V$ of~\eqref{def:V}, factorize it as in~\eqref{eq:V_factorization}, and obtain all third rows $v_{1}^{T},\ldots,v_{m}^{T}$ (or $v_{1}^{T} J,\ldots,v_{m}^{T} J$).
The procedure for handedness synchronization is summarized in Algorithm~\ref{alg:Jsync_cn}.

Note that synchronizing the estimates $\vij, \ \ijInm,$ is done separately from  synchronizing the estimates $\vii$. Synchronizing $\vij$ based on~\eqref{eq:J_sync_crucial_obs} involves triplets of indices $i<j<k$, and so is the construction of the graph $\signmatCn$ in~\eqref{def:signmat_cn}. On the other hand, the synchronization of $\vii$ is based on pairs of indices (see~\eqref{eq:vii_sync_case_1} and~\eqref{eq:vii_sync_case_2}). So while it may be better to synchronize $\vij$ and $\vii$ simultaneously, it is currently unclear how to combine the pairwise information required for synchronizing $\vii$ into the triplets stricture required for constructing $\signmatCn$.

\begin{algorithm*}
	\caption{Handedness synchronization of the relative viewing directions estimates}\label{alg:Jsync_cn}
	\begin{algorithmic}[1]
		\State {\bfseries Input:} Relative viewing directions estimates $\vij, \ i \le j \in [m]$ (computed by Algorithm~\ref{alg:vijEstCn}).
		\State {\bfseries Initialization:} Matrix $\signmatCn$ of size $\binom{m}{2} \times \binom{m}{2}$ with all entries set to zero.
		\For{$\ijkInm$}
		\Comment{Section~\ref{sec:handedness_cn_step_1}}
		\State{$\left (\mu_{ij}^\ast,\mu_{jk}^\ast,\mu_{ik}^\ast \right ) \gets \mathop{\rm argmin}\limits_{\substack{\mu_{ij},\mu_{jk},\mu_{ik} \in \{0,1\} \\ \mu_{ij}+\mu_{jk}+\mu_{ik} \le 1}} P_{ijk}(\mu_{ij},\mu_{jk},\mu_{ik})$} \Comment{~\eqref{eq:Cijk_cn}}
		\State{$\signmatCn_{(j,k),(i,j)}, \  \signmatCn_{(i,j),(j,k)} \gets (-1)^{(\mu_{ij}^\ast - \mu_{jk}^\ast)}$} \Comment{$\signmatCn$ is an undirected graph (see~\eqref{def:signmat_cn})}

		\State{$\signmatCn_{(k,i),(j,k)}, \  \signmatCn_{(j,k),(k,i)} \gets (-1)^{(\mu_{jk}^\ast - \mu_{ik}^\ast)}$}

		\State{$\signmatCn_{(i,j),(k,i)}, \  \signmatCn_{(k,i),(i,j)} \gets (-1)^{(\mu_{ik}^\ast - \mu_{ij}^\ast)}$} 		
		\EndFor
		
		\State{$\evCn \gets \mathop{\rm argmax} \limits_{\|v\|=1} v^T \signmatCn v$}
		\Comment{$\evCn$ is the eigenvector of the leading eigenvalue of $\signmatCn$}
		\For{$\ijInm$}
		\If{$\evCn(i,j)<0$}
		\State{$\vij \gets J \vij J$}
		\EndIf
		\EndFor
		\For{$\iInm$}
		\Comment{Section~\ref{sec:handedness_cn_step_2}}
		\If{$\sum\limits_{\substack{j \in [m] \\ j > i}} \sign \left (  || J \vii J \vij - \vij ||_F - || \vii  \vij - \vij ||_F\right ) < 0$ }
		\State{$\vii \gets J \vii J$}
		\EndIf
		
		\EndFor
		\State {\bfseries Output:} $\vij, \ i\leq j \in [m]$.
	\end{algorithmic}
\end{algorithm*}

\section{In-plane rotation angles estimation}\label{sec:in_plane_cn}
At this stage, all third rows $v_{1}^{T},\ldots,v_{m}^{T}$ (or $v_{1}^{T} J,\ldots,v_{m}^{T} J$) of the rotation matrices $R_{1},\ldots,R_{m}$ (or $R_{1} J,\ldots,R_{m}J$) have been obtained. In this section, we describe a procedure to determine the remaining first two rows in each of these rotation matrices. The following lemma shows that any two such rows are determined by a single parameter, namely an in-plane rotation angle about the $z$-axis (the axis of symmetry).

\begin{lemma}\label{lemma:in_plane_single_param}
	Let $R$ and $\tilde{R}$ be any two rotation matrices with identical third rows. Then, for any $n \in \mathbb{N}$, there exist a unique $\theta \in [0,2 \pi/n)$ and a unique $s \in [n]$ such that
	\begin{equation}\label{eq:inplane}
		g_{n}^{s} R = R_z ( \theta ) \tilde{R},
	\end{equation}
	where $R_z(\theta)$ (given by~\eqref{def:R_x_R_z}) is the matrix that rotates vectors by an angle $\theta$ about the $z$-axis.
\end{lemma}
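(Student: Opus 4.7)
The plan is to reduce the claim to the elementary fact that rotations which fix the $z$-axis form a one-parameter subgroup generated by $R_z(\cdot)$, and that $g_n = R_z(2\pi/n)$ lies inside this subgroup. First I would exploit the hypothesis that $R$ and $\tilde R$ share a third row. Writing the common third row as $v^T$ so that $R^T e_3 = \tilde R^T e_3 = v$ (with $e_3 = (0,0,1)^T$), the composition $Q := R\tilde R^T \in SO(3)$ satisfies
\begin{equation*}
Q e_3 = R\tilde R^T e_3 = R v = R R^T e_3 = e_3,
\end{equation*}
so $Q$ fixes the $z$-axis. Hence $Q = R_z(\phi)$ for a unique $\phi \in [0,2\pi)$, giving $R = R_z(\phi)\tilde R$.

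Next I would use that $g_n = R_z(2\pi/n)$ (see~\eqref{def:g}), and hence $g_n^s R = R_z\bigl(\phi + 2\pi s/n\bigr)\tilde R$ for every $s \in \mathbb{Z}$. The problem therefore reduces to the purely arithmetical claim that there is a unique pair $(s,\theta) \in [n] \times [0,2\pi/n)$ such that
\begin{equation*}
\phi + \tfrac{2\pi s}{n} \equiv \theta \pmod{2\pi}.
\end{equation*}
For existence, perform Euclidean division: write $\phi = \tfrac{2\pi k}{n} + \theta$ with the unique $k \in \{0,\dots,n-1\}$ and $\theta \in [0,2\pi/n)$, then choose $s \in [n]$ with $s \equiv -k \pmod n$.

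For uniqueness, suppose $\theta_1 + 2\pi s_1/n \equiv \theta_2 + 2\pi s_2/n \pmod{2\pi}$ with both $\theta_j \in [0,2\pi/n)$ and $s_j \in [n]$. Then $\theta_1 - \theta_2 = 2\pi(s_2 - s_1 + k n)/n$ for some $k \in \mathbb{Z}$, and the bound $|\theta_1 - \theta_2| < 2\pi/n$ forces $s_2 - s_1 + kn = 0$; since $|s_1 - s_2| < n$, this yields $k=0$, $s_1=s_2$, and $\theta_1=\theta_2$.

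I do not foresee a serious obstacle: the only non-bookkeeping step is observing that equality of the third rows of $R$ and $\tilde R$ forces $R\tilde R^T$ to stabilize $e_3$. Everything that follows is the elementary observation that the cyclic subgroup $\langle g_n\rangle$ tiles the circle of in-plane rotations $R_z([0,2\pi))$ into arcs of length $2\pi/n$, together with uniqueness of the remainder modulo $2\pi/n$.
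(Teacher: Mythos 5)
Your proof is correct and takes essentially the same route as the paper's: both arguments first show that $R\tilde{R}^{T}$ is a rotation about the $z$-axis (you via $R\tilde{R}^{T}e_{3}=e_{3}$, the paper via the orthogonality of the rows forcing the block form), so $R=R_{z}(\phi)\tilde{R}$ for a unique $\phi\in[0,2\pi)$, and then absorb a power of $g_{n}=R_{z}(2\pi/n)$ to move the angle into $[0,2\pi/n)$. Your explicit Euclidean-division existence step and modular uniqueness bound are just a more detailed rendering of the paper's choice of the unique $s\in[n]$ with $\left(\frac{2\pi s}{n}+\phi\right)\bmod 2\pi\in[0,2\pi/n)$.
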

The proof of Lemma~\ref{lemma:in_plane_single_param} is given in Appendix~\ref{app:Proof of lemma in_plane_single_param}. In light of Lemma~\ref{lemma:in_plane_single_param}, we next form $m$ rotation matrices $\tilde{R}_{1},\ldots,\tilde{R}_{m}$ by setting the third row of each $\tilde{R}_{i}$ to be equal to the estimate of the third row of $R_{i}$ (see Section~\ref{sec:handedness_sync_cn}), and by arbitrarily setting the first two rows of each $\tilde{R}_{i}$ (while ensuring that $\tilde{R}_{i} \in SO(3)$). As a result, due to Lemma~\ref{lemma:in_plane_single_param}, for any $\iInm$, there only remains to recover the in-plane rotation angle $\theta_{i} \in [0,2 \pi /n)$, and form the matrix $R_z ( \theta_{i} ) \tilde{R_{i}}$.
In principle, all such angles $\theta_{i}$ may be determined in a sequential manner. However, we instead employ a more robust approach in which all $\theta_{i}$ are determined in a single step. Specifically, as we next show, the task of determining all angles $\theta_{i} \in [0, 2 \pi/n)$ may be reduced to finding for every $(i,j), \ i \leq j \in [m]$, the relative in-plane rotation angles
\begin{equation}\label{def:theta_ij_ast}
\theta_{ij} \coloneqq -\theta_{i}+\theta_{j} \ \bmod \ 2\pi/n.
\end{equation}
Indeed, by the definition of $\theta_{ij}$, there exist unique $k_{ij} \in \{0,1\}$, $i \leq j \in [m]$, such that
\begin{equation}\label{eq:pre_G}
\theta_{ij} = -\theta_{i}+\theta_{j}+\frac{2\pi k_{ij}}{n} \in [0,2 \pi/n).
\end{equation}
Thus, by letting $Q$ be the $m \times m$ matrix whose $(i,j)$-th entry $Q_{ij}$ is equal to $e^{\imath n \theta_{ij}}$, it follows that
\begin{equation}\label{def:Q}
Q_{ij}
=
e^{\imath n\theta_{ij}}
=
e^{\imath (n(-\theta_{i}+\theta_{j}) + 2\pi k_{ij})}
=
e^{\imath n(-\theta_{i}+\theta_{j})}, \quad i \leq j = 1,\ldots,m.
\end{equation}
As such, $Q$ is a hermitian rank-$1$ matrix whose factorization is given by
\begin{equation}\label{eq:Q_decomposition}
Q = q q^{H}, \quad q = \left(e^{\imath n\theta_{1}},\ldots,e^{\imath n\theta_{m}}\right)^{H},
\end{equation}
from which all angles $\theta_{i} \in [0, 2 \pi/n)$ may be retrieved using
\begin{equation}\label{eq:theta_i_retrieve_from_q}
\theta_{i} = \frac{1}{n} \arccos \left( \frac{q_i + \conjugatet{q_i}}{2} \right).
\end{equation}
Since $Q$ is a rank-$1$ matrix, its factorization~\eqref{eq:Q_decomposition} yields the eigenvector $\alpha q$ where $\alpha \in \mathbb{C}$ with $\abs{\alpha}=1$ is arbitrary and unknown. As a result,~\eqref{eq:theta_i_retrieve_from_q} actually yields the angles $\theta + \theta_{i}, \ \iInm$, for some arbitrary, unknown $\theta \in [0,2\pi)$. This poses no problem, since for any $\theta \in \mathbb{R}$, recovering $R_z ( \theta + \theta_{i} ) \tilde{R_{i}} = R_z ( \theta ) R_{z}(\theta_{i} ) \tilde{R_{i}}$ is just as good, as we have the degree of freedom of applying any single in-plane rotation about the $z$-axis to all rotation matrices (see the paragraph following~\eqref{def:g}).
In light of~\eqref{eq:pre_G}--\eqref{eq:theta_i_retrieve_from_q}, we describe below how to determine all $\theta_{ij}$ of~\eqref{def:theta_ij_ast}. We then form the matrix $Q$ above, obtain its factorization~\eqref{eq:Q_decomposition} using SVD, and recover all in-plane rotation angles $\theta_{i}$ using~\eqref{eq:theta_i_retrieve_from_q}.

By applying Lemma~\ref{lemma:in_plane_single_param} to any two rotation matrices $R_{i}$ and $R_{j}$, we get that there exist $s_i, s_j \in [n]$ and unique angles $\theta_{i},\theta_{j} \in [0,2\pi/n)$, such that for any $s \in [n]$,
\begin{equation}\label{eq:RiRj_decomp}
\RigsRj =
\left( g_{n}^{-s_i}R_z\left(\theta_{i}\right) \tilde{R}_{i} \right)^{T} g_{n}^{s} \left( g_{n}^{-s_j}R_z\left(\theta_{j}\right) \tilde{R}_{j} \right)
=
\tilde{R}_{i}^{T} R_z\left(-\theta_{i} + \theta_{j} + \dfrac{2\pi s_{ij}}{n}\right) \tilde{R}_{j},
\end{equation}
where the last equality follows by denoting $s_{ij} = s_i+s-s_j$ and using the fact that $g_{n}^{s_{ij}} = R_z\left(\frac{2\pi s_{ij}}{n}\right)$. As such,
\begin{equation}\label{eq:theta_ij_special}
\begin{aligned}
\Bigg\{ \RigsRj \Bigg\}_{s \in [n]}
&=
\left\{
\tilde{R}_{i}^{T} R_z\left(-\theta_{i} + \theta_{j} + \dfrac{2\pi s}{n}\right) \tilde{R}_{j}
\right\}_{s \in [n]} \\
&=
\left\{
\tilde{R}_{i}^{T} R_z\left( \theta_{ij} + \dfrac{2\pi s}{n}\right) \tilde{R}_{j}
\right\}_{s \in [n]},
\end{aligned}
\end{equation}
where the second equality used~\eqref{def:theta_ij_ast} and the fact that $R_z(\phi) = R_z(\phi \ \text{mod} \ 2\pi)$ for any $\phi \in \mathbb{R}$. Given $\theta_{i}$ and $\theta_j$, exactly one of the angles among $-\theta_{i} + \theta_{j} + \dfrac{2\pi s}{n}$, $s \in [n]$, lies in $[0,2\pi/n)$. Thus, by~\eqref{def:theta_ij_ast} and~\eqref{eq:theta_ij_special}, $\theta_{ij}$ is the only angle in $[0,2\pi/n)$ which satisfies
\begin{equation}\label{eq:eq84}
\left\{ \RigsRj \right\}_{s \in [n]}
=
\Big\{
\tilde{R}_{i}^{T} R_z\left( \theta_{ij} + \dfrac{2\pi s}{n}\right) \tilde{R}_{j}
\Big\}_{s \in [n]}.
\end{equation}
Equation~\eqref{eq:eq84} is the set of all possible relative orientations between $\hatPi$ and $\hatPj$. Thus, if we know $\theta_{ij}$ in~\eqref{eq:eq84}, then the common lines between $\hatPi$ and $\hatPj$ induced by these relative orientations will perfectly correlate (see~\eqref{eq:clm_four_pairs_cls}). Therefore, for any $\ijInm$, we set the angle $\theta_{ij}$ to be the maximizer over all $\theta \in [0,2\pi/n)$ of
\begin{equation}\label{eq:theta_ij_opt}
\Upsilon\left( \theta \right) =
\prod_{s=0}^{n-1}
\operatorname{Re}
\int_{\xi}
\hatPi\left(\xi \cos \alpha_{ij}^{(\theta,s)},\xi \sin \alpha_{ij}^{(\theta,s)} \right)
\conjugatet{
\hatPj\left(\xi \cos \alpha_{ji}^{(\theta,s)},
\xi \sin \alpha_{ji}^{(\theta,s)} \right)}\,\mathrm{d}\xi,
\end{equation}
where each ray in each image in~\eqref{eq:theta_ij_opt} is normalized to have norm equal one, and where (in accordance with~\eqref{eq:hadani}), for any $s=0,\ldots,n-1$,
\begin{equation}\label{eq:alpha_ij_theta_s}
\begin{gathered}
\alpha_{ij}^{(\theta,s)} = \arctan \left(-\frac{\left( \tilde{R}_{i}^{T} R_z\left( \theta + \frac{2\pi s}{n}\right) \tilde{R}_{j}\right)_{1,3}}{\left( \tilde{R}_{i}^{T} R_z\left( \theta + \frac{2\pi s}{n}\right) \tilde{R}_{j}\right)_{2,3}}\right), \\
\alpha_{ji}^{(\theta,s)} = \arctan \left(-\frac{\left( \tilde{R}_{i}^{T} R_z\left( \theta + \frac{2\pi s}{n}\right) \tilde{R}_{j}\right)_{3,1}}{\left( \tilde{R}_{i}^{T} R_z\left( \theta + \frac{2\pi s}{n}\right) \tilde{R}_{j}\right)_{3,2}}\right).
\end{gathered}
\end{equation}
As a result, any $\theta_{ij}$ of~\eqref{def:theta_ij_ast} may be obtained by optimizing $\Upsilon$ of~\eqref{eq:theta_ij_opt} over all $\theta \in [0,2\pi/n)$. The procedure for determining all in-plane rotation angles is summarized in Algorithm~\ref{alg:inplane_est}.

\begin{algorithm*}
	\caption{Recover the in-plane rotation angles}\label{alg:inplane_est}
	\begin{algorithmic}[1]
		\State {\bfseries Input:}
		\begin{inlinelist}
			\item Images $\hatPi, \ i=1,\ldots m$.
			\item $\tilde{R}_{i} \in SO(3), \ \iInm$, where the third row of $\tilde{R}_{i}$ is equal to the estimate of the third row of $R_{i}$.
			\item Discretization parameter $K \in \mathbb{N}$
			\item Cyclic symmetry order~$n$.
		\end{inlinelist}
		\State {\bfseries Initialize:}
		\begin{inlinelist}
		\item Array $\Upsilon$ of length $K$.
		\item Matrix $Q$ of size $m \times m$ with all entries set to zero.
		\end{inlinelist}
		\For{$\ijInm$}
		\For{$k \in [K]$}
		\State{$\theta_k \gets \dfrac{2\pi}{nK}(k-1)$}
		\State{$\RijsThetaHat \gets \tilde{R}_{i}^{T} R_z\left( \theta_k + \dfrac{2\pi s}{n} \right) \tilde{R}_{j}, \quad s=0,\ldots,n-1$}
		\State{$\alpha_{ij}^{(\theta_k,s)} \gets \arctan \left( -\dfrac{\left( \RijsThetaHat \right)_{1,3}}{\left( \RijsThetaHat \right)_{2,3}}\right), \quad s=0,\ldots,n-1$}
		\Comment{~\eqref{eq:alpha_ij_theta_s}}
		\State{$\alpha_{ji}^{(\theta_k,s)} \gets \arctan \left( -\dfrac{\left( \RijsThetaHat \right)_{3,1}}{\left( \RijsThetaHat \right)_{3,2}}\right), \quad s=0,\ldots,n-1$}
		\Comment{~\eqref{eq:alpha_ij_theta_s}}
		
		\State{$\Upsilon\left( \theta_k \right) \gets
			\prod_{s=0}^{n-1}
			\operatorname{Re}
			\int_{\xi}
			\hatPi\left(\xi \cos \alpha_{ij}^{(\theta_k,s)},\xi \sin \alpha_{ij}^{(\theta_k,s)} \right)
			\conjugatet{
			\hatPj\left(\xi \cos \alpha_{ji}^{(\theta_k,s)},
			\xi \sin \alpha_{ji}^{(\theta_k,s)} \right)}\,\mathrm{d}\xi$}
			\Comment{$\hatPi$ and $\hatPj$ should be normalized (see~\eqref{eq:theta_ij_opt})}
		\EndFor
		\State{$k^\ast \gets \argmax_{k} \Upsilon\left( \theta_k \right)$}
		
		\State{$\theta_{ij} \gets \dfrac{2\pi}{nK}(k^\ast-1)$ }
		\Comment{$\theta_{ij} = -\theta_{i}+\theta_{j} \bmod 2\pi/n$}
		\State{$Q_{ij} \gets e^{\imath n\theta_{ij}}$,\quad $Q_{ji} \gets e^{-\imath n\theta_{ij}}$}
		\Comment{~\eqref{def:Q}}
		\EndFor
		\State{$Q \gets Q + I$}
		\Comment{set $Q_{ii}=1$  in accordance with~\eqref{def:Q} and~\eqref{eq:Q_decomposition}}
		\State{$q \gets \mathop{\rm argmax} \limits_{\|u\|=\sqrt{m}} {u^{H} Q u}$} \Comment{$q$ is the eigenvector of the leading eigenvalue of $Q$~\eqref{eq:Q_decomposition}}
		\For{$\iInm$}
		\State{$\theta_{i} \gets \frac{1}{n} \arccos \left( \frac{q_i + \conjugatet{q_i}}{2} \right)$}
		\Comment{$q_i = e^{-\imath n\theta_{i}}$~\eqref{eq:theta_i_retrieve_from_q}}
		\EndFor
		\State {\bfseries Output:} $\theta_{i}, \ \iInm$.
		\Comment{$R_z\left( \theta_{i} \right) \tilde{R} = g_{n}^{s_{i}} R_{i}$ for some $s_{i} \in [n]$}
	\end{algorithmic}
\end{algorithm*}

Based on Algorithms~\ref{alg:vijEstCn},~\ref{alg:Jsync_cn},~\ref{alg:inplane_est} and on Section~\ref{sec:outline_cn}, the end-to-end algorithm for recovering all rotation matrices $R_{1},\ldots,R_{m}$ is summarized in Algorithm~\ref{alg:end_2_end}. Note that the fact that we assumed in~\eqref{def:g} without loss of generality that the axis of symmetry coincides with the $z$-axis, means that any reconstructed volume that is based on the output rotation matrices of Algorithm~\ref{alg:end_2_end} will have its axis of symmetry aligned with the $z$-axis as well.

\begin{algorithm*}
	\caption{Orientations estimation for projection-images of molecules with $C_{n}$ symmetry}\label{alg:end_2_end}
	\begin{algorithmic}[1]
		\State {\bfseries Input:}
		\begin{inlinelist}
			\item Images $\hatPi, \ i=1,\ldots m$.
			\item Cyclic symmetry order $n \in \mathbb{N}, \ n > 2$.
		\end{inlinelist}
		\State {\bfseries Initialize:} A matrix $V$ of size $3m \times 3m$.
		\State{Find estimates $\left\{ \vij \right\}_{i<j=1}^m$ and $\left\{ \vii \right\}_{i=1}^{m}$ using Algorithm~\ref{alg:vijEstCn}}
		\State{Synchronize the handedness of $\left\{ \vij \right\}_{i<j=1}^m$ and $\left\{ \vii \right\}_{i=1}^{m}$ using Algorithm~\ref{alg:Jsync_cn}}
		
		\For{$\ijInm$}
		\State{$V_{ij} \gets \vij,\quad V_{ji} \gets \vijtrans,\quad V_{ii} \gets \vii$}
		\Comment{~\eqref{def:V}}
		\EndFor
		
		\State{$\tilde{v} \gets \mathop{\rm argmax} \limits_{\|u\|=1} u^{T} V u$} \Comment{$\mathbb{R}^{3m} \ni \tilde{v}=\left(\tilde{v}_{1}^{T},\ldots,\tilde{v}_{m}^{T}\right)^{T}$}
		\For{$\iInm$}
		\State{Construct an arbitrary $\tilde{R}_{i} \in SO(3)$ whose third row is $\frac{\tilde{v}_{i}^T}{\|\tilde{v}_{i}\|}$}
		\EndFor
		\State{Estimate the in-plane rotation angles $\left\{ \theta_{i} \right\}_{i=1}^{m}$ using Algorithm~\ref{alg:inplane_est}}
		\For{$\iInm$}
		\State{$R_{i}^{\ast} \gets R_{z}\left( \theta_{i} \right) \tilde{R}_{i}$}
		\Comment{~\eqref{eq:theta_i_in_2pi_over_n}}
		\EndFor
		\State {\bfseries Output:} $R_{i}^{\ast}, \ \iInm$.
				\Comment{$\hat{P}_{R_{i}^{\ast}} = \hatPi$}
	\end{algorithmic}
\end{algorithm*}

\section{Numerical experiments}\label{sec:Numerical_experiments_cn}
We implemented Algorithm~\ref{alg:end_2_end} in Matlab and tested it on both simulated and experimental projection-images. Section~\ref{sec:implementation_details_cn} provides some of the implementation details of Algorithm~\ref{alg:end_2_end}, and Section~\ref{sec:complexity_analysis} analyzes its time and space complexity. Section~\ref{sec:experiments_simulated} describes the experiments conducted using noisy simulated projection-images of the three-dimensional density map EMD-$6458$~\cite{EMD_6458_paper}, which has $C_{11}$ symmetry. Section~\ref{sec:experiments_c3_10004} presents results for the Trimeric HIV-$1$ envelope glycoprotein~\cite{EMPIAR-10004} which has $C_3$ symmetry. Section~\ref{sec:experiments_c4_10081} focuses on the Human HCN1 hyperpolarization-activated cyclic nucleotide-gated ion channel~\cite{EMPIAR-10081} which possesses $C_4$ symmetry. Finally, Section~\ref{sec:experiments_c7_groel} reports results for the GroEL protein~\cite{GroEL_Ludtke}. Technically, GroEL has both a $7$-fold cyclic symmetry about some axis, as well as a $2$-fold cyclic symmetry (about a different axis which is perpendicular to the above-mentioned axis). Thus, strictly speaking, GroEL has $D_7$ (dihedral) symmetry. We nevertheless applied Algorithm~\ref{alg:end_2_end} to it while taking into account only its $7$-fold cyclic symmetry (i.e., we treated the molecule as $C_7$). The code of all algorithms presented in this paper is available as part of the ASPIRE software package~\cite{aspire}.

\subsection{Implementation details}\label{sec:implementation_details_cn}
All tests were executed on a dual Intel Xeon X5560 CPU (12 cores in total), with 96GB of RAM running Linux and an nVidia GTX TITAN GPU. Whenever possible, all $12$ cores were used simultaneously, either explicitly using Matlab's \texttt{parfor}, or implicitly, by employing Matlab's implementation of BLAS, which takes advantage of multi-core computing.
Some loop-intensive parts of the algorithm were implemented in \texttt{C} as Matlab \texttt{mex} files.
We next describe the discretization of the set $SO_{n}(3)$ of~\eqref{def:SO_n_3}. As implied by Lemma~\ref{lemma:in_plane_single_param}, any rotation $R \in SO_{n}(3)$ may be written as $R = R_{z}(\theta) \tilde{R}$ where
\begin{inlinelist}
	\item $\tilde{R}$ is a rotation matrix whose third row is given by $(\sin\tilde{\theta}\cos\tilde{\phi},\sin\tilde{\theta}\sin\tilde{\phi},\cos\tilde{\theta})^{T}$ for some angles $\tilde{\phi} \in [0,2\pi)$ and $\tilde{\theta} \in [0,\pi)$, and whose first two rows are set arbitrarily, and
	\item $R_{z}(\theta)$ is the matrix~\eqref{def:R_x_R_z} that rotates a vector by an angle $\theta \in [0,2\pi/n)$ about the $z$-axis.
\end{inlinelist}
As such, we sampled $360,000/n$ evenly-spaced points~$(\tilde{\phi},\tilde{\theta},\theta)^{T} \in [0,2\pi) \times [0,\pi) \times [0,2\pi/n)$, and formed a matrix $R \in SO_{n}(3)$ from each such point (we found experimentally that the resulting number of rotations is adequate).

\subsection{Complexity analysis}\label{sec:complexity_analysis}
Strictly speaking, the computational complexity of Algorithm~\ref{alg:end_2_end} is cubic in the number of images due to Algorithm~\ref{alg:Jsync_cn}. However, in practice, the running time of Algorithm~\ref{alg:end_2_end} is governed by Algorithm~\ref{alg:vijEstCn}, which is quadratic in both the number of images as well as in the size of $SO_{n}(3)$ of~\eqref{def:SO_n_3}. As such, the computational complexity of Algorithm~\ref{alg:end_2_end} is $\mathcal{O}(m^3 + m^2 k^2)$ where $k$ is the number of rotations in the discretization of the set $SO_{n}(3)$. The space (storage) complexity of Algorithm~\ref{alg:end_2_end} is $\mathcal{O}(k)$.

As of timing, using a simulated set of $m=500$ projection-images of a molecule with $C_3$ symmetry, it took $233$ seconds to compute all relative viewing directions $\vij$, $3.1$~seconds to resolve the handedness synchronization, $41$~seconds to estimate the in-plane rotation angles, and $601$~seconds in order to reconstruct the density map.

\subsection{Simulated noisy data}\label{sec:experiments_simulated}
We first tested Algorithm~\ref{alg:end_2_end} on several datasets of simulated noisy images. Specifically, we generated four sets of $m$ projection-images, with $m=500,\ 1000,\ 1500,\ 2000$. The projection-images in each set were generated from the three-dimensional density map EMD-$6458$~\cite{EMD_6458_paper} (which possesses $C_{11}$ symmetry) available in the Electron Microscopy Data Bank (EMDB)~\cite{EMDB}. The orientation $R_{i}$ of each projection-image was drawn uniformly at random from the uniform distribution on $SO(3)$, and the size of each projection-image was $65 \times 65$ pixels. 
We generated four different copies of each of these sets of projection-images by corrupting the projection-images in each set by an additive Gaussian white noise with $\text{SNR} \in \{1/2, 1/4, 1/8, 1/16\}$, where SNR (signal-to-noise ratio) is defined as the ratio between the energy (variance) of the signal and the energy of the noise.

We then applied Algorithm~\ref{alg:end_2_end} to each of the resulting sixteen sets of projection-images. For each such set we obtained an estimated set of rotations, which we denote by $\{ \tilde{R}_{i}\}_{i=1}^{m}$, and we reconstructed the volume using the noisy projection-images and the estimated set of rotations. We next describe how we assessed the degree by which each of these sets of estimated rotations $\{ \tilde{R}_{i}\}_{i=1}^{m}$ differs from the true rotations $\left \{ R_{i}\right \}_{i=1}^{m}$. To this end, we sampled the Fourier-transform $\hat{P}_{R_{i}}$ of each projection-image $P_{R_{i}}$ along the direction vectors $c_{i}^{(l)}=\left ( \cos\ 2\pi l/L, \sin\ 2\pi l/L \right )$, $l=0,\ldots,L-1$ where $L=360$, and then lifted these direction vectors to $\mathbb{R}^3$, i.e., we defined $c_{i}^{(l)}=\left ( \cos\ 2\pi l/L, \sin\ 2\pi l/L, 0 \right )$, $l=0,\ldots,L-1$. We then computed for each pair $(i,l)$ the angle $\epsilon_{il}$ between $R_{i} c_{i}^{(l)}$ and $\tilde{R}_{i} c_{i}^{(l)}$, that is
\begin{equation}\label{eq:epserr}
\begin{gathered}
\epsilon_{il} = \arccos \left \langle R_{i} c_{i}^{(l)},\tilde{R}_{i} c_{i}^{(l)} \right \rangle, \quad
i=1,\ldots,m,\quad l=0,\ldots,L-1,
\end{gathered}
\end{equation}
which measures the error in the estimation of the three-dimensional position of the Fourier ray $c_{i}^{(l)}$. In addition, the quality of each of the reconstructions was assessed by the resolution obtained using the $0.5$~criterion of the Fourier shell correlation (FSC) curve~\cite{vanHeel_Schatz} with respect to the reference density map EMD-$6458$ available in~\cite{EMDB}.
Table~\ref{tbl:simulation_c11} lists the median angular error of $\epsilon_{il}$ (over all $i$ and $l$) for each of the sets along with the obtained resolutions. The table illustrates the robustness of Algorithm~\ref{alg:end_2_end} to noise as the number~$m$ of images increases.

\begin{table}[ht]
	\centering
	\begin{tabular}{|c|c|c|c|c|c|}
		\hline
		\diagbox{m}{SNR}& 1/2 & 1/4 & 1/8 & 1/16 \\
		\hline
		 500& $2.7^o$ ($22.1\angstrom$) & $5.0^o$ ($30.0\angstrom$)&  $9.2^o$ ($30.7\angstrom$)&   $13.3^o$ ($39.1\angstrom$)\\
		1000& $3.1^o$ ($21.9\angstrom$) & $6.0^o$ ($30.1\angstrom$)&  $5.4^o$ ($30.2\angstrom$)&  $14.5^o$ ($36.4\angstrom$)\\
		1500& $2.9^o$ ($21.7\angstrom$) & $4.5^o$ ($26.7\angstrom$)&  $5.7^o$ ($26.4\angstrom$)&   $\hphantom{1}9.5^o$ ($30.0\angstrom$)\\
		2000& $2.9^o$ ($21.5\angstrom$) & $4.4^o$ ($26.3\angstrom$)&  $6.3^o$ ($26.8\angstrom$)&  $\hphantom{1}8.3^o$ ($30.3\angstrom$)\\
		\hline
	\end{tabular}
	\caption{Median angular estimation errors $\epsilon_{il}$ of~\eqref{eq:epserr} and obtained resolutions for various levels of noise applied to each of the sets of simulated projection-images.}
	\label{tbl:simulation_c11}
\end{table}

\subsection{Trimeric HIV-1 envelope glycoprotein ($C_{3})$}\label{sec:experiments_c3_10004}
We next tested Algorithm~\ref{alg:end_2_end} on the Trimeric HIV-$1$ envelope glycoprotein dataset. The dataset consists of $14199$ raw particle images provided in the EMPIAR-$10004$ dataset~\cite{EMPIAR-10004} from the EMPIAR archive~\cite{empiar}. The raw particle images are of size $127 \times 127$ pixels, with pixel size of $2.16\angstrom$. We processed the raw particle images using the ASPIRE~\cite{aspire} software package as follows. First, all images were phase-flipped (in order to remove the phase-reversals in the CTF), down-sampled to size of $89 \times 89$ pixels (hence with pixel size of $3.08\angstrom$), and normalized so that the noise in each image has zero mean and unit variance. We next used the class-averaging procedure in ASPIRE~\cite{aspire} to generate class averages from the raw particle images, where each image was averaged with its $K=100$ most similar images (after proper rotational and translational alignment). Next, we sorted the class averages according to their contrast (i.e., according to the standard deviation of the pixel values of each average). The input to Algorithm~\ref{alg:end_2_end} was the $m=2000$ class averages with the highest contrast. A sample of these class averages is displayed in Figure~\ref{fig:10004_classavg_nn100}.

\begin{figure}
	\centering
	\includegraphics[bb=53 100 379 222, clip=true,width=3.5in]{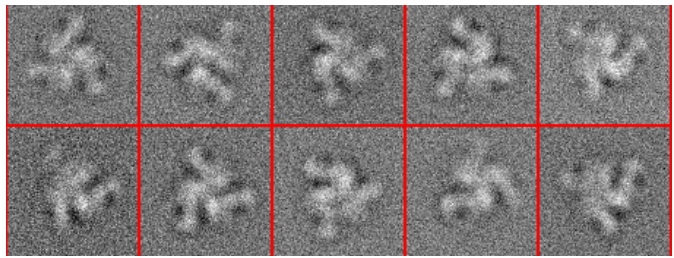}
	\caption{A sample of $89 \times 89$ class averages of the EMPIAR-$10004$ dataset~\cite{EMPIAR-10004} using $K=100$ raw projection-images per class.}
	\label{fig:10004_classavg_nn100}
\end{figure}

Next, we applied Algorithm~\ref{alg:end_2_end} to estimate the rotation matrices $\{ \tilde{R_i} \}_{i=1}^m$ that correspond to the $m$ class averages $\{ P_i \}_{i=1}^m$. Then, instead of reconstructing the three-dimensional density map using merely the $m$ pairs $\{(\tilde{R_i}, P_i)\}_{i=1}^m$, we made full use of the fact that the underlying molecule is $C_3$ symmetric by applying the reconstruction to the $3m$ pairs $\{(\tilde{R_i}, P_i), (g_{3}\tilde{R_i}, P_i), (g_{3}^2\tilde{R_i}, P_i)\}_{i=1}^m$ (see~\eqref{eq:identical_ims}).
Figure~\ref{fig:10004_mine} displays the reconstructed density map, and Figure~\ref{fig:10004_emdb2484} displays the reference density map EMD-$2484$ available in~\cite{EMD-2484_paper} in~\cite{EMDB}. The renderings of all volumes in this section were generated using USCF Chimera~\cite{chimera}. The quality of the reconstruction was assessed using the Fourier shell correlation (FSC) curve~\cite{vanHeel_Schatz}, implying that the resolution of the model estimated by our algorithm is equal to $21.7\angstrom$ according to the $0.5$~criterion (Figure~\ref{fig:10004_fsc}).

\begin{figure}
		\centering
	\subfloat[\label{fig:10004_mine}]%
	{\includegraphics[width=0.35\textwidth]{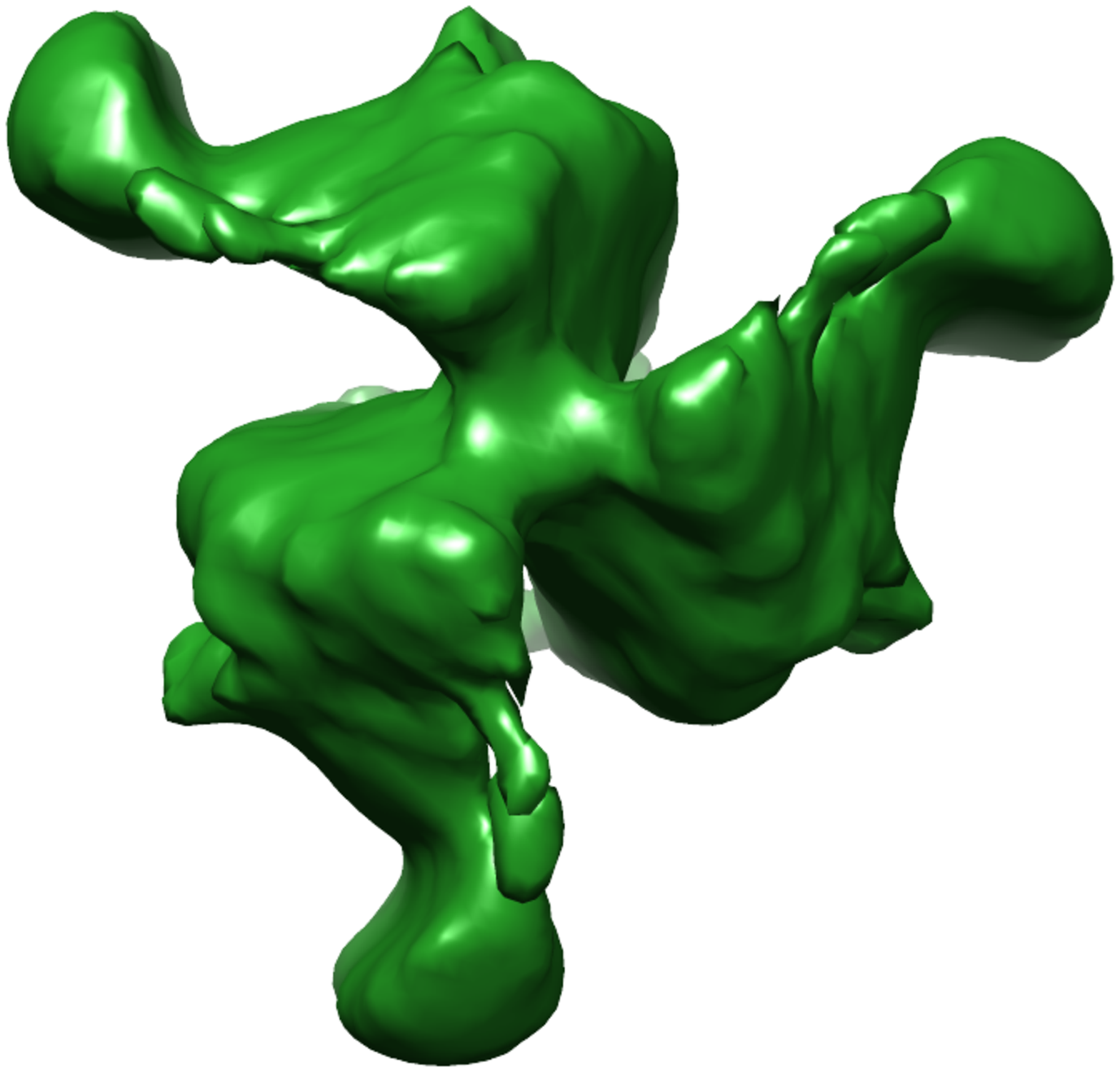}
			\vphantom{\includegraphics[width=0.35\textwidth]{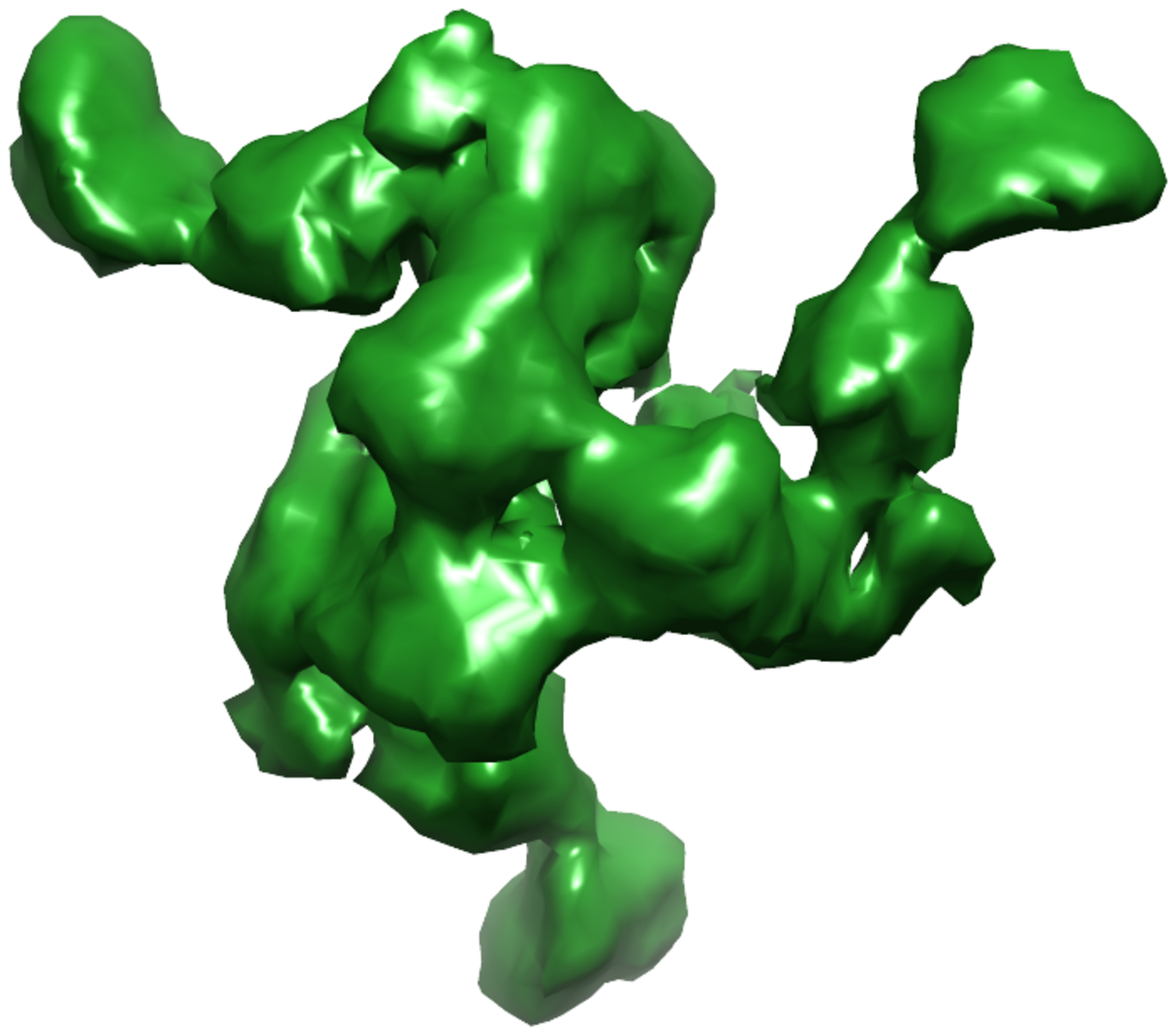}}
		}
		\subfloat[\label{fig:10004_emdb2484}]%
        {\includegraphics[width=0.35\textwidth]{./figures/10004_emdb2484}}

	   \subfloat[\label{fig:10004_fsc}]%
	   {\includegraphics[width=0.45\textwidth]{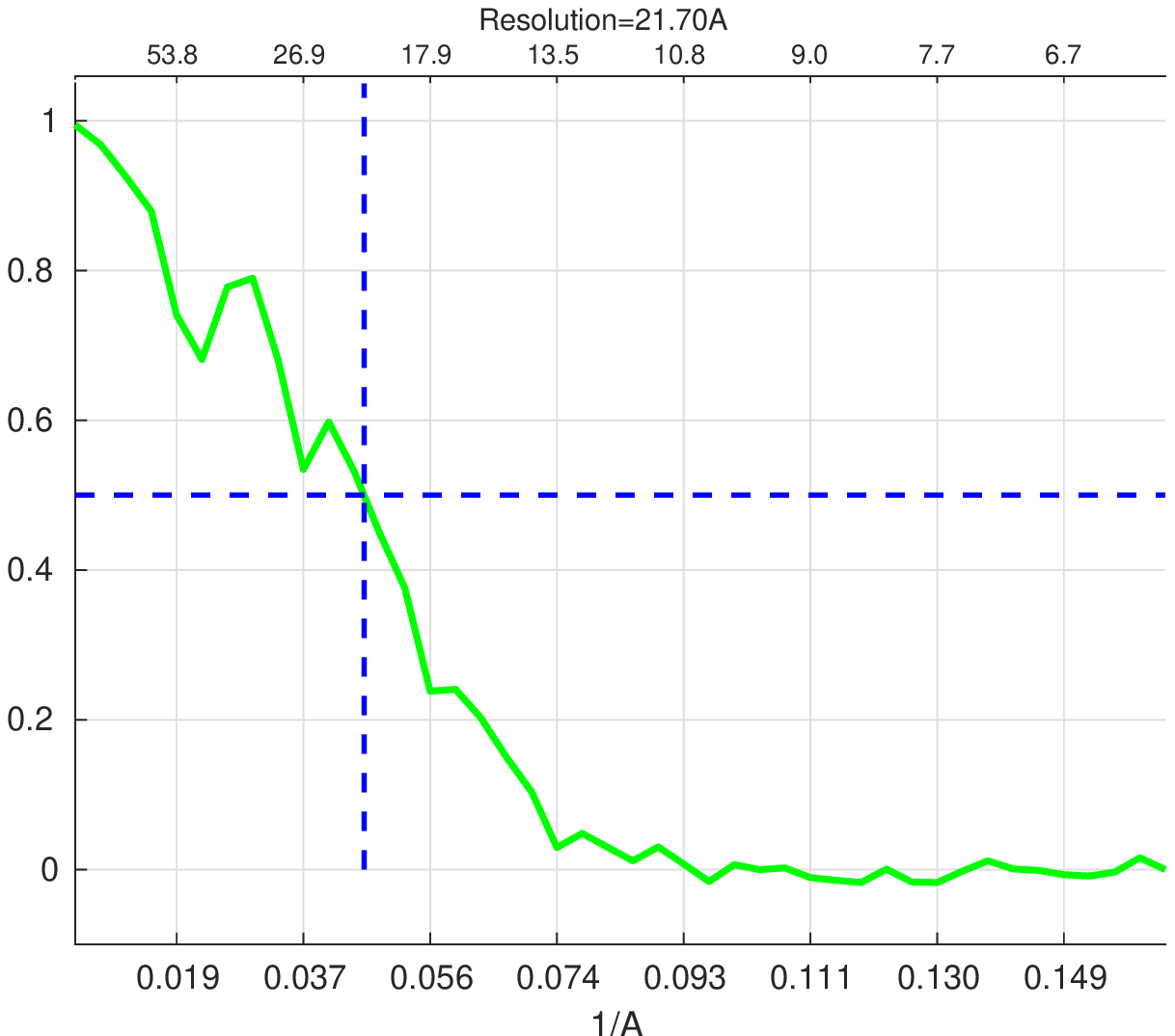}}	
	\caption{Reconstructed density maps of EMPIAR-$10004$~\cite{EMPIAR-10004}. \protect\subref{fig:10004_mine} Using Algorithm~\ref{alg:end_2_end}. \protect\subref{fig:10004_emdb2484} Reference density map~\cite{EMD-2484_paper}. \protect\subref{fig:10004_fsc} FSC between the reconstructed density maps.}
	\label{fig:c3_10004}
\end{figure}


\subsection{Human HCN1 hyperpolarization-activated cyclic nucleotide-gated ion channel~($C_{4}$)}\label{sec:experiments_c4_10081}
Next, we applied Algorithm~\ref{alg:end_2_end} to class averages of the Human HCN1 hyperpolarization activated channel which possesses $C_{4}$ symmetry. The class averages were generated from the particle images provided in the EMPIAR-$10081$ dataset~\cite{EMPIAR-10081}. This dataset comprises of raw particle images of size $256 \times 256$ pixels, with pixel size of $1.3\angstrom$. First, the raw particle images were phase-flipped, down-sampled to size of $129 \times 129$ pixels, and normalized so that the noise in each image would have zero mean and unit variance. To examine the consistency of Algorithm~\ref{alg:end_2_end}, the raw projection-images were randomly split into two groups of $27,935$ images each, and the class-averaging procedure in ASPIRE~\cite{aspire} was used to generate class averages from each of the two groups independently. The class averages were generated by averaging each raw image with its $K=50$ most similar images (using $K=100$ resulted later on in inferior results). The input to subsequent steps were the top (highest contrast) $m=5000$ class averages from each set. A sample of class averages is displayed in Figure~\ref{fig:10081_classavg_nn50}.

\begin{figure}
	\centering
	\includegraphics[bb=55 97 374 224, clip=true,width=3in]{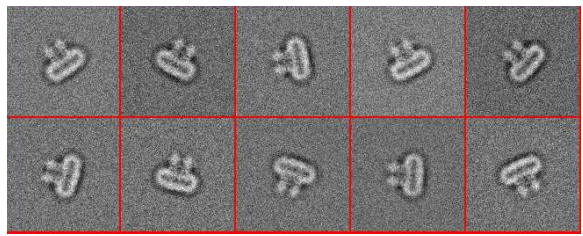}
	\caption{A sample of $129 \times 129$ class averages of the EMPIAR-$10081$ dataset~\cite{EMPIAR-10081} using $K=50$ raw projection-images per class.}
	\label{fig:10081_classavg_nn50}
\end{figure}

Next, we applied Algorithm~\ref{alg:end_2_end} to each of these two groups of class averages and estimated the rotation matrices corresponding to each group. We then reconstructed the two density maps using the class averages and the corresponding estimated rotation matrices, while considering the symmetry in the reconstruction process as was described in Section~\ref{sec:experiments_c3_10004}. The consistency of the reconstructions from the two groups of the data was first assessed using the $0.143$ criterion of the FSC curve~\cite{vanHeel_Schatz}, and was found to be equal to $11.62\angstrom$ (Figure~\ref{fig:10081_fsc_two_groups}). In addition, we compared (using the $0.5$ criterion) the reconstructions against the reference density map which was reconstructed from the same dataset as described in~\cite{EMD_8511_paper}, and found the resolution to be equal to $14.1\angstrom$ (Figure~\ref{fig:10081_fsc_against_emdb}). Figure~\ref{fig:10081_mine} displays a two-dimensional rendering of a density map generated by Algorithm~\ref{alg:end_2_end} (only the reconstruction of the first group is shown), and Figure~\ref{fig:10081_emdb8511} displays a two-dimensional rendering of the reference density map~\cite{EMD_8511_paper}.

\begin{figure}
	\centering
	\subfloat[][]{
		\label{fig:10081_fsc_two_groups}
		\centering
		\includegraphics[width=2.5in]{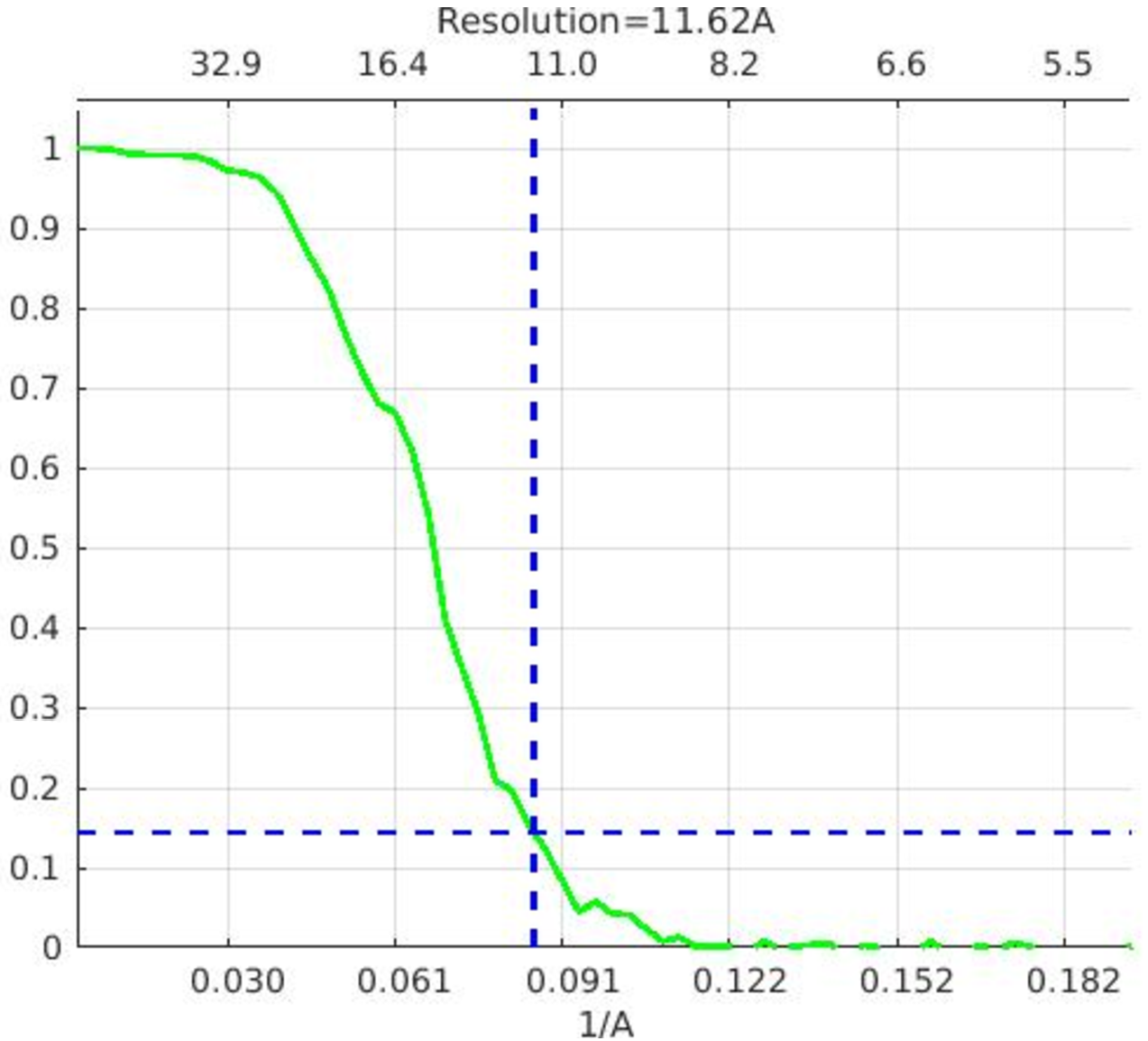}
	}
	\subfloat[][]{
		\label{fig:10081_fsc_against_emdb}
		\centering
		\includegraphics[width=2.5in]{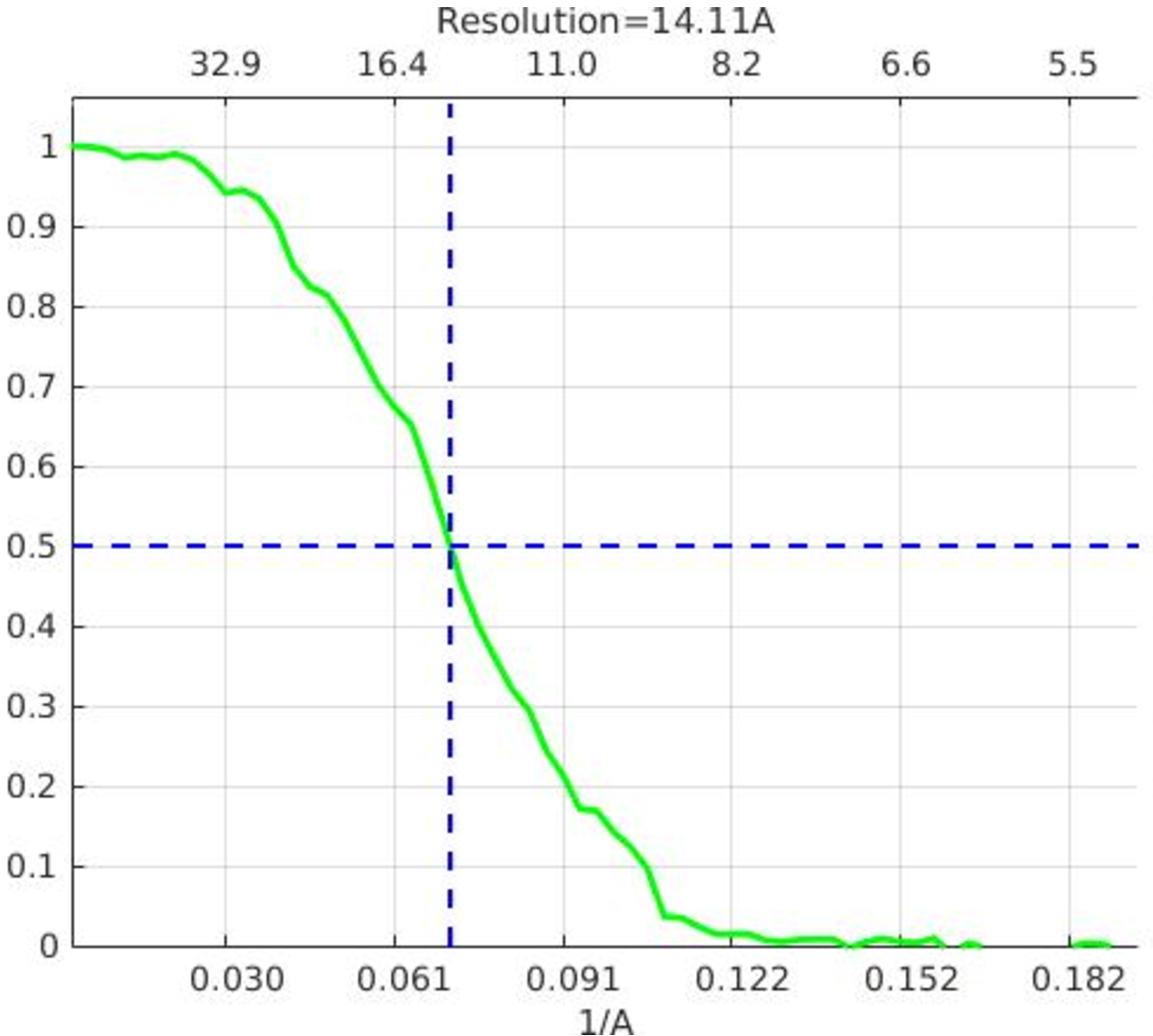}
	}
	\caption{\protect\subref{fig:10081_fsc_two_groups} FSC between the density maps reconstructed from the two halves of the data of EMPIAR-$10081$~\cite{EMPIAR-10081}. \protect\subref{fig:10081_fsc_against_emdb} FSC between the density map reconstructed from the first half of the data of EMPIAR-$10081$~\cite{EMPIAR-10081} and the reference density map~\cite{EMD_8511_paper}.}
\end{figure}

\begin{figure}
	\centering
	\subfloat[][]{
		\label{fig:10081_mine}
		\centering
		\includegraphics[width=1.75in]{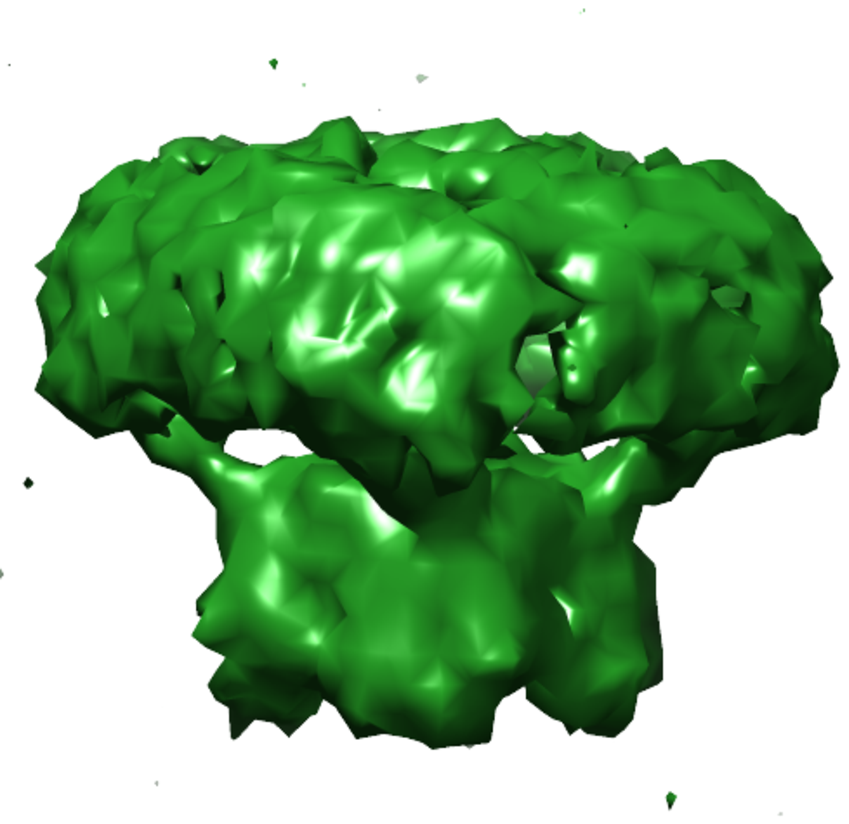}
	}
    \hspace{1cm}
	\subfloat[][]{
		\label{fig:10081_emdb8511}
		\centering
		\includegraphics[width=1.75in]{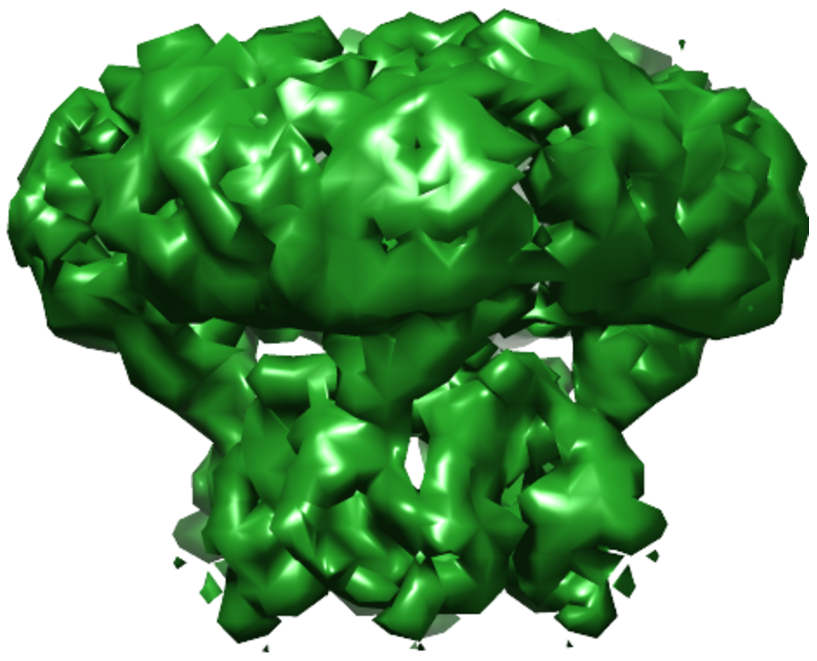}
	}
	\caption{\protect\subref{fig:10081_mine} Density map reconstructed from EMPIAR-$10081$~\cite{EMPIAR-10081} using Algorithm~\ref{alg:end_2_end}. \protect\subref{fig:10081_emdb8511} Reference density map~\cite{EMD_8511_paper}.}
\end{figure}

\subsection{GroEL protein ($D_{7}$ symmetry)}\label{sec:experiments_c7_groel}
As was already mentioned above, strictly speaking, the GroEL protein~\cite{GroEL_Ludtke} has $D_7$ (dihedral) symmetry, meaning that it has both a $7$-fold cyclic symmetry as well as a $2$-fold cyclic symmetry. We nevertheless applied Algorithm~\ref{alg:end_2_end} to it while taking into account only its $7$-fold cyclic symmetry. To this end, we used once again the class-averaging procedure in ASPIRE~\cite{aspire} to generate class averages from the raw particle images, where each image was averaged with its $K=100$ most similar images. A sample of class averages is displayed in Figure~\ref{fig:groel_classavg_nn100}. We then picked the top (highest-contrast) $m=1000$ class averages, estimated the set of corresponding rotation matrices using Algorithm~\ref{alg:end_2_end}, and reconstructed the density map. The resolution was found to be $5.37\angstrom$ (using the $0.5$ criterion). Three different views of the reconstructed density map are shown in Figure~\ref{fig:groel}.

\begin{figure}
	\centering
	\includegraphics[bb=56 99 378 227, clip=true,width=3.5in]{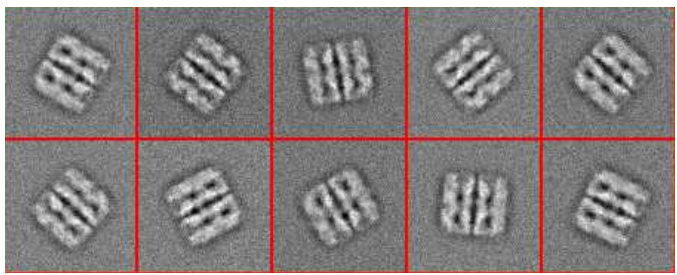}
	\caption{A sample of $89 \times 89$ class averages of the GroEL dataset using $K=100$ raw projection-images per class.}
	\label{fig:groel_classavg_nn100}
\end{figure}

\begin{figure}
	\centering
	\subfloat{
		\centering
		\includegraphics[width=1.3in]{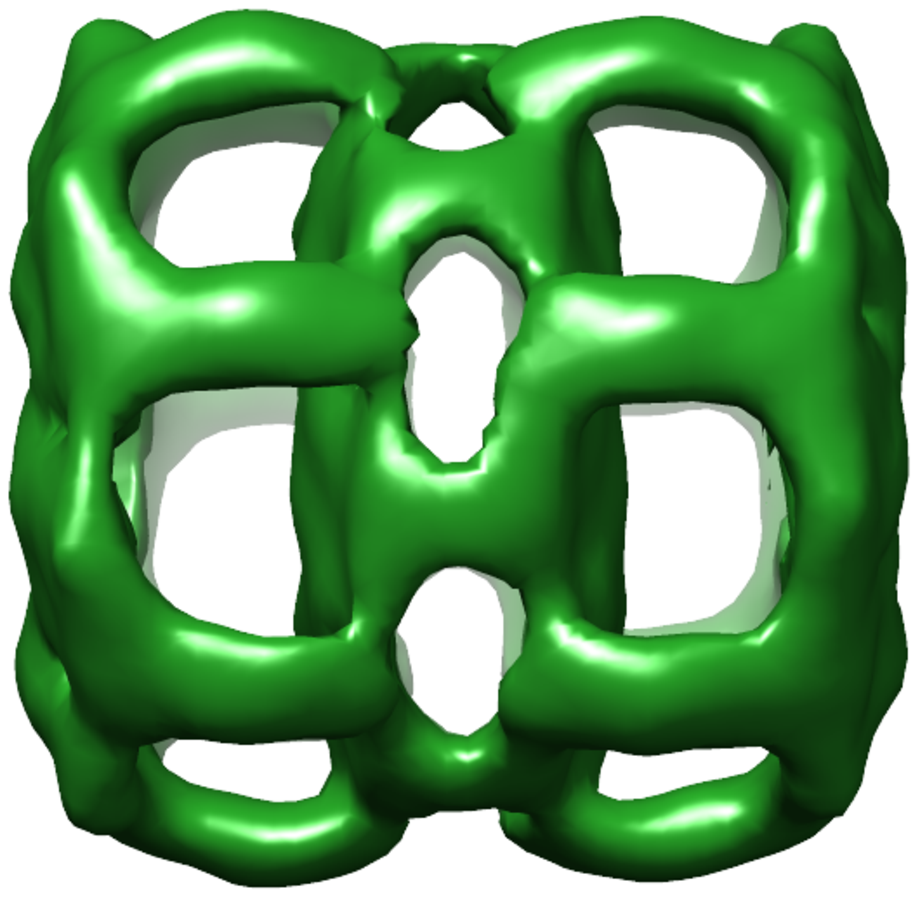}
	}
	\subfloat{
		\centering
		\includegraphics[width=1.3in]{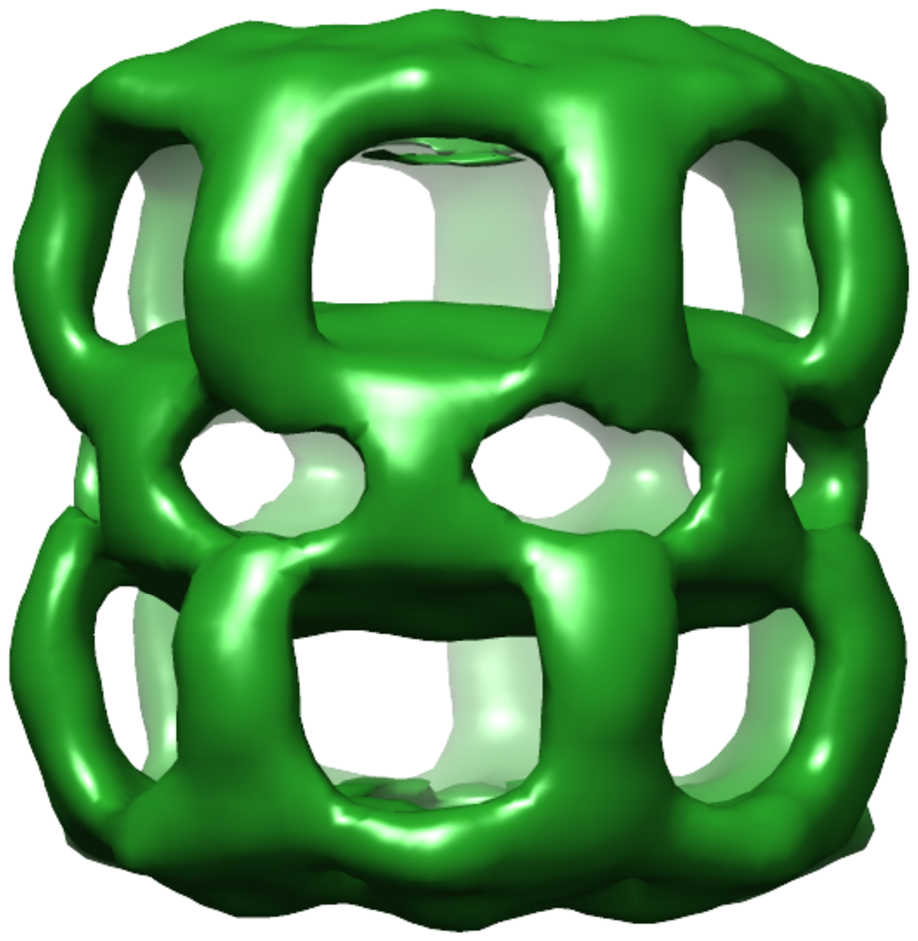}
	}
	\subfloat{
		\centering
		\includegraphics[width=1.3in]{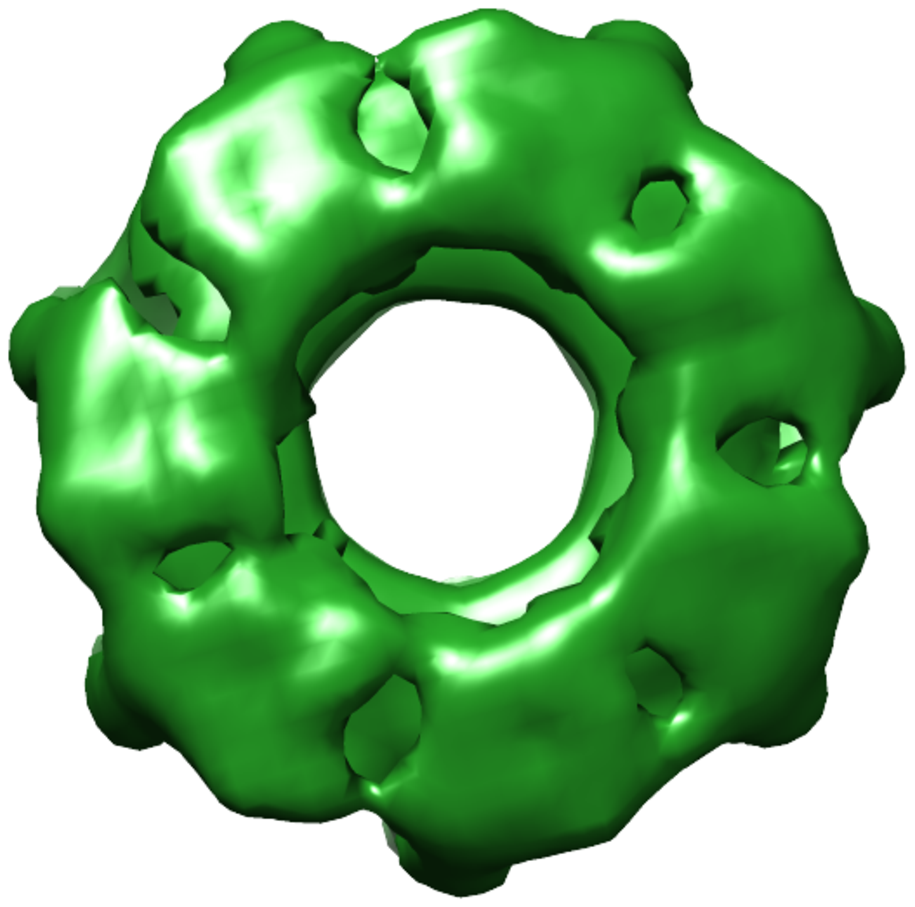}
	}
	\caption{Three different views of the reconstructed density map of the GroEL protein using Algorithm~\ref{alg:end_2_end}.}
	\label{fig:groel}
\end{figure}

\section{Discussion and future work}\label{sec:summary}
In this paper, we proposed a method for finding the orientations that correspond to a given set of projection-images of a cyclically-symmetric molecule. In addition, we described the inherent geometry that underlies such molecules, as well as the way this geometry is expressed in their projection-images. We further demonstrated the efficacy of our proposed method by providing some numerical results using simulated and experimental datasets.

A typical pipeline for reconstructing a three-dimensional volume from a dataset of raw particle images consists of first generating a low-resolution model of the molecule from a subset of the dataset, which is then refined to a high-resolution model using the entire dataset. The reason for breaking the reconstruction process into these two steps is that all high-resolution refinement algorithms are based on non-convex optimization schemes (such as the EM-algorithm or stochastic gradient descent) which must be initialized properly in order not to converge to a molecule which is inconsistent with the data. Thus, the first step of the reconstruction pipeline~(known as ab-initio reconstruction) should generate a reliable low-resolution model of the molecule. The algorithm presented in this paper addresses this step of the pipeline, and has been shown using experimental data to produce reliable low-resolution models for three different datasets. The resolutions achieved by our algorithm are significantly lower than that of the reference models since our algorithm was applied to only part of the data (in the form of a few hundreds or thousands of class averages), and not to the entire raw data set. Nevertheless, the obtained resolutions are consistent with the required resolutions at the ab-initio modeling step. In particular, high-resolution refinement algorithms typically low-pass filter the ab-initio model, and so any reconstruction whose resolution is better than 30~\AA~is typically sufficient.

Obviously, all existing software packages include some functionality for ab-initio modelling. However, all of them are based on some local optimization and have no mathematical guarantees. The algorithm presented in this paper is the first to be specifically designed to the geometry of the problem.

A natural future research is to extend the work to other symmetry groups. We are currently at the final stages of devising an algorithm for molecules with $D_{2}$ symmetry. As it turns out, the geometry of molecules with $D_{2}$ symmetry is completely different from that of molecules with $C_{n}$ symmetry. The reason is that molecules with $D_{2}$ symmetry have three perpendicular symmetry axes, and three corresponding generators of the symmetry group. The resulting algorithm is thus completely different than the one in the current paper. A preliminary analysis of the $D_{2}$ algorithm suggests that it may be extended to $D_{n}$ symmetry with $n>2$.

Once we derive an algorithm for molecule with $D_{n}$ symmetry for $n\ge 2$, there remain three symmetry groups to be handled -- $T$ (tetrahedral), $O$ (octahedral), and $I$ (icosahedral) symmetries. Those enjoy very high-order symmetry, that should be advantageous due to the large number of common lines between any two images and within each image (self common lines). These high-order symmetry groups are currently under investigation.

\begin{appendices}
	\label{appendix}
	\section{Relative viewing directions estimation for $C_3$ or $C_4$ symmetry}\label{sec:third_row_c2_c3_c4}
	In this section, we describe an alternative method for estimating the set of all relative viewing directions $\{ \vivj \mid i \le j, \ i,j=1,\ldots,m \}$ which is applicable only to molecules with either $C_{3}$ or $C_{4}$ symmetry.
	The following lemma, whose proof is given in Appendix~\ref{app:Proof of lemma_c3_c4_self_unique}, is central to the proposed method.
	\begin{lemma}\label{lemma:c3_c4_self_unique}
		For any $n \ge 3$ and for any $i,j \in [m]$ and $s_{ij} \in [n]$,
		\begin{subnumcases}{\vivj=}
		\frac{1}{n} \sum_{s=0}^{n-1}
		\left( R_{i}^T g_{n} R_{i} \right)^{s}
		\left( R_i^T g_{n}^{s_{ij}} R_j \right)
		\left( R_{j}^T g_{n} R_{j} \right)^{s}, \label{eq:linear_comb_vivj_c34_1} \\
		\frac{1}{n} \sum_{s=0}^{n-1}
		\left( R_{i}^T g_{n}^{n-1} R_{i} \right)^{s}
		\left( R_i^T g_{n}^{s_{ij}} R_j \right)
		\left( R_{j}^T g_{n}^{n-1} R_{j} \right)^{s}. \label{eq:linear_comb_vivj_c34_2}
		\end{subnumcases}
		Similarly, for any $n \ge 3$ and $\iInm$,
		\begin{subnumcases}{\vivi=}
		\frac{1}{n} \sum_{s=0}^{n-1}
		\left( R_{i}^T g_{n} R_{i} \right)^{s},\hphantom{+++++++++++++}
		\label{eq:linear_comb_vivi_c34_1} \\
		\frac{1}{n} \sum_{s=0}^{n-1}
		\left( R_{i}^T g_{n}^{n-1} R_{i} \right)^{s}. \label{eq:linear_comb_vivi_c34_2}
		\end{subnumcases}
	\end{lemma}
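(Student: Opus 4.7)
The plan is to reduce each right-hand side to a telescoped power sum of the form $R_i^T g_n^{ls+c} R_j$, and then invoke Lemma~\ref{lemma:g_n_ks} with an appropriately chosen $l$. The orthogonality $R_i R_i^T = R_j R_j^T = I$ is what makes conjugation multiplicative, so that $(R_i^T g_n R_i)^s = R_i^T g_n^s R_i$ and $(R_j^T g_n R_j)^s = R_j^T g_n^s R_j$, and likewise with $g_n$ replaced by $g_n^{n-1}$.

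First I would substitute these identities into a generic summand on the right-hand side of~\eqref{eq:linear_comb_vivj_c34_1} and collapse the two internal $R_i R_i^T$ and $R_j R_j^T$ factors, which reduces each term to $R_i^T g_n^{2s + s_{ij}} R_j$. Hence the right-hand side of~\eqref{eq:linear_comb_vivj_c34_1} equals
\begin{equation*}
R_i^T\, g_n^{s_{ij}} \left( \frac{1}{n} \sum_{s=0}^{n-1} g_n^{2s} \right) R_j.
\end{equation*}
Now I would apply Lemma~\ref{lemma:g_n_ks} with $l = 2$: since $n \geq 3$ we have $(2 \bmod n) \neq 0$, and the lemma yields $\frac{1}{n} \sum_{s=0}^{n-1} g_n^{2s} = \operatorname{diag}(0,0,1)$. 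Because $g_n$ fixes the $z$-axis, $g_n^{s_{ij}} \operatorname{diag}(0,0,1) = \operatorname{diag}(0,0,1)$, so the whole expression collapses to $R_i^T \operatorname{diag}(0,0,1) R_j = R_i^T e_3 e_3^T R_j = \vivj$, which is~\eqref{eq:linear_comb_vivj_c34_1}.

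For~\eqref{eq:linear_comb_vivj_c34_2} I would proceed identically, noting that the analogous simplification yields a summand $R_i^T g_n^{2(n-1)s + s_{ij}} R_j$, and that $2(n-1) \equiv -2 \pmod{n}$. One then applies Lemma~\ref{lemma:g_n_ks} with $l = -2$ (equivalently $l = n-2$), which again satisfies $(l \bmod n) \neq 0$ whenever $n \geq 3$, and concludes in the same way. The equations~\eqref{eq:linear_comb_vivi_c34_1} and~\eqref{eq:linear_comb_vivi_c34_2} for $\vivi$ are the special case $R_j = R_i$ with no middle factor, and follow directly from $(R_i^T g_n^{\pm 1} R_i)^s = R_i^T g_n^{\pm s} R_i$ and a single application of Lemma~\ref{lemma:g_n_ks} with $l = 1$ and $l = n-1$ respectively.

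The proof is essentially a bookkeeping exercise; the only real subtlety is ensuring the exponent $l$ fed into Lemma~\ref{lemma:g_n_ks} satisfies $(l \bmod n) \neq 0$, and this is precisely why the hypothesis $n \geq 3$ is needed. For $n = 2$ the choice $l = 2$ would give $l \equiv 0 \pmod{n}$ and~\eqref{eq:sum_g} would degenerate to the identity matrix rather than $\operatorname{diag}(0,0,1)$, so the claimed averaging identity would fail; consistently with this, the lemma is applied in the sequel only for $n \in \{3,4\}$.
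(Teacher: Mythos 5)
Your proof is correct and takes essentially the same route as the paper: collapse the conjugated powers via $R_kR_k^T=I$ so each summand becomes $R_i^Tg_n^{2as+s_{ij}}R_j$, then invoke Lemma~\ref{lemma:g_n_ks} and use $g_n^{s_{ij}}\operatorname{diag}(0,0,1)=\operatorname{diag}(0,0,1)$. The only cosmetic difference is in handling the $g_n^{n-1}$ case: the paper reindexes the sum using that $a\in\{1,n-1\}$ generates $\mathbb{Z}_n$ and then always applies the lemma with $l=2$ (resp.\ $l=1$ for $\vivi$), whereas you feed $l\equiv-2\pmod{n}$ (resp.\ $l=n-1$) directly into the lemma; both choices satisfy $(l\bmod n)\neq 0$ for $n\ge 3$, so the arguments are equivalent.
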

	
	As a result of Lemma~\ref{lemma:c3_c4_self_unique} we have the following:
	\begin{itemize}
		\item For $n=3$, since the only self relative orientations for any $\iInm$ are $R_{i}^T g_{n} R_{i}$ and $R_{i}^T g_{n}^{n-1} R_{i}$, and since $(R_{i}^T g_{n} R_{i})^T = R_{i}^T g_{n}^{n-1} R_{i}$ and vice versa, it follows that in order to recover $\vivj$ and $\vivi$ it suffices to determine either one of these two self relative orientation for each $\iInm$. To recover $\vivj$, it is also required to determine a single, arbitrary, relative orientation $R_i^T g_{n}^{s_{ij}} R_j$.
		\item For $n=4$, we shall later show that recovering $R_{i}^{T} g_{n}^{2} R_{i}$ may be easily avoided for any $\iInm$. Thus, since for any $\iInm$ the only remaining self relative orientations besides $R_{i}^{T} g_{n}^{2} R_{i}$ are $R_{i}^T g_{n} R_{i}$ and $R_{i}^T g_{n}^{n-1} R_{i}$, it follows that in addition to recovering a single, arbitrary, relative orientation $R_i^T g_{n}^{s_{ij}} R_j$, recovering either one of the remaining two self relative orientations $R_{i}^T g_{n} R_{i}$ or $R_{i}^T g_{n}^{n-1} R_{i}$ for each $\iInm$ is sufficient in order to recover any $\vivj$ and any $\vivi$.
	\end{itemize}
	Applying Lemma~\ref{lemma:c3_c4_self_unique} to molecules with $C_{n}$ symmetry with $n>4$ requires the estimation of more than just a single self relative orientation per image, and was found to be not robust in practice, and so the method of this section may be applied to either $C_{3}$ or $C_{4}$ symmetry. The advantage of this method is that it provides more accurate results in practice than the method of Section~\ref{sec:third_row_cn} and is also significantly faster.
	
\subsection{Estimating self relative orientations}\label{sec:self_deter}
	
	We next describe a robust procedure for determining, for both $n=3$ and $n=4$, and for every $\iInm$, an estimate $\Rii$ such that
	\begin{equation}\label{eq:self_relative_rot_J_amb}
	\Rii \in \left\{ \RigRi, \ J \RigRi J, \ \RignRi, \ J \RignRi J \right\}.
	\end{equation}
	By~\eqref{eq:euler_angles_self}, $\RigRi$ is parameterized by $\left( \alpha_{ii}^{(1)},
	\gamma_{ii}^{(1)}, -\alpha_{ii}^{(n-1)} - \pi \right)$ and $\RignRi$ is parameterized by $\left(
	\alpha_{ii}^{(n-1)},
	\gamma_{ii}^{(n-1)},
	-\alpha_{ii}^{(1)} - \pi \right)$. We next show that $\gamma_{ii}^{(1)} = \gamma_{ii}^{(n-1)}$ for any $\iInm$. As a result, since by~\eqref{eq:linear_comb_vivj_c34_1}--\eqref{eq:linear_comb_vivi_c34_2} we are oblivious as to which of the self relative orientations in~\eqref{eq:self_relative_rot_J_amb} $\Rii$ corresponds to, the angles $\alpha_{ii}^{(1)}$ and $\alpha_{ii}^{(n-1)}$ in the aforementioned parameterizations may be freely interchanged.
	By the projection slice theorem
	\begin{gather}
	\cos \gamma_{ii}^{(1)}
	=
	\dotprod{R_{i}^{(3)}}{g_{n} R_{i}^{(3)}}, \label{eq:gamma_one}\\
	\cos \gamma_{ii}^{(n-1)} = \dotprod{R_{i}^{(3)}}{g_{n}^{n-1} R_{i}^{(3)}}. \label{eq:gamma_n_minus_one}
	\end{gather}
	Thus, since $g_{n}^{T} = g_{n}^{n-1}$, it follows that $\cos \gamma_{ii}^{(1)} = \cos \gamma_{ii}^{(n-1)}$. As a result, since both of these angles are acute, it follows that indeed $\gamma_{ii}^{(1)} = \gamma_{ii}^{(n-1)}$ (which we subsequently denote by $\gamma_{ii}$).
	In order to recover the angles $\alpha_{ii}^{(1)}$ and $\alpha_{ii}^{(n-1)}$, for which the values along the lines they subtend in $\hatPi$ are conjugate equal (see~\eqref{eq:self_clm_part2}), let us define for any $\iInm$ the mapping $S_{i} \colon (0,\pi) \times (0,2\pi) \to \mathbb{R}$ by
	\begin{equation}\label{def:S_i}
	S_{i} \left( \phi,\theta \right)
	=
	\operatorname{Re}\int_{\xi}
	\hatPi \left( \xi \cos \phi, \xi \sin \phi \right)
	\hatPi \left( \xi \cos \theta,\xi \sin \theta \right)\,\mathrm{d}\xi,
	\end{equation}
	where each ray in $\hatPi$ is normalized to have its norm equal to one. By~\eqref{eq:self_clm_part2}, the two angles $\alpha_{ii}^{(1)}$ and $\alpha_{ii}^{(n-1)}$ subtend lines in $\hatPi$ whose Fourier transforms agree up to conjugation. As such, both $( \alpha_{ii}^{(1)}, \alpha_{ii}^{(n-1)})$ and $( \alpha_{ii}^{(n-1)}, \alpha_{ii}^{(1)})$ are solutions of the optimization problem
	\begin{equation}\label{opt:self_angles}
	\begin{aligned}
	& {\text{maximize}}
	& & S_{i} \left( \phi,\theta \right) \\
	& \text{subject to}
	& & \left \lvert \phi - \theta \right \rvert \neq \pi,0.
	\end{aligned}
	\end{equation}
	The constraint in~\eqref{opt:self_angles} is needed because any ray through the origin in $\hatPi$ is conjugate symmetric, and therefore, any two collinear lines would otherwise maximize~\eqref{opt:self_angles}. Moreover, for $n=4$ this constraint also guarantees that the collinear lines that constitute the second pair of self common lines do not maximize~\eqref{opt:self_angles}. For otherwise, it would lead to estimating $R_{i} g_{4}^2 R_{i}$ instead of the desired $R_{i} g_{4} R_{i}$ for~\eqref{eq:linear_comb_vivj_c34_1} and~\eqref{eq:linear_comb_vivi_c34_1}, or $R_{i} g_{4}^3 R_{i}$ for~\eqref{eq:linear_comb_vivj_c34_2} and~\eqref{eq:linear_comb_vivi_c34_2}.
	
	The main incentive to use self relative orientations is the ability to estimate them in a robust manner due to the following two properties:
	\begin{enumerate}
		\item \label{itm:property_first} The domain of each mapping $S_{i}$ defined in~\eqref{def:S_i} may in fact be restricted to a narrower range of angles. Specifically, we show in Lemma~\ref{lemma:angle_greater_pi2} below that for \mbox{$n=3$} it holds that $\lvert \alpha_{ii}^{(n-1)} - \alpha_{ii}^{(1)} \rvert \in [\pi/3,\pi)$, and for \mbox{$n=4$} it holds that $\lvert \alpha_{ii}^{(n-1)} - \alpha_{ii}^{(1)} \rvert \in [\pi/2,\pi)$. As such, when the input projection-images are noisy, constraining the optimization problem~\eqref{opt:self_angles} to these narrower ranges of angles increases the probability of detecting $\alpha_{ii}^{(1)}$ and $\alpha_{ii}^{(n-1)}$.
		\item \label{itm:property_second} We show in Lemma~\ref{lemma:main} below that, for both cases \mbox{$n=3$} and \mbox{$n=4$}, each of the angles $\gamma_{ii}$ may be computed directly from $\lvert \alpha_{ii}^{(n-1)}-\alpha_{ii}^{(1)}
		\rvert$. This is in sharp contrast to relative orientations in general, in which the common lines with a third arbitrary central plane are required~\cite{voting} in order to determine any such angle $\gamma_{ii}$. In particular, when the input images are noisy, the common lines with the third image might be misidentified, leading to a wrong estimation of $\gamma_{ii}$.
	\end{enumerate}
	%
	%
	\begin{lemma}\label{lemma:angle_greater_pi2}
		For any $\iInm$,
		\begin{subnumcases}{\alpha_{ii}^{(n-1)} - \alpha_{ii}^{(1)} \geq}
		\pi/3, & if $n=3$, \label{eq:angle_greater_pi3}\\
		\pi/2, & if $n=4$. \label{eq:angle_greater_pi2}
		\end{subnumcases}
	\end{lemma}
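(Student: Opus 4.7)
The plan is to compute $\alpha_{ii}^{(n-1)} - \alpha_{ii}^{(1)}$ in closed form as a function of the viewing direction of $R_i$, and then to minimize this function.

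First, I would parametrize $R_i$ by ZYZ Euler angles, $R_i = R_z(\phi_i)\,R_y(\theta_i)\,R_z(\psi_i)$, so that $R_i^{(3)} = (\sin\theta_i\cos\phi_i,\sin\theta_i\sin\phi_i,\cos\theta_i)^T$ and $\psi_i$ is the in-plane rotation. Because $g_n^s = R_z(2\pi s/n)$ commutes with $R_z(\phi_i)$, the self-relative orientation simplifies to
\begin{equation*}
R_i^T g_n^s R_i \;=\; R_z(-\psi_i)\,R_y(-\theta_i)\,R_z(2\pi s/n)\,R_y(\theta_i)\,R_z(\psi_i).
\end{equation*}
A direct matrix multiplication yields the third-column entries needed in~\eqref{eq:hadani_self}:
\begin{equation*}
\begin{aligned}
(R_i^T g_n^s R_i)_{1,3} &= \sin\theta_i\,\bigl[A_s\cos\psi_i + B_s\sin\psi_i\bigr],\\
(R_i^T g_n^s R_i)_{2,3} &= \sin\theta_i\,\bigl[-A_s\sin\psi_i + B_s\cos\psi_i\bigr],
\end{aligned}
\end{equation*}
with $A_s = \cos\theta_i\,(\cos(2\pi s/n)-1)$ and $B_s = \sin(2\pi s/n)$.

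Next I would substitute into~\eqref{eq:hadani_self}. Writing $(A_s,B_s) = r_s(\cos\beta_s,\sin\beta_s)$ with $\beta_s = \operatorname{atan2}(B_s,A_s)$, a short trigonometric manipulation identifies $\alpha_{ii}^{(s)} \equiv \beta_s - \psi_i + \pi/2 \pmod{\pi}$, so that $\psi_i$ (and $\phi_i$) drop out of any difference. Since $\cos(2\pi(n-1)/n) = \cos(2\pi/n)$ and $\sin(2\pi(n-1)/n) = -\sin(2\pi/n)$, we have $A_{n-1} = A_1$ and $B_{n-1} = -B_1$, hence $\beta_{n-1} \equiv -\beta_1 \pmod{2\pi}$, and therefore
\begin{equation*}
\bigl|\alpha_{ii}^{(n-1)} - \alpha_{ii}^{(1)}\bigr|
\;=\;
2\arctan\!\left(\frac{\sin(2\pi/n)}{|\cos\theta_i|\,(1-\cos(2\pi/n))}\right).
\end{equation*}

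Finally I would minimize over $\theta_i$. The right-hand side is strictly decreasing in $|\cos\theta_i|\in(0,1)$, and the half-angle identity $\sin(2\pi/n)/(1-\cos(2\pi/n)) = \cot(\pi/n)$ gives its infimum, attained as $|\cos\theta_i|\to 1^-$ (viewing direction near the symmetry axis), equal to $2\arctan(\cot(\pi/n)) = 2(\pi/2 - \pi/n) = \pi - 2\pi/n$. Specializing to $n=3$ and $n=4$ yields the claimed bounds $\pi/3$ and $\pi/2$, respectively.

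The main technical obstacle is branch bookkeeping: the $\arctan$ in~\eqref{eq:hadani_self} is defined only modulo $\pi$, so the signed difference $\alpha_{ii}^{(n-1)} - \alpha_{ii}^{(1)}$ reverses sign with the sign of $\cos\theta_i$ (matching the absolute-value phrasing $|\alpha_{ii}^{(n-1)} - \alpha_{ii}^{(1)}| \in [\pi/3,\pi)$ used in the paragraph preceding the lemma). One must specify a canonical representative in $[0,\pi)$ before asserting the stated one-sided bound. The degenerate case $\sin\theta_i = 0$, where the self common lines are ill-defined, is excluded from consideration.
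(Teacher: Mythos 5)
Your route is genuinely different from the paper's and, up to one point discussed below, it is sound; it is also more general, since it produces a single closed form valid for every $n\ge 3$, namely $\bigl|\alpha_{ii}^{(n-1)}-\alpha_{ii}^{(1)}\bigr| = 2\arctan\bigl(\sin(2\pi/n)\,/\,(\lvert\cos\theta_i\rvert\,(1-\cos(2\pi/n)))\bigr)$ with the uniform bound $\pi-2\pi/n$, whereas the paper handles $n=3$ and $n=4$ by separate computations. The paper never touches the Euler/arctan representation: it evaluates $\cos\bigl(\alpha_{ii}^{(n-1)}-\alpha_{ii}^{(1)}\bigr)=\dotprod{q_{ii}^{(n-1)}}{q_{ii}^{(1)}}$ using Lagrange's identity (Lemma~\ref{lemma:before_thm_}), specializes to spherical coordinates for $n=3,4$ (Lemma~\ref{lemma:self_ang_diff_polar}), and then proves monotonicity of $\arccos f$ and $\arccos g$ by differentiation. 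Your monotonicity step is cleaner (the arctan argument is $\cot(\pi/n)/\lvert\cos\theta_i\rvert$, manifestly decreasing in $\lvert\cos\theta_i\rvert$, so no calculus is needed), and your closed form is consistent with \eqref{eq:lemma_c3}--\eqref{eq:lemma_c4} via the double-angle identity, so the two approaches agree where they overlap.

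The point you flag at the end is, however, more than bookkeeping and must actually be closed. Your identification $\alpha_{ii}^{(s)}\equiv\beta_s-\psi_i+\pi/2\pmod{\pi}$ rests on \eqref{eq:hadani_self}, which determines only the line through the origin, not the ray; consequently your derivation pins down $\alpha_{ii}^{(n-1)}-\alpha_{ii}^{(1)}$ only modulo $\pi$. The other representative of your value lies in $(-2\pi/n,0]$, i.e.\ has angular distance in $[0,2\pi/n)$, which would \emph{violate} the claimed bound for both $n=3$ and $n=4$ — so resolving the branch is exactly where the content of the lemma sits, and it cannot be deferred to a choice of canonical representative. The fix is to use the ray information in the definition \eqref{eq:alpha_ii_q_ii}: either carry the full vector $R_i^Tq_{ii}^{(s)}$ (its cosine and sine separately, with $q_{ii}^{(s)}$ given by \eqref{eq:q_ij_s}) through your Euler-angle computation, or observe directly that $\cos\bigl(\alpha_{ii}^{(n-1)}-\alpha_{ii}^{(1)}\bigr)=\dotprod{q_{ii}^{(n-1)}}{q_{ii}^{(1)}}$, which is precisely what the paper's Lemma~\ref{lemma:before_thm_} computes and which selects the correct branch. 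With that addition your argument is complete; note also that, like the paper's own proof, it really bounds the angular distance in $[0,\pi]$ rather than the signed difference, and the degenerate case $\sin\theta_i=0$ is rightly excluded.
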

	
	\begin{lemma}\label{lemma:main}
		For any $\iInm$,
		\begin{subnumcases}{\cos \gamma_{ii}=}
		\frac{	\cos \left( \alpha_{ii}^{(n-1)} - \alpha_{ii}^{(1)} \right)} {1-	\cos \left( \alpha_{ii}^{(n-1)} - \alpha_{ii}^{(1)} \right) }, & if $n=3$, \label{eq:cos_gamma_practice__c3}\\
		\frac{1 + \cos \left(  \alpha_{ii}^{(n-1)} - \alpha_{ii}^{(1)} \right)}{1 - \cos \left(  \alpha_{ii}^{(n-1)} - \alpha_{ii}^{(1)}\right)}, & if $n=4$. \label{eq:cos_gamma_practice_}
		\end{subnumcases}
	\end{lemma}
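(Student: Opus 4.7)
My plan is to exploit the fact that $\RigRi$ is a conjugate of $g_{n}$ by an orthogonal matrix, so the two rotation matrices have the same trace, while the Euler parameterization~\eqref{eq:euler_angles_self} already expresses this trace in terms of $\alpha_{ii}^{(1)}$, $\alpha_{ii}^{(n-1)}$, and $\gamma_{ii}$. Equating these two expressions for the trace should immediately give the desired formulas, with the two cases $n=3$ and $n=4$ differing only through the value of $\operatorname{tr}(g_{n})$.

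First, I would record the two independent computations of $\operatorname{tr}(\RigRi)$. On one hand, since $\RigRi = R_{i}^{T} g_{n} R_{i}$ with $R_{i} \in SO(3)$, similarity invariance gives
\begin{equation*}
\operatorname{tr}(\RigRi) \;=\; \operatorname{tr}(g_{n}) \;=\; 1 + 2\cos(2\pi/n),
\end{equation*}
which equals $0$ for $n=3$ and $1$ for $n=4$. On the other hand, applying~\eqref{eq:euler_angles_self} with $s=1$ yields $\RigRi = R_{z}(\alpha_{ii}^{(1)})\,R_{x}(\gamma_{ii})\,R_{z}(-\alpha_{ii}^{(n-1)}-\pi)$, where I used the fact (established just before the lemma statement) that $\gamma_{ii}^{(1)} = \gamma_{ii}^{(n-1)} =: \gamma_{ii}$.

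The main computational step is to expand this triple product using the formulas in~\eqref{def:R_x_R_z} and sum the three diagonal entries. Writing $\alpha := \alpha_{ii}^{(1)}$, $\beta := \alpha_{ii}^{(n-1)}$, $\phi := -\beta-\pi$, and using $\cos\phi = -\cos\beta$, $\sin\phi = \sin\beta$, a direct multiplication gives the diagonal entries $c_{\alpha}c_{\phi}-s_{\alpha}c_{\gamma}s_{\phi}$, $-s_{\alpha}s_{\phi}+c_{\alpha}c_{\gamma}c_{\phi}$, and $c_{\gamma}$. Collecting the common factor $1+\cos\gamma_{ii}$ and using the angle-sum identity, I expect to reach the compact form
\begin{equation*}
\operatorname{tr}(\RigRi) \;=\; \cos(\alpha+\phi)\bigl(1+\cos\gamma_{ii}\bigr) + \cos\gamma_{ii} \;=\; -\cos(\beta-\alpha)\bigl(1+\cos\gamma_{ii}\bigr) + \cos\gamma_{ii},
\end{equation*}
since $\cos(\alpha-\beta-\pi) = -\cos(\beta-\alpha)$. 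Writing $\Delta := \alpha_{ii}^{(n-1)}-\alpha_{ii}^{(1)}$, this is $\cos\gamma_{ii} - \cos\Delta\,(1+\cos\gamma_{ii})$.

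Finally, I would solve for $\cos\gamma_{ii}$ in each case. Setting the expression above equal to $1+2\cos(2\pi/n)$ gives $\cos\gamma_{ii}\,(1-\cos\Delta) = \cos\Delta + 2\cos(2\pi/n)$, so for $n=3$ (right-hand side $\cos\Delta$) this yields~\eqref{eq:cos_gamma_practice__c3} and for $n=4$ (right-hand side $1+\cos\Delta$) this yields~\eqref{eq:cos_gamma_practice_}, provided $\cos\Delta\ne 1$, which is guaranteed by Lemma~\ref{lemma:angle_greater_pi2}. The only part that is not purely mechanical is the trace computation; the algebra is short but must be executed carefully because the apparently ugly shift by $-\pi$ in the third Euler factor is precisely what produces the sign flip on $\cos(\beta-\alpha)$, and hence the different signs in the numerator of the two stated formulas.
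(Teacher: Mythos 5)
Your proof is correct, but it takes a genuinely different route from the paper's. You equate two computations of $\operatorname{tr}\left(\RigRi\right)$: invariance of the trace under conjugation gives $1+2\cos(2\pi/n)$, while expanding the Euler factorization~\eqref{eq:euler_angles_self} (together with $\gamma_{ii}^{(1)}=\gamma_{ii}^{(n-1)}=\gamma_{ii}$) gives $\cos\gamma_{ii}-\cos\Delta\,(1+\cos\gamma_{ii})$ with $\Delta=\alpha_{ii}^{(n-1)}-\alpha_{ii}^{(1)}$; solving for $\cos\gamma_{ii}$ is legitimate since $\cos\Delta\neq 1$ by Lemma~\ref{lemma:angle_greater_pi2}, and both cases follow at once from the value of $\operatorname{tr}(g_{n})$. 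The paper never uses the trace: it computes $\cos\Delta=\dotprod{q_{ii}^{(n-1)}}{q_{ii}^{(1)}}$ directly from the definitions~\eqref{eq:q_ij_s} and~\eqref{eq:alpha_ii_q_ii} via Lagrange's identity (Lemma~\ref{lemma:before_thm_}), converts the resulting inner products $\dotprod{R_{i}^{(3)}}{g_{n}^{s}R_{i}^{(3)}}$ into $\cos\gamma_{ii}$ for $n=3$, and for $n=4$ passes through spherical coordinates (Lemma~\ref{lemma:self_ang_diff_polar}) to show both sides equal $\cos^{2}\theta_{i}$. Your argument is shorter, treats $n=3$ and $n=4$ uniformly, and in fact yields the general-$n$ relation $\cos\gamma_{ii}\,(1-\cos\Delta)=1+2\cos(2\pi/n)+\cos\Delta$; its cost is that it leans on the Euler-angle parameterization~\eqref{eq:euler_angles}/\eqref{eq:euler_angles_self}, which the paper asserts with a citation rather than proves, whereas the paper's longer route is self-contained, using only the cross-product definitions and the projection slice theorem. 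One transcription slip to fix: your displayed intermediate identity should read $\cos\gamma_{ii}\,(1-\cos\Delta)=1+2\cos(2\pi/n)+\cos\Delta$ (you dropped the $1$ when moving it across); your per-case right-hand sides ($\cos\Delta$ for $n=3$, $1+\cos\Delta$ for $n=4$) and the final formulas are nevertheless correct.
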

	
	The proof of Lemma~\ref{lemma:angle_greater_pi2} is given in Appendix~\ref{app:Proof of Lemma angle_greater_pi2}, and the proof of Lemma~\ref{lemma:main} is given in Appendix~\ref{app:Proof of Lemma main}.
	Based on~\eqref{opt:self_angles}, on Lemma~\ref{lemma:angle_greater_pi2}, and on Lemma~\ref{lemma:main}, the procedure for determining $\left\{ \Rii \right\}_{i=1}^m$ for molecules with either $C_3$ symmetry {or $C_4$ symmetry} is summarized in Algorithm~\ref{alg:selfEstC34}.
	
	\begin{algorithm*}
		\caption{Recover self relative orientations for molecules with either $C_3$ symmetry or $C_4$ symmetry}\label{alg:selfEstC34}
		\begin{algorithmic}[1]
			\State {\bfseries Input:} Images $\hatPi, \ \iInm$ and symmetry order $n$ (either $n=3$ or $n=4$).
			\If{$n = 3$}
			\For{$\iInm$}
			\State{$\left( \alpha_{ii}^{\ast(1)}, \alpha_{ii}^{\ast(n-1)} \right) \gets \argmaxl_{ \lvert \phi - \theta \rvert \in [\pi/3,\pi)}
				S_{i} \left( \phi,\theta \right)$}
			\Comment{~\eqref{opt:self_angles},~\eqref{eq:angle_greater_pi3}}
			\State{$\gamma_{ii}^\ast
				\gets
				\arccos \left( \frac{	\cos \left( \alpha_{ii}^{\ast(n-1)} - \alpha_{ii}^{\ast(1)} \right)} {1- \cos \left( \alpha_{ii}^{\ast(n-1)} - \alpha_{ii}^{\ast(1)} \right) } \right)$}
			\Comment{~\eqref{eq:cos_gamma_practice__c3}}
			\State{$\Rii \gets R_z(\alpha_{ii}^{\ast(1)}) R_x(\gamma_{ii}^\ast) R_z(-\alpha_{ii}^{\ast(n-1)}-\pi)$}			
			\Comment{~\eqref{def:R_x_R_z},~\eqref{eq:euler_angles_self}}
			\EndFor
			\ElsIf{$n = 4$}
			\For{$\iInm$}
			\State{$\left( \alpha_{ii}^{\ast(1)}, \alpha_{ii}^{\ast(n-1)} \right) \gets \argmaxl_{ \lvert \phi - \theta \rvert \in [\pi/2,\pi)}
				S_{i} \left( \phi,\theta  \right)$}
			\Comment{~\eqref{opt:self_angles},~\eqref{eq:angle_greater_pi2}}
			\State{$\gamma_{ii}^\ast
				\gets
				\arccos \left( \frac{1 + \cos \left(  \alpha_{ii}^{\ast(n-1)} - \alpha_{ii}^{\ast(1)} \right)}{1 - \cos \left(  \alpha_{ii}^{\ast(n-1)} - \alpha_{ii}^{\ast(1)}\right)} \right)$}
			\Comment{~\eqref{eq:cos_gamma_practice_}}
			\State{$\Rii \gets R_z(\alpha_{ii}^{\ast(1)}) R_x(\gamma_{ii}^\ast) R_z(-\alpha_{ii}^{\ast(n-1)}-\pi)$}			
			\Comment{~\eqref{def:R_x_R_z},~\eqref{eq:euler_angles_self}}
			\EndFor
			\EndIf
			\State {\bfseries Output:} $\Rii, \ \iInm.$
			\Comment{$\Rii \in \left\{ R_{i}^{T} g_{n} R_{i}, R_{i}^{T} g_{n}^{n-1} R_{i}, J R_{i}^{T} g_{n} R_{i} J, J R_{i}^{T} g_{n}^{n-1} R_{i} J \right\}$}
		\end{algorithmic}
	\end{algorithm*}
	\subsection{Estimating relative orientations}\label{sec:rel_deter}
	In light of Lemma~\ref{lemma:c3_c4_self_unique}, we next describe how to determine for each of the cases \mbox{$n=3$} or \mbox{$n=4$}, and for every $\ijInm$, a single relative orientation $\RigsijRj$ where $s_{ij} \in [n]$ is arbitrary, and may be different for each $\ijInm$. To this end, for every two images $\hatPi$ and $\hatPj$, we first determine using normalized cross correlation a single common line between these images. We then find the acute angle between the underlying central planes using the voting scheme~\cite{voting}. Finally, using~\eqref{eq:euler_angles} we find an estimate $\Rij$ for a relative orientation of the central planes, which due to the handedness ambiguity corresponds to either $\RigsijRj$ or $J \RigsijRj J$ for some unknown $s_{ij} \in [n]$.
	\subsection{Local handedness synchronization}\label{sec:handedness_cn_local}
	At this stage, any two estimates $\Rii$ and $\Rjj$ (obtained by Algorithm~\ref{alg:selfEstC34}) satisfy
	\begin{equation}\label{eq:Rii_Rjj_satisfy}
	\Rii \in \left\{  \RigsiRi, \ J \RigsiRi J, \right\}, \quad
	\Rjj \in \left\{  \RjgsjRj, \ J \RjgsjRj J, \right\}, \end{equation}
	for some unknown $s_{i},s_{j} \in \left\{ 1,n-1 \right\}$. In light of~\eqref{eq:linear_comb_vivi_c34_1} and~\eqref{eq:linear_comb_vivi_c34_2}, we therefore set
	\begin{equation}\label{def:vii}
	\vii
	=
	\frac{1}{n} \sum_{s=0}^{n-1}
	\left( \Rii \right)^{s}, \quad i=1,\ldots,m,
	\end{equation}
	which guarantees that $\vii \in \{\vivi, J \vivi J\}$ for every $\iInm$.
	Similarly to~\eqref{eq:Rii_Rjj_satisfy}, any estimate $\Rij$ (obtained in Section~\ref{sec:rel_deter}) satisfies
	\begin{equation}\label{eq:Rij_satisfy}
	\Rij \in \left\{  \RigsijRj, \ J \RigsijRj J, \right\},
	\end{equation}
	for some unknown $s_{ij} \in [n]$. However, in order to find an estimate $\vij$ for $\vivj$ using either~\eqref{eq:linear_comb_vivj_c34_1} or~\eqref{eq:linear_comb_vivj_c34_2}, it is essential that
	\begin{inlinelist}
		\item $s_{i}=s_{j}$
		\item either all three estimates $\Rii$, $\Rjj$, and $\Rij$ have a spurious $J$, or none do at all.
	\end{inlinelist}
	In other words, for every $ \ijInm$, the task is to manipulate $\Rii$, $\Rjj$, and $\Rij$ so that they correspond to one of the sets
	\begin{equation}\label{eq:four_poss}
	\begin{Bmatrix}
	\RigRi \\
	\RjgRj \\
	\RigsijRj
	\end{Bmatrix}, \quad
	\begin{Bmatrix}
	R_{i}^T g_{n}^{n-1} R_{i}\\
	R_{j}^T g_{n}^{n-1} R_{j}\\
	\RigsijRj
	\end{Bmatrix}, \quad
	\begin{Bmatrix}
	J \RigRi J \\
	J \RjgRj J \\
	J \RigsijRj J
	\end{Bmatrix}, \quad
	\begin{Bmatrix}
	J R_{i}^T g_{n}^{n-1} R_{i} J   \\
	J R_{j}^T g_{n}^{n-1} R_{j} J   \\
	J\RigsijRj J
	\end{Bmatrix},
	\end{equation}
	followed by setting
	\begin{equation}\label{def:vij}
	\vij
	=
	\frac{1}{n} \sum_{s=0}^{n-1}
	( \Rii )^{s}
	( \Rij )
	( \Rjj )^{s},
	\end{equation}
	as per~\eqref{eq:linear_comb_vivj_c34_1} and~\eqref{eq:linear_comb_vivj_c34_2}. By so doing, it follows that $\vij = \vivj$ whenever one of the first two sets in~\eqref{eq:four_poss} is obtained, and $\vij = J \vivj J$ whenever one of the last two sets in~\eqref{eq:four_poss} is obtained. The task of obtaining for every $\ijInm$ either one of the four sets in~\eqref{eq:four_poss} is referred as the ``local handedness synchronization" of the estimates, and will be addressed next. Once this task is completed, we are guaranteed that $\vij \in \{\vivj, J \vivj J\}$ for every $i \leq j \in [m]$.
	
	A crucial observation is that both matrices $\vivj$ and $J \vivj J$ are rank-$1$. In addition, if an estimate has a spurious $J$, e.g., if $\Rii = J \RigsiRi J$, then since $J^2=I$ it follows that $J \Rii J =  \RigsiRi$. Also, if $\Rii = \RigRi$, then since $g^{T} = g_{n}^{n-1}$, we get that $\Riitrans = R_{i}^{T} g_{n}^{n-1} R_{i}$ and vice versa. Thus, for the case of $C_3$~(i.e., $n=3$), we examine for every $\ijInm$, which of the following expressions yields a rank-$1$ matrix, each obtained by $J$-conjugating a subset of $\{\Rii, \Rjj \}$, and choosing either $\Rii$~(expressions $1$-$4$ in~\eqref{eq:eight_poss_c3}) or $\Riitrans$~(expressions $5$-$8$ in~\eqref{eq:eight_poss_c3}).
	\begin{equation}\label{eq:eight_poss_c3}
	\begin{array}{ll}
	1. \; \Rij+\Rii\Rij\Rjj + \Riitrans\Rij\Rjjtrans & \; 5. \; \Rij+\Riitrans\Rij\Rjj + \Rii\Rij\Rjjtrans\\
	2. \; \Rij+\blueJ\Rii\blueJ\Rij\Rjj + \blueJ\Riitrans\blueJ\Rij\Rjjtrans & \; 6. \; \Rij+\blueJ\Riitrans\blueJ\Rij\Rjj + \blueJ\Rii\blueJ\Rij\Rjjtrans \\
	3. \; \Rij+\Rii\Rij\blueJ\Rjj\blueJ + \Riitrans\Rij\blueJ\Rjjtrans\blueJ& \; 7. \; \Rij+\Riitrans\Rij\blueJ\Rjj\blueJ + \Rii\Rij\blueJ\Rjjtrans\blueJ \\
	4. \; \Rij+\blueJ\Rii\blueJ\Rij\blueJ\Rjj\blueJ + \blueJ\Riitrans\blueJ\Rij\blueJ\Rjjtrans\blueJ & \; 8. \; \Rij+\blueJ\Riitrans\blueJ\Rij\blueJ\Rjj\blueJ + \blueJ\Rii\blueJ\Rij\blueJ\Rjjtrans\blueJ\\
	\end{array}
	\end{equation}
	For example, the first expression in~\eqref{eq:eight_poss_c3} would yield a rank-$1$ matrix in case $s_{i}=s_{j}$ and in addition either all three estimates have a spurious $J$ in them, in which case $\vij = J \vivj J$, or when none of these estimates have a spurious $J$ in them, in which case $\vij = \vivj$. The fifth expression in~\eqref{eq:eight_poss_c3} would yield a rank-$1$ matrix in the same cases as the first expression, only that $s_{i} \neq s_{j}$. As another example, consider the case where only $\Rii$ has a spurious $J$ in it. Then, the second expression in~\eqref{eq:eight_poss_c3} would yield a rank-$1$ matrix in case $s_{i}=s_{j}$ (otherwise, if $s_{i} \neq s_{j}$, then the sixth expression would prevail).
	
	Similarly, due to~\eqref{eq:Rii_Rjj_satisfy} and~\eqref{eq:Rij_satisfy}, in order to use~\eqref{eq:linear_comb_vivj_c34_1} and~\eqref{eq:linear_comb_vivj_c34_2} for the case of $C_4$ (i.e., $n=4$), we need to examine which of the following eight expressions
	\begin{equation}\label{eq:eight_poss_c4_orig}
	\frac{1}{4} \sum_{s=0}^{3}
	( J^{\mu_{i}} \Riitilde  J^{\mu_{i}} )^{s}
	\Rij
	( J^{\mu_{j}} \Rjj J^{\mu_{j}} )^{s}, \quad \Riitilde \in \{\Rii, \Riitrans \}, \quad \mu_{i},\mu_{j} \in \{0,1\}
	\end{equation}
	yields a rank-$1$ matrix. However, it can be shown (see Appendix~\ref{app:Justification of expressions C4}) that these eight expressions are in fact equivalent to
	\begin{equation}\label{eq:eight_poss}
	\begin{array}{ll}
	1. \; \Rij+\Rii\Rij\Rjj & \quad 5. \; \Rij+\Riitrans\Rij\Rjj \\
	2. \; \Rij+\blueJ\Rii\blueJ\Rij\Rjj & \quad 6. \; \Rij+\blueJ\Riitrans\blueJ\Rij\Rjj \\
	3. \; \Rij+\Rii\Rij\blueJ\Rjj\blueJ & \quad 7. \; \Rij+\Riitrans\Rij\blueJ\Rjj\blueJ \\
	4. \; \Rij+\blueJ\Rii\blueJ\Rij\blueJ\Rjj\blueJ & \quad 8. \; \Rij+\blueJ\Riitrans\blueJ\Rij\blueJ\Rjj\blueJ \\
	\end{array}
	\end{equation}
	We thus inspect which of the expressions in~\eqref{eq:eight_poss} results in a rank-$1$ matrix.
	Note that in practice, due to misidentification of (self) common lines, it might be that none of the above expressions yields a rank-$1$ matrix. Therefore, we choose the expression that is closest to be rank-$1$. Specifically, we first compute the three singular values $s_{1}^{(k)}, s_{2}^{(k)}, s_{3}^{(k)} \in \mathbb{R}$ of each of the expressions $k=1,\ldots,8$ of~\eqref{eq:eight_poss_c3} for $n=3$, or of~\eqref{eq:eight_poss} for $n=4$. Then, we choose the expression $k^\ast$ such that
	$$
	k^\ast \gets \argmin_{k=1,\ldots,8} \left\| \left( s_{1}^{(k)}, s_{2}^{(k)}, s_{3}^{(k)} \right)^{T} - \Big( 1,0,0 \Big)^T \right\|_{2},
	$$
	and decide accordingly whether or not to transpose $\Rii$, whether or not to $J$-conjugate $\Rii$, and whether or not $J$-conjugate $\Rjj$. Finally, we apply~\eqref{def:vij} to obtain $\vij$.
	
	The procedure for finding (for molecules with either $C_{3}$ symmetry or $C_{4}$ symmetry) all estimates $\vij$ and $\vii$ which, due to the inherent handedness ambiguity, satisfy $\vij \in \{ \vivj, J \vivj J \}$ and $\vii \in \{ \vivi, J \vivi J \}$, is summarized in Algorithm~\ref{alg:vijEstC34}. This algorithm may replace Algorithm~\ref{alg:vijEstCn} for molecules with either $C_{3}$ or $C_{4}$ symmetry.
	\begin{algorithm*}
		\caption{Estimate $\vij, \ i \leq j \in [m]$, for molecules with either $C_{3}$ symmetry or $C_{4}$ symmetry}\label{alg:vijEstC34}
		\begin{algorithmic}[1]
			\State {\bfseries Input:}
			\begin{inlinelist}
				\item images $\hatPi, \ i=1,\ldots m$
				\item cyclic symmetry order $n=3$ or $n=4$.
			\end{inlinelist}			
			\State{Compute the estimates $\Rii, \ \iInm$, using Algorithm~\ref{alg:selfEstC34}}
			\For{$\ijInm$}
			\State{$\alpha_{ij}^{\ast}, \alpha_{ji}^{\ast} \gets \argmaxl_{\phi, \theta \in [0,2\pi)}
				\operatorname{Re}
				\int_{\xi}
				\hatPi \left( \xi \cos \phi, \xi \sin \phi \right)
				\conjugatet{\hatPj \left( \xi \cos \theta,\xi \sin \theta \right)}\,\mathrm{d}\xi$}
			\Comment{$~\eqref{eq:clm_four_pairs_cls}$}
			\State{Estimate $\gamma_{ij}^{\ast} \in [0,\pi)$ using the voting scheme of~\cite{voting}}
			\State{$\Rij \gets
				R_z(\alpha_{ij}^{\ast})
				R_x(\gamma_{ij}^{\ast})
				R_z(-\alpha_{ji}^{\ast})$}
			\Comment{$~\eqref{eq:euler_angles}$}
			\EndFor
			\For{$\iInm$}
			\State{$\vii
				=
				\frac{1}{n} \sum_{s=0}^{n-1}
				\left( \Rii \right)^{s}$}
			\Comment{~\eqref{def:vii}}
			\EndFor
			\For{$\ijInm$}
			\State{ $\left( \Rii^{\ast}, \Rij^{\ast}, \Rjj^{\ast} \right) \gets \text{Local handedness synchronization of} \left( \Rii, \Rij, \Rjj \right)$}
			\State{$\vij
				=
				\frac{1}{n} \sum_{s=0}^{n-1}
				( \Rii^{\ast} )^{s}
				( \Rij^{\ast} )
				( \Rjj^{\ast} )^{s}
				$}
			\Comment{~\eqref{def:vij}}
			\EndFor
			\State {\bfseries Output:} $\vij, \ i\leq j \in [m]$.
			\Comment{$\vij \in \{\vivj, J \vivj J\}$}
		\end{algorithmic}
	\end{algorithm*}
	
\section{Proofs}

\subsection{Proof of Lemma~\ref{lemma:self_cls_coincide}}\label{app:proof of lemma_self_cls_coincide}
	Let $\iInm$ and let $s \in [n-1]$. By~\eqref{eq:q_ij_s},
	\begin{equation}\label{eq:self_coincide_1}
	\left( g_{n}^{s} R_{i} \right)^T q_{ii}^{(s)}
	=
	\left( g_{n}^{s} R_{i} \right)^T \frac{R_{i}^{(3)} \times g_{n}^{s}
		R_{i}^{(3)}}{\left\|R_{i}^{(3)} \times g_{n}^{s} R_{i}^{(3)}\right\|}.
	\end{equation}
	Since $\left( g_{n}^{s} \right)^T = g_{n}^{n-s}$,~\eqref{eq:self_coincide_1} is equal to
	\begin{equation}\label{eq:self_coincide_2}
	R_{i}^T g_{n}^{n-s} \frac{R_{i}^{(3)} \times g_{n}^{s}
		R_{i}^{(3)}}{\left\|R_{i}^{(3)} \times g_{n}^{s} R_{i}^{(3)}\right\|}
	=
	R_{i}^T \frac{ g_{n}^{n-s}  R_{i}^{(3)} \times
		R_{i}^{(3)}}{\left\|R_{i}^{(3)} \times g_{n}^{s} R_{i}^{(3)}\right\|}.
	\end{equation}
	Next, since the cross-product is an anti-symmetric operation,~\eqref{eq:self_coincide_2} is equal to
	\begin{equation}\label{eq:self_coincide_3}
	-R_{i}^T \frac{R_{i}^{(3)} \times g_{n}^{n-s}
		R_{i}^{(3)}}{\left\|R_{i}^{(3)} \times g_{n}^{s} R_{i}^{(3)}\right\|}
	=
	-R_{i}^T \frac{R_{i}^{(3)} \times g_{n}^{n-s}
		R_{i}^{(3)}}{\left\|R_{i}^{(3)} \times g_{n}^{n-s} R_{i}^{(3)}\right\|}
	=
	-R_{i}^T q_{ii}^{(n-s)},
	\end{equation}
	where the first equality in~\eqref{eq:self_coincide_3} is because the vector $2$-norm is rotation invariant, and therefore the denominator may be written as
	\begin{equation*}
	\left\|R_{i}^{(3)} \times g_{n}^{s} R_{i}^{(3)}\right\|
	=
	\left\|g_{n}^{s} R_{i}^{(3)} \times R_{i}^{(3)}\right\|
	=
	\left\|g_{n}^{n-s}\left( g_{n}^{s} R_{i}^{(3)} \times R_{i}^{(3)}\right)\right\|
	=
	\left\|R_{i}^{(3)} \times g_{n}^{n-s} R_{i}^{(3)}\right\|,
	\end{equation*}
	and the last equality in~\eqref{eq:self_coincide_3} is due to~\eqref{eq:q_ij_s}. From~\eqref{eq:self_coincide_1}--\eqref{eq:self_coincide_3} we get that
	$$
	\left( g_{n}^{s} R_{i} \right)^T q_{ii}^{(s)}
	=
	-R_{i}^T q_{ii}^{(n-s)}.
	$$
	Thus, from~\eqref{eq:alpha_ii_q_ii},
	\begin{equation*}
	\begin{aligned}
	\left(\cos \alpha_{gi}^{(s)},\sin \alpha_{gi}^{(s)},0\right)^T
	&=
	-\left(\cos \alpha_{ii}^{(n-s)},\sin \alpha_{ii}^{(n-s)},0\right)^T \\
	&=
	\left(\cos \left( \alpha_{ii}^{(n-s)} + \pi \right), \sin \left( \alpha_{ii}^{(n-s)} + \pi \right), 0\right)^T,
	\end{aligned}
	\end{equation*}
	from which it follows that
	$$
	\alpha_{gi}^{(s)} = \alpha_{ii}^{(n-s)} + \pi \mod{2 \pi}.
	$$

\qed

\subsection{Proof of Lemma~\ref{lemma:g_n_ks}}\label{app:Proof of Lemma g_n_ks}

	By~\eqref{def:g},
	\begin{equation}\label{eq:g_sum_step1}
	\frac{1}{n} \sum_{s=0}^{n-1} g_{n}^{ls}
	=
	\frac{1}{n} \sum_{s=0}^{n-1} \left(
	\begin{array}{rrr}
	\cos\frac{2 \pi ls}{n} & -\sin\frac{2 \pi ls}{n} & 0\\
	\sin\frac{2 \pi ls}{n} & \cos\frac{2 \pi ls}{n} & 0\\
	0 &   0 & 1
	\end{array}
	\right)
	=
	\frac{1}{n} \left(
	\begin{array}{rrr}
	\sum_{s=0}^{n-1} \cos\frac{2 \pi ls}{n} & - \sum_{s=0}^{n-1}\sin\frac{2 \pi ls}{n} & 0\\
	\sum_{s=0}^{n-1} \sin\frac{2 \pi ls}{n} & \sum_{s=0}^{n-1} \cos\frac{2 \pi ls}{n} & 0\\
	0 &   0 & n
	\end{array}
	\right).
	\end{equation}
	Next, by using the assumption that $(l \bmod n) \neq 0$, we get that
	$$
	\sum_{s=0}^{n-1} \cos\frac{2 \pi ls}{n} + \imath \sum_{s=0}^{n-1} \sin\frac{2 \pi ls}{n}
	=
	\sum_{s=0}^{n-1} e^{\imath \frac{2 \pi ls}{n}}
	=
	\frac{1-e^{\imath 2 \pi l}}{1-e^{\imath\frac{2 \pi l}{n}}}
	=
	0.
	$$
	As such,
	$$
	\sum_{s=0}^{n-1} \cos\frac{2 \pi ls}{n} = 0, \qquad \sum_{s=0}^{n-1} \sin\frac{2 \pi ls}{n} = 0,
	$$
	and therefore, the right hand side of~\eqref{eq:g_sum_step1} is equal to $\operatorname{diag}\left( 0,0,1 \right)$, as needed.

\qed

\subsection{Proof of Lemma~\ref{lemma:in_plane_single_param}}\label{app:Proof of lemma in_plane_single_param}

	We first prove that given any such $R$ and $\tilde{R}$, there exists a unique angle $\phi \in [0,2\pi)$ such that
	\begin{equation}\label{eq:inplane_first}
	R = R_z ( \phi ) \tilde{R}.
	\end{equation}
	To this end, let us denote the three rows of $R$ by $r_{1}^{T}, r_{2}^{T}, r_{3}^{T}$, and the three rows of $\tilde{R}$ by $\tilde{r}_{1}^{T}, \tilde{r}_{2}^{T}, \tilde{r}_{3}^{T}$. Since $r_{3}=\tilde{r}_{3}$, it follows that $\tilde{r}_{3} \perp r_{1}$ and $\tilde{r}_{3} \perp r_{2}$, as well as $r_{3} \perp \tilde{r}_{1}$ and $r_{3} \perp \tilde{r}_{2}$. Thus, by direct calculation
	\begin{equation}\label{eq:RRtilde_abcd}
	R \tilde{R}^{T} =
	\begin{pmatrix}
	a & b & 0 \\
	c & d & 0 \\
	0 & 0 & 1
	\end{pmatrix},
	\end{equation}
	for some $a,b,c,d \in \mathbb{R}$. Since $SO(3)$ is closed with respect to matrix multiplication, it follows that $R \tilde{R}^{T} \in SO(3)$ as well. As such, there exists a unique angle $\phi \in [0,2\pi)$ such that
	\begin{equation}\label{eq:R_Rtilde}
	R \tilde{R}^{T} =
	\begin{pmatrix}
	\cos \phi & -\sin \phi & 0 \\
	\sin \phi & \hphantom{+}\cos \phi & 0 \\
	0 & 0 & 1
	\end{pmatrix}.
	\end{equation}
	Finally, by right multiplying~\eqref{eq:R_Rtilde} by $\tilde{R}$ we get that (since $\tilde{R}^{T} \tilde{R} = I$),
	\begin{equation*}
	R
	=
	R \tilde{R}^{T} \tilde{R}
	=
	\begin{pmatrix}
	\cos \phi & -\sin \phi & 0 \\
	\sin \phi & \hphantom{+}\cos \phi & 0 \\
	0 & 0 & 1
	\end{pmatrix}
	\tilde{R}
	=
	R_z ( \phi ) \tilde{R}.
	\end{equation*}
	which proves~\eqref{eq:inplane_first}. We next prove~\eqref{eq:inplane}. Let $n \in \mathbb{N}$, and let $s \in [n]$ be the unique number such that $\frac{2 \pi s}{n} + \phi  \mod 2 \pi \in [0,2 \pi/n)$, and further define $\theta = \left(\frac{2 \pi s}{n} + \phi \right) \ \text{mod} \ 2\pi$. By construction, $\theta \in [0,2 \pi/n)$. In addition,
	\begin{align}
	g_{n}^{s} R
	=
	g_{n}^{s} R_z( \phi) \tilde{R}
	&=
	R_z \left( \frac{2 \pi s}{n} + \phi \right) \tilde{R}\label{eq:inplane_last_a}\\
	&=
	R_z \left( \left( \frac{2 \pi s}{n} + \phi \right) \ \text{mod} \ 2\pi \right) \tilde{R} \label{eq:inplane_last_b}\\
	&=
	R_z ( \theta ) \tilde{R}\label{eq:inplane_last_c},
	\end{align}
	where~\eqref{eq:inplane_last_a} follows from~\eqref{eq:inplane_first} and from the fact that $g_{n}^{s} = R_z \left( \frac{2 \pi s}{n}\right)$,~\eqref{eq:inplane_last_b} is because $R_z(\tau) = R_z(\tau \ \text{mod} \ 2\pi)$ for any $\tau \in \mathbb{R}$, and~\eqref{eq:inplane_last_c} uses the definition of $\theta$ above. Finally, the uniqueness of $\theta$ follows from the uniqueness of $\phi$, the uniqueness of $s$ and from~\eqref{eq:inplane_last_a}--\eqref{eq:inplane_last_c}.

\qed

\subsection{Proof of Lemma~\ref{lemma:c3_c4_self_unique}}\label{app:Proof of lemma_c3_c4_self_unique}
	We first prove~\eqref{eq:linear_comb_vivj_c34_1} and~\eqref{eq:linear_comb_vivj_c34_2}. Let $a \in \{1, n-1\}$ and let $k \in [m]$. Since $R_{k} R_{k}^T = I$, we get that $\left( R_{k}^T g_{n}^a R_{k} \right)^{s} = R_{k}^T g_{n}^{as} R_{k}$ for any $s \in [n]$. In addition, since $\gcd(a,n) = 1$, it follows that $a$ is a generator of $\mathbb{Z}_{n}$. As a result,
	\begin{equation}\label{eq:a_is_generator}
	\left\{ \left( R_{k}^T g_{n}^a R_{k} \right)^{s} \right\}_{s=0}^{n-1}
	=
	\left\{ R_{k}^T g_{n}^{as} R_{k} \right\}_{s=0}^{n-1}
	=
	\left\{ \left( R_{k}^T g_{n}^s R_{k} \right) \right\}_{s=0}^{n-1}.
	\end{equation}
	Thus, for any $i, j \in [m]$ and for any $s_{ij} \in [n]$,
	\begin{equation}\label{eq:linear_comb_self_all_stp1}
	\begin{aligned}
	\frac{1}{n} \sum_{s=0}^{n-1}
	\left( R_{i}^T g_{n}^a R_{i} \right)^{s}
	\left( R_i^T g_{n}^{s_{ij}} R_j \right)
	\left( R_{j}^T g_{n}^a R_{j} \right)^{s}
	&=
	\frac{1}{n} \sum_{s=0}^{n-1}
	\left( R_{i}^T g_{n}^s R_{i} \right)
	\left( R_i^T g_{n}^{s_{ij}} R_j \right)
	\left( R_{j}^T g_{n}^s R_{j} \right)\\
	&=
	\frac{1}{n}  \sum_{s=0}^{n-1}
	R_{i}^T g_{n}^{s} g_{n}^{s_{ij}} g_{n}^{s} R_{j}\\
	&=
	R_{i}^T g_{n}^{s_{ij}} 	\left( \frac{1}{n}  \sum_{s=0}^{n-1} g_{n}^{2s} \right) R_{j}.
	\end{aligned}
	\end{equation}
	Since for any $n \ge 3$ it holds that $(2 \bmod n) \neq 0$, applying Lemma~\ref{lemma:g_n_ks} using $l=2$ to the right hand side of~\eqref{eq:linear_comb_self_all_stp1} yields
	\begin{equation*}
	R_i^T g_{n}^{s_{ij}} \operatorname{diag}\left( 0,0,1 \right) R_j
	=
	R_i^T \operatorname{diag}\left( 0,0,1 \right) R_j
	=
	\vivj,
	\end{equation*}
	where the first equality used the fact that for any $s_{ij} \in [n]$ the third column of $g_{n}^{s_{ij}}$ is equal to $\left( 0,0,1 \right)^{T}$, and therefore $g_{n}^{s_{ij}} \operatorname{diag}\left( 0,0,1 \right) = \operatorname{diag} \left( 0,0,1 \right)$. This proves~\eqref{eq:linear_comb_vivj_c34_1} and~\eqref{eq:linear_comb_vivj_c34_2}. We next prove~\eqref{eq:linear_comb_vivi_c34_1} and~\eqref{eq:linear_comb_vivi_c34_2}. To this end, for both $a=1$ and $a=n-1$,
	\begin{align}
	\frac{1}{n} \sum_{s=0}^{n-1}
	\left( R_{i}^T g_{n}^a R_{i} \right)^{s}
	&=
	\frac{1}{n} \sum_{s=0}^{n-1}
	\left( R_{i}^T g_{n}^s R_{i} \right)
	=
	R_{i}^T
	\left( \frac{1}{n} \sum_{s=0}^{n-1} g_{n}^s \right)
	R_{i} \label{eq:vii_step_1}\\
	&=
	R_i^T \operatorname{diag}\left( 0,0,1 \right) R_i
	=
	\vivi, \label{eq:vii_step_2}
	\end{align}
	where~\eqref{eq:vii_step_1} follows from~\eqref{eq:a_is_generator}, and~\eqref{eq:vii_step_2} follows from Lemma~\ref{lemma:g_n_ks} using $l=1$ which indeed satisfies $(l \bmod n) \neq 0$ for any $n \ge 3$.

\qed
	
	\subsection{Proof of Lemma~\ref{lemma:angle_greater_pi2}}\label{app:Proof of Lemma angle_greater_pi2}
	We need first the following two lemmas.
	\begin{lemma}\label{lemma:before_thm_}
		For any $n \in \mathbb{N}$, $\iInm$, and $s \in [n-1]$,
		\begin{equation}\label{eq:lemma_qii_cn}
		\dotprod{q_{ii}^{(n-s)}}{q_{ii}^{(s)}}
		=
		\frac{\dotprod{R_{i}^{(3)}}{g_{n}^{2s} R_{i}^{(3)}} - \dotprod{R_{i}^{(3)}}{g_{n}^{s} R_{i}^{(3)}}^{2}}{1-\dotprod{R_{i}^{(3)}}{g_{n}^{s} R_{i}^{(3)}}^2},
		\end{equation}
		where $q_{ii}^{(n-s)}$ and $q_{ii}^{(s)}$ are defined in~\eqref{eq:q_ij_s}.
	\end{lemma}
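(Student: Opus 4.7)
\bigskip

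\noindent\textbf{Proof proposal for Lemma~\ref{lemma:before_thm_}.} The plan is a direct computation of the inner product of the two unit vectors. Writing $a := R_i^{(3)}$ for brevity (so $\|a\|=1$), we have
\begin{equation*}
q_{ii}^{(s)} = \frac{a \times g_n^{s} a}{\|a \times g_n^{s} a\|}, \qquad
q_{ii}^{(n-s)} = \frac{a \times g_n^{n-s} a}{\|a \times g_n^{n-s} a\|},
\end{equation*}
so I would compute the numerator and denominator of $\dotprod{q_{ii}^{(n-s)}}{q_{ii}^{(s)}}$ separately.

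\medskip

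\noindent\emph{Denominator.} Using the identity $\|u \times v\|^2 = \|u\|^2\|v\|^2 - \dotprod{u}{v}^2$ together with $\|a\|=\|g_n^{\pm s} a\|=1$, I get $\|a \times g_n^{s} a\|^2 = 1 - \dotprod{a}{g_n^{s} a}^2$ and similarly for $g_n^{n-s}a$. The two are in fact equal, because $g_n^n = I$ and $g_n^T = g_n^{n-1}$ give $\dotprod{a}{g_n^{n-s} a} = \dotprod{g_n^s a}{a} = \dotprod{a}{g_n^{s} a}$. Hence the product of the two denominators is simply $1 - \dotprod{R_i^{(3)}}{g_n^{s} R_i^{(3)}}^2$, matching the right-hand side of~\eqref{eq:lemma_qii_cn}.

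\medskip

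\noindent\emph{Numerator.} I would invoke the Binet--Cauchy / Lagrange identity
\begin{equation*}
\dotprod{u \times v}{w \times z} = \dotprod{u}{w}\dotprod{v}{z} - \dotprod{u}{z}\dotprod{v}{w},
\end{equation*}
with $u=w=a$, $v = g_n^{n-s} a$, $z = g_n^{s} a$, yielding
\begin{equation*}
\dotprod{a \times g_n^{n-s} a}{a \times g_n^{s} a}
= \dotprod{g_n^{n-s} a}{g_n^{s} a} - \dotprod{a}{g_n^{s} a}\,\dotprod{a}{g_n^{n-s} a}.
\end{equation*}
Using $(g_n^{n-s})^T g_n^{s} = g_n^{-(n-s)} g_n^{s} = g_n^{2s-n} = g_n^{2s}$ (since $g_n^n = I$), the first term becomes $\dotprod{a}{g_n^{2s} a}$. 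For the second term, as noted above, $\dotprod{a}{g_n^{n-s} a} = \dotprod{a}{g_n^{s} a}$, so the product equals $\dotprod{a}{g_n^{s} a}^2$.

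\medskip

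\noindent Combining the two parts gives precisely~\eqref{eq:lemma_qii_cn}. There is no real obstacle: the argument is a routine application of $\|\cdot\|^2 = \dotprod{\cdot}{\cdot}$ and the vector quadruple-product identity, and the only subtlety is to remember that the two denominators are individually square roots but coincide, so their product is exactly $1 - \dotprod{R_i^{(3)}}{g_n^{s} R_i^{(3)}}^2$ (no square root remains), and that transposes of $g_n^k$ should be converted to $g_n^{n-k}$ using $g_n^n = I$.
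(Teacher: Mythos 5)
Your proposal is correct and follows essentially the same route as the paper: the numerator is handled by the same Lagrange (Binet--Cauchy) identity with the same substitution, and the claim $\dotprod{R_i^{(3)}}{g_n^{n-s}R_i^{(3)}}=\dotprod{R_i^{(3)}}{g_n^{s}R_i^{(3)}}$ reduces the second term and the denominators exactly as in the paper. The only cosmetic difference is that you obtain equality of the two denominators via the norm form of Lagrange's identity plus the transpose relation $(g_n^{n-s})^T=g_n^{s}$, whereas the paper argues it through rotation invariance of the $2$-norm and the sine formula for the cross product; both are immediate.
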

	\begin{proof}
		Let $n \in \mathbb{N}$, $\iInm$, and $s \in [n-1]$. By~\eqref{eq:q_ij_s}
		\begin{equation}\label{eq:qii1_qii3_}
		\dotprod{q_{ii}^{(n-s)}}{q_{ii}^{(s)}}
		=
		\frac{\dotprod{R_{i}^{(3)} \times g_{n}^{n-s} R_{i}^{(3)}}{R_{i}^{(3)} \times g_{n}^{s} R_{i}^{(3)}}}{\left\|R_{i}^{(3)} \times g_{n}^{n-s} R_{i}^{(3)}\right\| \left\|R_{i}^{(3)} \times g_{n}^{s} R_{i}^{(3)}\right\|}
		=
		\frac{\dotprod{R_{i}^{(3)} \times g_{n}^{n-s} R_{i}^{(3)}}{R_{i}^{(3)} \times g_{n}^{s} R_{i}^{(3)}}}{\left\|R_{i}^{(3)} \times g_{n}^{s} R_{i}^{(3)}\right\|^2 },
		\end{equation}
		where the second equality uses the fact that the vector $2$-norm is rotation invariant, and therefore
		$$
		\left\|R_{i}^{(3)} \times g_{n}^{n-s} R_{i}^{(3)}\right\|
		=
		\left\| g_{n}^{s} \left( R_{i}^{(3)} \times g_{n}^{n-s} R_{i}^{(3)} \right) \right\|
		=
		\left\| g_{n}^{s} R_{i}^{(3)} \times R_{i}^{(3)}\right\|
		=
		\left\|R_{i}^{(3)} \times g_{n}^{s} R_{i}^{(3)}\right\|.
		$$
		The nominator in~\eqref{eq:qii1_qii3_} may be simplified by using Lagrange's identity $\dotprod{a \times b}{c \times d} = \dotprod{a}{c}\dotprod{b}{d} - \dotprod{a}{d}\dotprod{b}{c}$ which holds for any $a,b,c,d \in \mathbb{R}^3$, so that
		\begin{align}
		&\dotprod{R_{i}^{(3)} \times g_{n}^{n-s} R_{i}^{(3)}}{R_{i}^{(3)} \times g_{n}^{s} R_{i}^{(3)}}\\
		=&\dotprod{R_{i}^{(3)}}{R_{i}^{(3)}} \dotprod{g_{n}^{n-s} R_{i}^{(3)}}{g_{n}^{s} R_{i}^{(3)}} -\dotprod{R_{i}^{(3)}}{g_{n}^{s} R_{i}^{(3)}} \dotprod{g_{n}^{n-s} R_{i}^{(3)}}{R_{i}^{(3)}} \nonumber \\
		=&\dotprod{g_{n}^{n-s} R_{i}^{(3)}}{g_{n}^{s} R_{i}^{(3)}} - \dotprod{R_{i}^{(3)}}{g_{n}^{s} R_{i}^{(3)}} \dotprod{g_{n}^{n-s} R_{i}^{(3)}}{R_{i}^{(3)}} \label{eq:nom_1}  \\
		=&\dotprod{R_{i}^{(3)}}{g_{n}^{2s} R_{i}^{(3)}} - \dotprod{R_{i}^{(3)}}{g_{n}^{s} R_{i}^{(3)}}^{2},
		\label{eq:nom_2}
		\end{align}
		where~\eqref{eq:nom_1} uses the fact that $\dotprod{R_{i}^{(3)}}{R_{i}^{(3)}}=1$, and~\eqref{eq:nom_2} is because $\left(g_{n}^{n-s}\right)^{T} = g_{n}^{s}$. As for the denominator in~\eqref{eq:qii1_qii3_}, since the magnitude of the cross-product is given by the sine of the angle between its arguments, and since $\left\|R_{i}^{(3)} \right \| = \left\| g_{n}^{s} R_{i}^{(3)} \right\| = 1$, it follows that
		\begin{equation}\label{eq:denom_}
		\left\|R_{i}^{(3)} \times g_{n}^{s} R_{i}^{(3)}\right\|^{2}
		=
		1-\dotprod{R_{i}^{(3)}}{g_{n}^{s} R_{i}^{(3)}}^2.
		\end{equation}
		Plugging-in~\eqref{eq:nom_2} and~\eqref{eq:denom_} into~\eqref{eq:qii1_qii3_} yields~\eqref{eq:lemma_qii_cn} which completes the proof.
	\end{proof}
	
	\begin{lemma}\label{lemma:self_ang_diff_polar}
		For any $\iInm$, let $R_{i}^{(3)}=(\sin\theta_{i} \cos\phi_{i},\sin\theta_{i}\sin\phi_{i},\cos\theta_{i})^T$ be the representation of $R_{i}^{(3)}$ in spherical coordinates for some $\theta_{i} \in [0,\pi)$ and $\phi_{i} \in [0,2\pi)$. Then,
		\begin{subnumcases}{\cos \left( \alpha_{ii}^{(n-1)}-\alpha_{ii}^{(1)} \right)=}
		\frac{\cos^2\theta_{i}-\frac{1}{2}\sin^2\theta_{i}}{1+\cos^2\theta_{i}-\frac{1}{2}\sin^2\theta_{i}}, & if $n=3$, \label{eq:lemma_c3}\\
		\frac{\cos^2\theta_{i}-1}{\cos^2\theta_{i}+1}, & if $n=4$. \label{eq:lemma_c4}
		\end{subnumcases}
	\end{lemma}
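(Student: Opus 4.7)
\textbf{Proof plan for Lemma~\ref{lemma:self_ang_diff_polar}.}

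The approach is to reduce the statement to Lemma~\ref{lemma:before_thm_} by first expressing $\cos(\alpha_{ii}^{(n-1)} - \alpha_{ii}^{(1)})$ as the inner product $\langle q_{ii}^{(n-1)}, q_{ii}^{(1)} \rangle$, and then substituting the explicit spherical--coordinates formulas for the scalar products $\langle R_i^{(3)}, g_n^k R_i^{(3)} \rangle$ appearing on the right-hand side of~\eqref{eq:lemma_qii_cn}.

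First I would use~\eqref{eq:alpha_ii_q_ii} together with the fact that $R_i R_i^T = I$ to write
\begin{equation*}
\cos(\alpha_{ii}^{(n-1)} - \alpha_{ii}^{(1)})
= \langle R_i^T q_{ii}^{(n-1)}, R_i^T q_{ii}^{(1)} \rangle
= \langle q_{ii}^{(n-1)}, q_{ii}^{(1)} \rangle,
\end{equation*}
since the third components of $R_i^T q_{ii}^{(s)}$ and $R_i^T q_{ii}^{(n-s)}$ both vanish. Applying Lemma~\ref{lemma:before_thm_} with $s=1$ then gives
\begin{equation*}
\cos(\alpha_{ii}^{(n-1)} - \alpha_{ii}^{(1)})
=
\frac{\langle R_i^{(3)}, g_n^{2} R_i^{(3)} \rangle - \langle R_i^{(3)}, g_n R_i^{(3)} \rangle^{2}}{1 - \langle R_i^{(3)}, g_n R_i^{(3)} \rangle^{2}}.
\end{equation*}

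Next, substituting the spherical representation $R_i^{(3)}=(\sin\theta_i \cos\phi_i, \sin\theta_i \sin\phi_i, \cos\theta_i)^T$ and using the trigonometric identity $\cos\phi\cos(\phi+\beta) + \sin\phi\sin(\phi+\beta) = \cos\beta$, a direct computation yields
\begin{equation*}
\langle R_i^{(3)}, g_n^{k} R_i^{(3)} \rangle
= \sin^2\theta_i \cos(2\pi k / n) + \cos^2\theta_i,
\end{equation*}
which is manifestly independent of $\phi_i$. For $n=3$, both $k=1$ and $k=2$ give the same value $a := \cos^2\theta_i - \tfrac12 \sin^2\theta_i$, so the formula collapses to $(a - a^2)/(1 - a^2) = a/(1+a)$, which is exactly~\eqref{eq:lemma_c3}. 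For $n=4$, $k=1$ gives $a := \cos^2\theta_i$ (since $\cos(\pi/2)=0$) while $k=2$ gives $2\cos^2\theta_i - 1$ (since $\cos\pi = -1$); the numerator becomes $2\cos^2\theta_i - 1 - \cos^4\theta_i = -(1-\cos^2\theta_i)^2$ and the denominator factors as $(1-\cos^2\theta_i)(1+\cos^2\theta_i)$, so one common factor cancels to yield $(\cos^2\theta_i - 1)/(\cos^2\theta_i + 1)$, which is exactly~\eqref{eq:lemma_c4}.

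The only mild obstacle is the algebraic simplification in the $n=4$ case, where one must recognize the numerator as a perfect square in order to cancel with the denominator; no new geometric insight is required beyond the identification $\cos(\alpha_{ii}^{(n-1)} - \alpha_{ii}^{(1)}) = \langle q_{ii}^{(n-1)}, q_{ii}^{(1)} \rangle$ established in the first paragraph.
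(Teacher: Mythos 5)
Your proposal is correct and follows essentially the same route as the paper's proof: identify $\cos(\alpha_{ii}^{(n-1)}-\alpha_{ii}^{(1)})$ with $\langle q_{ii}^{(n-1)}, q_{ii}^{(1)}\rangle$ via~\eqref{eq:alpha_ii_q_ii}, apply Lemma~\ref{lemma:before_thm_} with $s=1$, and substitute $\langle R_i^{(3)}, g_n^{s} R_i^{(3)}\rangle = \cos^2\theta_i + \sin^2\theta_i\cos(2\pi s/n)$ before simplifying each case algebraically. The only cosmetic difference is that for $n=3$ the paper invokes $g_3^T=g_3^2$ to equate the two inner products, whereas you observe directly that $\cos(2\pi/3)=\cos(4\pi/3)$ — the same simplification either way.
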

	\begin{proof}
		Let $\iInm$. For any $n \in \mathbb{N}, \ n>1$,
		\begin{align}
		\cos \left( \alpha_{ii}^{(n-1)} - \alpha_{ii}^{(1)} \right)
		&=
		\dotprod{R_{i}^{T} q_{ii}^{(n-1)}}{R_{i}^{T} q_{ii}^{(1)}}
		=
		\dotprod{q_{ii}^{(n-1)}}{q_{ii}^{(1)}} \label{eq:step_1} \\
		&=
		\frac{\dotprod{R_{i}^{(3)}}{g_{n}^{2} R_{i}^{(3)}} - \dotprod{R_{i}^{(3)}}{g_{n} R_{i}^{(3)}}^{2}}{1-\dotprod{R_{i}^{(3)}}{g_{n} R_{i}^{(3)}}^2}, \label{eq:step_2}
		\end{align}
		where~\eqref{eq:step_1} is due to~\eqref{eq:alpha_ii_q_ii} and because $R_{i} R_{i}^{T} = I$, and~\eqref{eq:step_2} is the application of Lemma~\ref{lemma:before_thm_} with $s=1$. By~\eqref{def:g}, we get by a direct calculation that for any $s \in \mathbb{N}$,
		\begin{equation*}
		g_{n}^{s}  R_{i}^{(3)}=\left(  \sin\theta_{i}\cos\left( \phi_{i} + 2s\pi/n \right), \sin\theta_{i}\sin\left( \phi_{i} + 2s\pi/n \right), \cos\theta_{i} \right)^{T},
		\end{equation*}
		from which it follows that
		\begin{equation}\label{eq:gsRi_spher_coord_}
		\dotprod{R_{i}^{(3)}}{g_{n}^{s} R_{i}^{(3)}} = \cos^2\theta_{i} + \sin^2\theta_{i}\cos \left( 2s\pi/n \right), \quad \forall s \in \mathbb{N}.
		\end{equation}
		We first prove~\eqref{eq:lemma_c3} (i.e., the case $n=3$). To this end,~\eqref{eq:step_1}--\eqref{eq:step_2} become
		\begin{equation}\label{eq:tmpp_c3}
		\cos \left( \alpha_{ii}^{(2)} - \alpha_{ii}^{(1)} \right)
		=
		\frac{\dotprod{R_{i}^{(3)}}{g_{3}^{2} R_{i}^{(3)}} - \dotprod{R_{i}^{(3)}}{g_{3} R_{i}^{(3)}}^{2}}{1-\dotprod{R_{i}^{(3)}}{g_{3} R_{i}^{(3)}}^2}.
		\end{equation}
		Since $g_{3}^{T} = g_{3}^{2}$ it follows that
		$$
		\dotprod{R_{i}^{(3)}}{g_{3}^{2} R_{i}^{(3)}}
		=
		\dotprod{R_{i}^{(3)}}{g_{3}^{T} R_{i}^{(3)}}
		=
		\dotprod{g_{3} R_{i}^{(3)}}{ R_{i}^{(3)}}
		=
		\dotprod{ R_{i}^{(3)}}{ g_{3} R_{i}^{(3)}}.
		$$
		As such,~\eqref{eq:tmpp_c3} reduces to
		\begin{align}
		\frac{\dotprod{R_{i}^{(3)}}{g_{3} R_{i}^{(3)}} - \dotprod{R_{i}^{(3)}}{g_{3} R_{i}^{(3)}}^{2}}{1-\dotprod{R_{i}^{(3)}}{g_{3} R_{i}^{(3)}}^2}
		&=
		\frac{\dotprod{R_{i}^{(3)}}{g_{3} R_{i}^{(3)}}
			\left(
			1-\dotprod{R_{i}^{(3)}}{g_{3} R_{i}^{(3)}}
			\right)}{\left(
			1+\dotprod{R_{i}^{(3)}}{g_{3} R_{i}^{(3)}}
			\right)\left(
			1-\dotprod{R_{i}^{(3)}}{g_{3} R_{i}^{(3)}}
			\right)} \nonumber \\
		&=
		\frac{\dotprod{R_{i}^{(3)}}{g_{3} R_{i}^{(3)}}}{1+\dotprod{R_{i}^{(3)}}{g_{3} R_{i}^{(3)}}}. \label{lemma:last_step}
		\end{align}
		Next, applying~\eqref{eq:gsRi_spher_coord_} with $n=3$ and $s=1$ gives
		\begin{equation*}
		\dotprod{R_{i}^{(3)}}{g_{3} R_{i}^{(3)}}
		=
		\cos^2\theta_{i} + \sin^2\theta_{i} \cos \left( 2\pi/3 \right)
		=
		\cos^2\theta_{i} - \frac{1}{2}\sin^2\theta_{i},
		\end{equation*}
		and plugging this in~\eqref{lemma:last_step} yields~\eqref{eq:lemma_c3}.
		
		We next prove~\eqref{eq:lemma_c4}. Applying~\eqref{eq:gsRi_spher_coord_} with $n=4$ and $s=1$, and with $n=4$ and $s=2$ yields
		\begin{equation*}
		\dotprod{R_{i}^{(3)}}{g_{4} R_{i}^{(3)}} = \cos^2\theta_{i}, \quad
		\dotprod{R_{i}^{(3)}}{g_{4}^{2} R_{i}^{(3)}} = \cos^2\theta_{i} - \sin^2\theta_{i}.
		\end{equation*}
		Plugging this in~\eqref{eq:step_2} yields
		\begin{equation}\label{eq:tmp4_}
		\frac{\cos^2\theta_{i} - \sin^2\theta_{i} - \cos^4\theta_{i}}{1-\cos^4\theta_{i}}
		=
		\frac{\left( \cos^2\theta_{i}-1 \right) \left( 1-\cos^2\theta_{i} \right)}{\left( 1+\cos^2\theta_{i} \right)\left( 1-\cos^2\theta_{i} \right)}
		=
		\frac{\cos^2\theta_{i}-1}{\cos^2\theta_{i}+1},
		\end{equation}
		which proves~\eqref{eq:lemma_c4}.
	\end{proof}
	
We are now ready to prove Lemma~\ref{lemma:angle_greater_pi2}.
	\begin{proof}
		We begin by proving~\eqref{eq:angle_greater_pi3}. Consider the function $f \colon \mathbb{R} \mapsto \mathbb{R}$ where
		\begin{equation}\label{def:f_theta}
		f\left( \theta \right)
		= \frac{\cos^2\theta-\frac{1}{2}\sin^2\theta}{1+\cos^2\theta-\frac{1}{2}\sin^2\theta}, \quad \theta \in \mathbb{R}.
		\end{equation}
		Then, in light of~\eqref{eq:lemma_c3}, it suffices to show that $\arccos f\left( \theta \right) \geq \pi/3$ for any $\theta \in \mathbb{R}$. As it may readily be verified, $f$ is periodic with period $\pi$, and therefore $\arccos f$ is periodic with the same period. In addition, $f\left( \pi/2 + \theta \right) = f\left( \pi/2 - \theta \right)$ for any $\theta \in \mathbb{R}$, see Figure~\ref{fig:acosf}. As a result, it suffices to show that $\arccos f\left( \theta \right) \geq \pi/3$ for any $\theta \in [0,\pi/2]$. To this end, by the chain-rule,
		\begin{equation}
		\begin{aligned}
		\frac{d \left( \arccos f \right)}{d \theta}
		&=
		\frac{-1}{\sqrt{1-f^2}}\frac{df}{d \theta}
		=
		\frac{-1}{\sqrt{1-\left( \frac{\cos^2\theta-\frac{1}{2}\sin^2\theta}{1+\cos^2\theta-\frac{1}{2}\sin^2\theta}\right)^2}}
		\frac{-3\sin \theta \cos \theta}{\left( 1+\cos^2\theta-\frac{1}{2}\sin^2\theta \right)^2}\\
		&=
		\frac{1+\cos^2\theta-\frac{1}{2}\sin^2\theta}{\sqrt{3}\cos\theta}
		\frac{3\sin \theta \cos \theta}{\left( 1+\cos^2\theta-\frac{1}{2}\sin^2\theta \right)^2}\\
		&=
		\frac{\sqrt{3}\sin\theta}{1+\cos^2\theta-\frac{1}{2}\sin^2\theta} \ge 0,
		\end{aligned}
		\end{equation}
		for any $\theta \in [0,\pi/2]$. As such, $\arccos f$ is non-decreasing in $[0,\pi/2]$. Thus, since $\arccos f$ is continuous, it follows that for any $\theta \in [0,\pi/2]$,
		\begin{align*}
		\arccos f\left( \theta \right)
		&=
		\arccos \left( \frac{\cos^2\theta-\frac{1}{2}\sin^2\theta}{1+\cos^2\theta-\frac{1}{2}\sin^2\theta} \right)\\
		&\ge
		\arccos \left( \frac{\cos^2 0-\frac{1}{2}\sin^2 0}{1+\cos^2 0-\frac{1}{2}\sin^2 0} \right)\\
		&=
		\arccos \frac{1}{2}
		=
		\frac{\pi}{3},
		\end{align*}
		which proves~\eqref{eq:angle_greater_pi3}, see Figure~\ref{fig:acosf}. Similarly, in order to prove~\eqref{eq:angle_greater_pi2}, consider the function $g \colon \mathbb{R} \mapsto \mathbb{R}$ where
		\begin{equation}\label{def:g_theta}
		g\left( \theta \right) = \frac{\cos^2\theta-1}{\cos^2\theta+1}, \quad \theta \in \mathbb{R}.
		\end{equation}
		In light of~\eqref{eq:lemma_c4}, it suffices to show that $\arccos g\left( \theta \right) \geq \pi/2$ for any $\theta \in \mathbb{R}$. Since $g$ is periodic with period $\pi$, it follows that $\arccos g$ is periodic with the same period. In addition, $g\left( \pi/2 + \theta \right) = g\left( \pi/2 - \theta \right)$ for any $\theta \in \mathbb{R}$, see Figure~\ref{fig:acosg}. As a result, it suffices to show that $\arccos g\left( \theta \right) \geq \pi/2$ for any $\theta \in [0,\pi/2]$. To this end, by the chain-rule
		\begin{align*}
		\frac{d \left( \arccos g \right)}{d \theta}
		&=
		\frac{-1}{\sqrt{1-g^2}}\frac{dg}{d \theta}
		=
		\frac{-1}{\sqrt{1-\left( \frac{\cos^2\theta-1}{\cos^2\theta+1} \right)^2}}
		\frac{-4\sin \theta \cos \theta }{\left( \cos^2\theta+1 \right)^{2}}\\
		&=
		\frac{\cos^2 \theta + 1}{2 \cos \theta}
		\frac{4 \sin \theta \cos \theta}{\left( \cos^2 \theta + 1 \right)^2}
		=
		\frac{2 \sin \theta }{\cos^2 \theta + 1}.
		\end{align*}
		Thus, since $\frac{2 \sin \theta }{\cos^2 \theta + 1} \ge 0$ for any $\theta \in [0,\pi/2]$, we conclude that $\arccos g$ is non-decreasing in $[0,\pi/2]$, and since $\arccos g$ is continuous, it follows that for any $\theta \in [0,\pi/2]$,
		$$
		\arccos g\left( \theta \right)
		=
		\arccos \left( \frac{\cos^2\theta-1}{\cos^2\theta+1}\right)
		\ge
		\arccos \left( \frac{\cos^2 0 - 1}{\cos^2 0 + 1}\right)
		=
		\arccos 0
		=
		\frac{\pi}{2},
		$$
		which proves~\eqref{eq:angle_greater_pi2}, see Figure~\ref{fig:acosg}.
	\end{proof}
	
	\begin{figure}
		\centering
		\subfloat[The graph of $\arccos f$, see~\eqref{def:f_theta}.]{
			\includegraphics[width=2.5in]{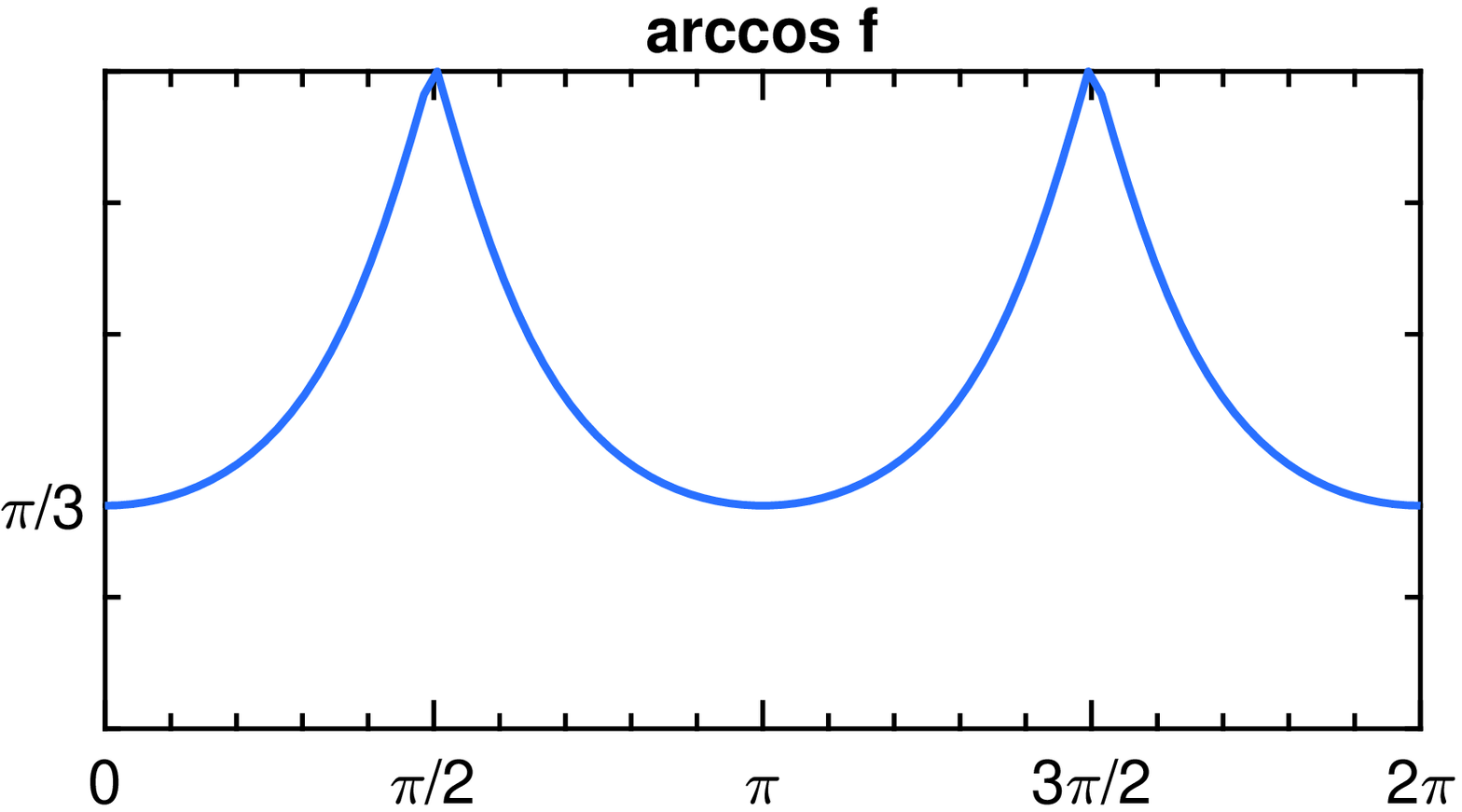}
			\label{fig:acosf}
		}
		\subfloat[The graph of $\arccos g$, see~\eqref{def:g_theta}.]{
			\centering
			\includegraphics[width=2.5in]{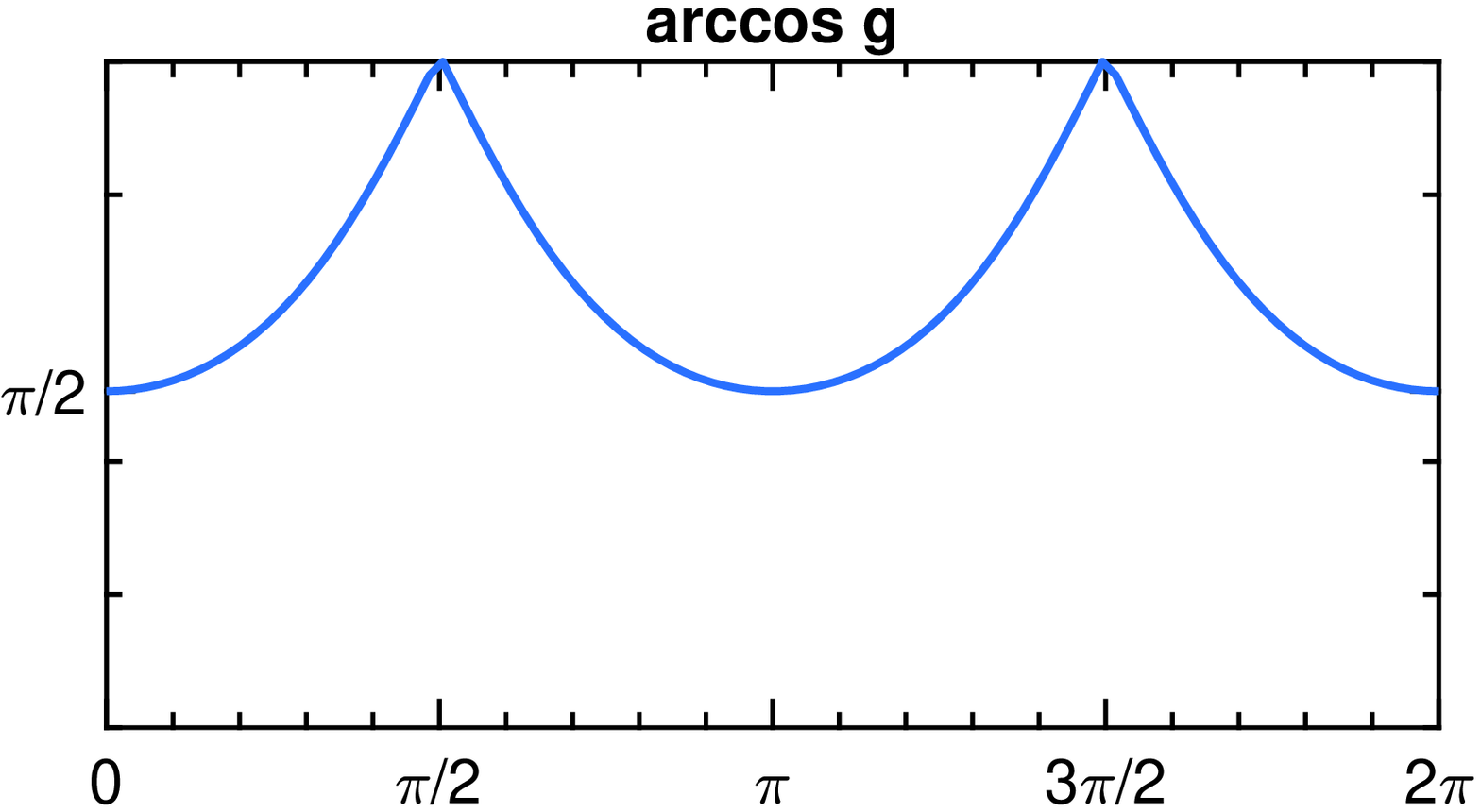}
			\label{fig:acosg}	
		}
		\caption{The graphs of the functions used in Lemma~\ref{lemma:angle_greater_pi2}.}
	\end{figure}

\subsection{Proof of Lemma~\ref{lemma:main}}\label{app:Proof of Lemma main}

	Let $\iInm$. We begin by proving~\eqref{eq:cos_gamma_practice__c3} (i.e., the case where $n=3$). It holds that (in fact for any $n \in \mathbb{N}$ and not just for $n=3$),
	\begin{align}
	\cos \left( \alpha_{ii}^{(n-1)} - \alpha_{ii}^{(1)} \right)
	&=
	\dotprod{R_{i}^{T} q_{ii}^{(n-1)}}{R_{i}^{T} q_{ii}^{(1)}}
	=
	\dotprod{q_{ii}^{(n-1)}}{q_{ii}^{(1)}} \label{eq:step_1_} \\
	&=
	\frac{\dotprod{R_{i}^{(3)}}{g_{n}^{2} R_{i}^{(3)}} - \dotprod{R_{i}^{(3)}}{g_{n} R_{i}^{(3)}}^{2}}{1-\dotprod{R_{i}^{(3)}}{g_{n} R_{i}^{(3)}}^2}, \label{eq:step_2_}
	\end{align}
	where the first equality in~\eqref{eq:step_1_} is due to~\eqref{eq:alpha_ii_q_ii}, the second equality in~\eqref{eq:step_1_} is because $R_{i} R_{i}^{T} = I$, and~\eqref{eq:step_2_} is the result of applying Lemma~\ref{lemma:before_thm_} with $s=1$. For $n=3$,~\eqref{eq:step_1_}--\eqref{eq:step_2_} become
	\begin{equation}\label{eq:tmp_c3}
	\cos \left( \alpha_{ii}^{(2)} - \alpha_{ii}^{(1)} \right)
	=
	\frac{\dotprod{R_{i}^{(3)}}{g_{3}^{2} R_{i}^{(3)}} - \dotprod{R_{i}^{(3)}}{g_{3} R_{i}^{(3)}}^{2}}{1-\dotprod{R_{i}^{(3)}}{g_{3} R_{i}^{(3)}}^2}.
	\end{equation}
	Next, by the projection slice theorem, it follows that
	\begin{equation}
	\begin{gathered}
	\left \langle R_{i}^{(3)},g_{3} R_{i}^{(3)} \right \rangle
	=
	\cos \gamma_{ii}^{(1)} = \cos \gamma_{ii}, \\
	\left \langle R_{i}^{(3)},g_{3}^{2} R_{i}^{(3)} \right \rangle
	=
	\cos \gamma_{ii}^{(2)} = \cos \gamma_{ii},
	\end{gathered}
	\end{equation}
	where we have used the fact that $\gamma_{ii}^{(1)} = \gamma_{ii}^{(2)}$ (see text after~\eqref{eq:gamma_n_minus_one}). Thus,~\eqref{eq:tmp_c3} may be written as
	\begin{equation}\label{eq:tmpp_c3_}
	\cos \left( \alpha_{ii}^{(2)} - \alpha_{ii}^{(1)} \right)
	=
	\frac{\cos \gamma_{ii} - \cos^{2} \gamma_{ii}}{1-\cos^{2} \gamma_{ii}}
	=
	\frac{\cos \gamma_{ii}}{1+\cos \gamma_{ii}},
	\end{equation}
	and solving~\eqref{eq:tmpp_c3_} for $\cos \gamma_{ii}$ yields~\eqref{eq:cos_gamma_practice__c3}.
	
	We next prove~\eqref{eq:cos_gamma_practice_} (i.e., the case where $n=4$). Let $\iInm$, and let $R_{i}^{(3)}=(\sin\theta_{i}\cos\phi_{i},\sin\theta_{i}\sin\phi_{i},\cos\theta_{i})^T$ be the representation of $R_{i}^{(3)}$ in spherical coordinates for some $\theta_{i} \in [0,\pi)$ and $\phi_{i} \in [0,2\pi)$. On the one hand, by the projection slice theorem,
	\begin{equation}\label{eq:left_c4_pre}
	\cos \gamma_{ii}
	=
	\dotprod{R_{i}^{(3)}}{g_{4} R_{i}^{(3)}}.
	\end{equation}
	Next, since $g_{4} = R_z(\pi/2)$, we get by a direct calculation that
	$$
	g_{4} R_{i}^{(3)}=\left( \sin\theta_{i}\cos\left( \phi_{i} + \pi/2 \right), \sin\theta_{i}\sin\left( \phi_{i} + \pi/2 \right), \cos\theta_{i} \right)^{T},
	$$
	and therefore by a direct calculation it follows that
	\begin{equation}\label{eq:left_c4}
	\dotprod{R_{i}^{(3)}}{g_{4} R_{i}^{(3)}}
	=
	\cos^2\theta_{i}.
	\end{equation}
	Thus, from~\eqref{eq:left_c4_pre} and~\eqref{eq:left_c4} we get that
	\begin{equation}\label{eq:left_c4_post}
	\cos \gamma_{ii}
	=
	\cos^2\theta_{i}.
	\end{equation}
	On the other hand, by~\eqref{eq:lemma_c4} in Lemma~\ref{lemma:self_ang_diff_polar}
	\begin{equation}\label{eq:right}
	\frac{1 + \cos \left(  \alpha_{ii}^{(n-1)} - \alpha_{ii}^{(1)} \right)}{1 - \cos \left(  \alpha_{ii}^{(n-1)} - \alpha_{ii}^{(1)}\right)}
	=
	\frac{1 + \cos \left(  \alpha_{ii}^{(3)} - \alpha_{ii}^{(1)} \right)}{1 - \cos \left(  \alpha_{ii}^{(3)} - \alpha_{ii}^{(1)}\right)}
	=
	\frac{1+\frac{\cos^2\theta_{i}-1}{\cos^2\theta_{i}+1}}{1-\frac{\cos^2\theta_{i}-1}{\cos^2\theta_{i}+1}}
	=
	\frac{2\cos^2\theta_{i}}{2}
	=
	\cos^2\theta_{i}.
	\end{equation}
	Equation~\eqref{eq:cos_gamma_practice_} now follows from~\eqref{eq:left_c4_post} and~\eqref{eq:right}.

\qed

\section{Justification for the expressions in~\eqref{eq:eight_poss}}\label{app:Justification of expressions C4}
We prove that the expressions listed in~\eqref{eq:eight_poss} are equivalent to the expressions in~\eqref{eq:eight_poss_c4_orig}. To this end, notice that since $g_{4}^{4} = I$, it follows that for any $s_{ij} \in \{0,1,2,3\}$,
\begin{equation}\label{eq:justification_c4}	
\begin{aligned}
&\left( R_{i}^T g_{4} R_{i} \right)^0
\left( R_{i}^{T} g_{4}^{s_{ij}} R_{j} \right)
\left( R_{j}^T g_{4} R_{j} \right)^0
=
\left( R_{i}^T g_{4}^{3} R_{i} \right)^0
\left( R_{i}^{T} g_{4}^{s_{ij}} R_{j} \right)
\left( R_{j}^T g_{4}^{3} R_{j} \right)^0
=
R_{i}^{T} g_{4}^{s_{ij}} R_{j}	
\\
&\left( R_{i}^T g_{4} R_{i} \right)^1
\left( R_{i}^{T} g_{4}^{s_{ij}} R_{j} \right)
\left( R_{j}^T g_{4} R_{j} \right)^1
=
\left( R_{i}^T g_{4}^{3} R_{i} \right)^1
\left( R_{i}^{T} g_{4}^{s_{ij}} R_{j} \right)
\left( R_{j}^T g_{4}^{3} R_{j} \right)^1
=
R_{i}^{T} g_{4}^{s_{ij}+2} R_{j}
\\
&\left( R_{i}^T g_{4} R_{i} \right)^2
\left( R_{i}^{T} g_{4}^{s_{ij}} R_{j} \right)
\left( R_{j}^T g_{4} R_{j} \right)^2
=
\left( R_{i}^T g_{4}^{3} R_{i} \right)^2
\left( R_{i}^{T} g_{4}^{s_{ij}} R_{j} \right)
\left( R_{j}^T g_{4}^{3} R_{j} \right)^2
=
R_{i}^{T} g_{4}^{s_{ij}} R_{j}
\\
&\left( R_{i}^T g_{4} R_{i} \right)^3
\left( R_{i}^{T} g_{4}^{s_{ij}} R_{j} \right)
\left( R_{j}^T g_{4} R_{j} \right)^3
=
\left( R_{i}^T g_{4}^{3} R_{i} \right)^3
\left( R_{i}^{T} g_{4}^{s_{ij}} R_{j} \right)
\left( R_{j}^T g_{4}^{3} R_{j} \right)^3
=
R_{i}^{T} g_{4}^{s_{ij}+2} R_{j}.
\end{aligned}
\end{equation}
As such, if for example, $\Rii = R_{i}^T g_{4} R_{i}$ and $\Rjj = R_{j}^T g_{4} R_{j}$, or for example, $\Rii = R_{i}^T g_{4}^3 R_{i}$ and $\Rjj = R_{j}^T g_{4}^3 R_{j}$, then setting $\Riitilde = \Rii$, and $\mu_{i}=\mu_{j}=0$ in~\eqref{eq:eight_poss_c4_orig}, we get using~\eqref{eq:justification_c4} that
$$
\frac{1}{4} \sum_{s=0}^{3}
( J^{\mu_{i}} \Riitilde  J^{\mu_{i}} )^{s}
\Rij
( J^{\mu_{j}} \Rjj J^{\mu_{j}} )^{s}
=
\frac{1}{2} \left( R_{i}^{T} g_{4}^{s_{ij}} R_{j} + R_{i}^{T} g_{4}^{s_{ij}+2} R_{j}\right)
=
\frac{1}{2} \left( \Rij+\Rii\Rij\Rjj \right),
$$
which is (twice) the first expression in~\eqref{eq:eight_poss}. Alternatively, if for example $\Rii = R_{i}^T g_{4} R_{i}$ and $\Rjj = R_{j}^T g_{4}^{3} R_{j}$, or for example $\Rii = R_{i}^T g_{4}^{3} R_{i}$ and $\Rjj = R_{j}^T g_{4} R_{j}$, then by~\eqref{eq:justification_c4}, setting $\Riitilde = \Riitrans$ and $\mu_{i}=\mu_{j}=0$ in~\eqref{eq:eight_poss_c4_orig} yields (twice) the fifth expression in~\eqref{eq:eight_poss}. Other cases are similar and correspond to cases where either $\Rii$ has a spurious $J$, or $\Rjj$ has a spurious $J$, or both have a spurious $J$, in which case setting in~\eqref{eq:eight_poss_c4_orig} (respectively) either $\mu_{i}=1$, or $\mu_{j}=1$, or $\mu_{i}=\mu_{j}=1$  would yield each of the remaining expressions listed in~\eqref{eq:eight_poss}.

\end{appendices}

\section*{Acknowledgments}
This research was supported by the European Research Council
(ERC) under the European Union’s Horizon 2020 research and innovation programme (grant
agreement 723991 - CRYOMATH) and by Award Number R01GM090200 from the NIGMS.

\bibliographystyle{plain}
\bibliography{Cn}

\end{document}